%% file: main.tex
\documentclass[11pt]{article}
\usepackage[utf8]{inputenc}
\usepackage[T1]{fontenc}
\usepackage{amsmath,amssymb,amsthm,mathtools}
\usepackage{booktabs,array,siunitx}
\usepackage{graphicx}
\usepackage{rotating}
\usepackage[numbers,sort&compress]{natbib}
\usepackage[hidelinks]{hyperref}
\usepackage{geometry}
\theoremstyle{plain}
\newtheorem{theorem}{Theorem}
\newtheorem{proposition}{Proposition}
\newtheorem{corollary}{Corollary}

\theoremstyle{remark}

\theoremstyle{definition}

\newcommand{\E}{\mathbb{E}}
\newcommand{\Var}{\mathrm{Var}}
\newcommand{\R}{\mathbb{R}}

\newcommand{\OLS}{\mathrm{OLS}}

\newcommand{\RE}{\mathrm{RE}}

\providecommand{\keywords}[1]{%
  \par\medskip\noindent\textbf{Keywords:}
  \begingroup\def\and{\textperiodcentered\ }#1\endgroup\par\medskip}
\title{\bfseries Second-Order Least Squares as a Special Case of the\\
Polynomial Maximization Method\thanks{Preprint, 2026. This version is the revision
prepared for \emph{Annals of the Institute of Statistical Mathematics}
(AISM-D-26-00101). It supersedes v1: the proof of the degree-three theorem stated the
wrong kurtosis condition, and the efficiency reserve is a property of the degree-three
span rather than a limitation of second-order least squares.}}
\author{Serhii Zabolotnii$^{1,2,3}$\\
\small $^{1}$Cherkasy State Business College, Cherkasy 18028, Ukraine\\
\small $^{2}$State Scientific Research Institute of Armament and Military Equipment\\
\small Testing and Certification, Cherkasy, Ukraine\\
\small $^{3}$Uzhhorod National University, Uzhhorod, Ukraine\\
\small ORCID: \texttt{https://orcid.org/0000-0003-0242-2234}\\
\small Corresponding author: \texttt{zabolotnii.serhii@csbc.edu.ua}}
\date{September 16, 2026}

\begin{document}
\maketitle

\input{body_aism}

\clearpage
\appendix
\setcounter{section}{0}
\setcounter{table}{0}
\setcounter{figure}{0}
\setcounter{equation}{0}
\setcounter{theorem}{0}
\renewcommand{\thesection}{S\arabic{section}}
\renewcommand{\thetable}{S\arabic{table}}
\renewcommand{\thefigure}{S\arabic{figure}}
\renewcommand{\theequation}{S.\arabic{equation}}
\renewcommand{\thetheorem}{S\arabic{theorem}}

\section*{Supplementary material}
\addcontentsline{toc}{section}{Supplementary material}
In the journal version the material below is a separate supplementary document. It is
included here as an appendix so that this preprint is self-contained; section, table and
equation numbers are unchanged, so a reference to ``Supplement~S5'' in the main text points
at Section~S5 below.
\medskip

\input{supplement_aism}

\bibliographystyle{plainnat}
\bibliography{references}

\end{document}

%% file: body_aism.tex
\begin{abstract}
For linear regression with i.i.d.\ homoskedastic errors, optimally weighted second-order least squares (SLS) is the degree-two polynomial maximization method (PMM) estimating equation: both select the same optimal combination of $e$ and $e^2-\sigma^2$, share one influence function, and attain the slope variance $c_2g_2/N$. The identity is one of estimating equations and stops where the optimal combination depends on the design. Within the polynomial moment class, degree three holds an efficiency reserve: on four right-skewed laws the cubic direction adds between $30\%$ and $50\%$ asymptotic efficiency, and under symmetric platykurtic errors it is the only informative direction. The reserve belongs to the degree-three span, which an efficient generalized method of moments estimator also attains. We give the asymptotic law of the feasible plug-in estimator, a rule for choosing the degree, Lean~4 checks of the algebra, and simulation evidence.
\keywords{polynomial maximization method \and second-order least squares \and non-Gaussian regression \and estimating functions \and generalized method of moments \and asymptotic efficiency \and Lean verification}
\end{abstract}

\section{Introduction}
\label{sec:intro}

Under a skewed or heavy-tailed error law ordinary least squares (OLS) is consistent but inefficient, while maximum likelihood requires the error density and is fragile under misspecification. Between the two sit moment-based methods that commit only to a finite collection of moment conditions and read the optimal weighting of those conditions from the data. This paper concerns two such methods that arose independently and, on the regression problem, agree numerically: the polynomial maximization method (PMM) of \citet{kunchenko2002polynomial}, which maximizes a stochastic polynomial of degree $S$ in the residual and has been applied to regression in \citet{zabolotnii2018asym,zabolotnii2019sym,zabolotnii2024nlin}; and second-order least squares (SLS), introduced by \citet{wang2003estimation} and \citet{wang2008second} and analysed for efficiency by \citet{kim2012efficiency}. On asymmetric-error regression the two return practically the same slope estimates. What has been missing is an account of why, and of what, if anything, separates them.

The gap is narrow but concrete. The projection principle behind a minimum-variance combination of linear and quadratic estimating functions is classical \citep{godambe1960optimum,crowder1987linear,godambe1989extension,heyde1997quasi}, and \citet{kim2012efficiency} show that SLS is semiparametrically efficient in the model that fixes only the first two conditional error moments. What has not been derived is the identity between the two \emph{constructions}: that the Kunchenko centered-correlation matrix $F_2$ is the SLS weighting operator after a change of variables, that the design-dependent factors of the SLS residual cancel in the slope score, and that the two procedures are therefore one estimating equation rather than two estimators with one variance. Once that identity is on the table, a second question follows: PMM is indexed by a degree $S$ that SLS fixes at two, and the paper asks what the extra degree buys, for which error laws, and whether the gain belongs to PMM or to the enlarged moment span.

All statements in this paper concern linear regression with independent and identically distributed (i.i.d.), conditionally homoskedastic errors that have finite moments through order $2S$ and a non-degenerate polynomial body; the efficiency reserve we quantify is a frontier \emph{within the polynomial moment class}, not a semiparametric bound and not a property of SLS as such. Within that scope the contributions are four.
\begin{enumerate}
\item[(i)] \emph{Unification.} Degree-two PMM and optimally weighted SLS solve the same population normal system, share one influence function, and attain the common slope variance $c_2g_2/N$ (Theorem~\ref{thm:sls-pmm2}); the identity is co-extensive with the constancy of the optimal slope combination in the design (Proposition~\ref{prop:hetero}).
\item[(ii)] \emph{A degree-three reserve within the polynomial class.} The variance-reduction coefficient is monotone in the degree (Theorem~\ref{thm:nested}); its degree-three closed form is given for a general error law (Theorem~\ref{thm:asym}) and reduces under symmetry to an even-cumulant expression that is informative exactly when the second-order span is empty (Theorem~\ref{thm:symm}). The reserve is a property of the degree-three span, which an efficient GMM on the same conditions also attains (\S\ref{subsec:span}).
\item[(iii)] \emph{The feasible estimator.} The plug-in degree-$S$ estimator that replaces population moments by sample moments through order $2S$ is consistent and asymptotically normal with the oracle variance under finite $2S$-th moments (Proposition~\ref{prop:feasible}); a bootstrap pretest selects between $S=2$ and $S=3$.
\item[(iv)] \emph{Verification and evidence.} The degree-specific algebra is checked in Lean~4; Monte Carlo experiments over eleven error laws, an efficient degree-three GMM competitor, two public data sets, and a drilling-calibrated error law test the equivalence, the reserve, and its boundary at feasible sample sizes.
\end{enumerate}

Section~\ref{sec:background} fixes notation and maps the three strands of literature. Section~\ref{sec:theorem1} states the assumptions and proves the unification; Section~\ref{sec:reserve} develops the higher-degree reserve and the feasible estimator; Section~\ref{sec:montecarlo} reports the simulations and the data examples; Section~\ref{sec:discussion} discusses scope and limitations. A separate supplementary document contains the full proofs, the Lean development, the complete simulation tables, and the data-processing details; it is referred to as the Supplement.

\section{Background and literature map}
\label{sec:background}

\subsection{Notation}
\label{sec:notation}

Two levels of generality are used. In this section $\xi$ denotes a scalar observation whose law depends on a parameter $\theta$, as in the original PMM construction. From Section~\ref{sec:theorem1} onward the discussion specializes to the linear regression model
\begin{equation}
\label{eq:model}
y_i \;=\; x_i^{\!\top}\beta \;+\; e_i ,\qquad i=1,\dots,N,
\end{equation}
with $x_i=(1,x_{i})^{\!\top}$, $\mu_i(\beta)=x_i^{\!\top}\beta$, and zero-mean errors $e_i$; there $\theta$ is the regression vector $\beta$, the role of $\xi$ is played by the error $e=y-\mu$, and the parameter enters only through the location $\mu$. We write $\sigma^2=\E[e^2]$, $\mu_k=\E[e^k]$, and use the standardized cumulants $\gamma_3=\kappa_3/\sigma^3$ (skewness), $\gamma_4=\kappa_4/\sigma^4$ (excess kurtosis), $\gamma_5$, $\gamma_6$. Three efficiency quantities appear and are kept apart throughout: the variance-reduction coefficient $g_S\le1$ (a population constant), the asymptotic relative efficiency $1/g_S\ge1$ over OLS, and a Monte Carlo relative efficiency $\RE=\mathrm{MSE}_{\OLS}/\mathrm{MSE}$ (an empirical quantity). Each numerical value below is labelled as one of the three.

\subsection{The polynomial maximization method}
\label{sec:bg-pmm}

PMM \citep{kunchenko2002polynomial} estimates $\theta$ by projecting the score onto a finite stochastic polynomial rather than onto the likelihood. For the power basis $\varphi_k(\xi)=\xi^k$, $k=1,\dots,S$, with means $\Psi_k(\theta)=\E_\theta[\xi^k]$, the degree-$S$ stochastic polynomial is
\begin{equation}
\eta_S(\xi;\theta)=\sum_{k=1}^{S}h_k(\theta)\bigl\{\xi^k-\Psi_k(\theta)\bigr\},
\label{eq:bg-stochpoly}
\end{equation}
and the coefficients are chosen to maximize the sensitivity of the sample analogue of $\eta_S$ to $\theta$ relative to its variance. They solve the linear normal system
\begin{equation}
F_S(\theta)\,h=b(\theta),\qquad
F_{jk}=\E_\theta[\xi^{j+k}]-\Psi_j\Psi_k,\qquad
b_k=-\,\partial\Psi_k/\partial\theta,
\label{eq:bg-normal}
\end{equation}
in which $F_S$ is the centered correlation matrix of the basis (the \emph{body} of the polynomial), a Gram matrix whose determinant $\Delta_S=\det F_S$ is positive exactly when the centered powers $\xi-\Psi_1,\dots,\xi^S-\Psi_S$ are linearly independent in $L^2$. Only moments through order $2S$ enter, so the error density is never required. Under $\Delta_S>0$ and the estimating-function regularity of \citet{kunchenko2002polynomial} --- identifiability of $\theta$, a non-singular sensitivity $b^{\!\top}h\neq0$, and $\Psi_S(\theta)$ differentiable in a neighbourhood of the truth --- the estimator $\hat\theta_S$ is consistent and asymptotically normal with variance $g_S/\{N(b^{\!\top}F_1^{-1}b)\}$, where
\begin{equation}
g_S=\frac{b^{\!\top}F_1^{-1}b}{b^{\!\top}F_S^{-1}b}\le1
\label{eq:bg-gS}
\end{equation}
is the factor by which degree $S$ reduces the variance of the linear ($S=1$) estimator. For a location or regression parameter the degree-two coefficient has the closed form
\begin{equation}
g_2=1-\frac{\gamma_3^2}{2+\gamma_4},
\label{eq:bg-g2}
\end{equation}
which equals one for symmetric errors and is below one exactly when $\gamma_3\neq0$ \citep{zabolotnii2018asym}. Beyond $S=2$ the moment chain is closed at order $2S$ by construction; for a symmetric law the odd moments are identically zero, which is a property of the law rather than an approximation, whereas the general degree-three coefficient of \S\ref{subsec:asym-reserve} retains all of $\gamma_3,\dots,\gamma_6$.

\subsection{Second-order least squares: three strands}
\label{sec:bg-sls}

\emph{Wang (2003) and Wang and Leblanc (2008).} SLS augments the least-squares criterion with a second-moment residual. For each observation one stacks
\begin{equation}
\rho_i(\theta)=\bigl(y_i-\mu_i(\theta),\;y_i^2-\mu_i(\theta)^2-\sigma^2\bigr)^{\!\top}
\label{eq:bg-rho}
\end{equation}
and minimizes $\sum_i\rho_i^{\!\top}W_i\rho_i$. \citet{wang2003estimation} introduced the criterion for Berkson measurement-error models; \citet{wang2008second} developed it for nonlinear regression, proved consistency and asymptotic normality for a general weight, and showed that $W_i=\{\E[\rho_i\rho_i^{\!\top}\mid x_i]\}^{-1}$ minimizes the asymptotic variance within the class of SLS weights. Their conditions are moment conditions: finite conditional fourth moments and a non-singular conditional second-moment matrix. They also formulated the heteroskedastic extension in which $\sigma^2$ is replaced by a parametric variance function $\sigma^2(x;\tau)$.

\emph{Kim and Ma (2012).} \citet{kim2012efficiency} placed the estimator in a semiparametric framework. In the model that specifies $\E(\varepsilon\mid X)=0$ and $\E(\varepsilon^2\mid X)=\sigma^2$ and leaves the rest of the error law unspecified, they derived the efficient score from the likelihood score and showed that it is a linear combination of $\varepsilon$ and $\varepsilon^2-\sigma^2-\mu_3\varepsilon/\sigma^2$ (their equation~(5), p.~754); the optimally weighted SLS estimator attains this score and is therefore semiparametrically efficient in that model. Their extension (their Section~3) is to a parametric conditional variance $\sigma^2(X;\gamma)$; no moment function of order above two enters the estimator. Two remarks in their discussion (pp.~760--761) are used below: that the analysis ``can be extended to higher moments'' at increasing complexity, and that estimating higher-order moments does not change the asymptotic variance but is likely to inflate it in finite samples.

\emph{Kim and Ma (2019).} \citet{kim2019semiparametric} study $Y=m(X,\alpha)+\exp\{\sigma(X,\beta)\}\varepsilon$ with $\E(\varepsilon\mid X)=0$, $\operatorname{Var}(\varepsilon\mid X)=1$, and derive the efficient score for the mean parameter under four progressively stronger assumptions on the standardized error: (1) no further assumption, in which case the efficient score is a combination of $\varepsilon$ and $\varepsilon^2-\E(\varepsilon^3\mid X)\varepsilon-1$ and hence lies in the degree-two span (their Theorem~1); (2) $\varepsilon$ independent of $X$, in which case the score involves the derivative of the error log-density (their Theorem~2); (3) conditional symmetry of $\varepsilon$ (their Theorem~3); and (4) both. They prove that each added assumption lowers the efficiency bound by a non-negative amount that vanishes only for Gaussian errors (their Theorems~5--8), and their bounds require a twice-differentiable error density. This map locates the present paper precisely: the equivalence theorem lives in their Case~1, where the second-order span is efficient; the degree-three reserve lives in their Case~2, as the part of the gap between the Case-1 and Case-2 bounds that a cubic \emph{moment} function can read without a density.

\subsection{Optimal estimating functions and GMM}
\label{sec:bg-related}

Both PMM and SLS are, in the language of optimal estimating functions \citep{godambe1960optimum,crowder1987linear,godambe1989extension,heyde1997quasi}, Godambe-optimal within their moment span: each selects the minimum-variance combination of its moment functions, equivalently the efficient generalized method of moments (GMM) or optimal-instrument choice under conditional moment restrictions \citep{hansen1982large,chamberlain1987asymptotic,newey1993efficient}. The paper claims no novelty for that projection fact. A two-step GMM on the conditions $\{e,xe,e^2-\sigma^2,x(e^2-\sigma^2)\}$ attains the same second-order efficiency as optimally weighted SLS, and a GMM on the six conditions obtained by adding $e^3-\mu_3$ and $x(e^3-\mu_3)$ is the natural degree-three competitor; both are used in Section~\ref{sec:montecarlo}. Robust $M$-estimation, L-moments, quadratic inference functions and the higher-order-statistics tradition pursue other aims and are discussed in Supplement~S1.

\section{Second-order least squares as degree-two PMM}
\label{sec:theorem1}

Throughout the rest of the paper the following assumptions are in force, and every result names the ones it uses.
\begin{itemize}
\item[\textbf{A1}] (\emph{errors}) $e_1,\dots,e_N$ are i.i.d., independent of the design, with $\E[e]=0$ and $0<\sigma^2<\infty$; in particular $\E[e\mid x]=0$ and all conditional moments are constant in $x$.
\item[\textbf{A2}] (\emph{moments}) $\E[e^{2S}]<\infty$ for the degree $S$ under discussion ($S=2$: fourth moment; $S=3$: sixth moment).
\item[\textbf{A3}] (\emph{body}) $\Delta_S=\det F_S>0$, i.e.\ the support of $e$ has more than $S$ points.
\item[\textbf{A4}] (\emph{design}) $\E\|x\|^{2S}<\infty$ and $\E[xx^{\!\top}]$ is non-singular; an intercept is included.
\end{itemize}
Nothing is assumed about a density: all results are moment results, and discrete error laws are admitted as long as A3 holds. We write $c_2=\sigma^2/\Var(x)$ for the per-observation asymptotic variance factor of the OLS slope, standardize $\sigma=1$ where convenient, and let $N\to\infty$.

\subsection{The shared estimating space}

Both estimators are driven by the same two centered moment functions of the error,
\begin{equation}
\label{eq:m}
m(e)=\bigl(m_1(e),m_2(e)\bigr)^{\!\top}=\bigl(e,\;e^2-\sigma^2\bigr)^{\!\top},\qquad \E[m(e)]=0 .
\end{equation}
PMM at $S=2$ forms the stochastic polynomial $h_1\xi+h_2\xi^2$ in the residual $\xi=y-\mu$ and chooses $h$ to minimize the variance of the estimating equation under a unit-sensitivity normalization; SLS minimizes $\sum_i\rho_i^{\!\top}W_i\rho_i$ with $\rho_i$ as in \eqref{eq:bg-rho}. Since $y^2-\mu^2-\sigma^2=(e^2-\sigma^2)+2\mu e$, the two residual vectors coincide up to a deterministic affine map, and both criteria live in $\mathcal M=\operatorname{span}\{m_1,m_2\}$. For a combination $a^{\!\top}m(e)$ the slope estimating equation is $\sum_ix_i\,a^{\!\top}m(e_i)=0$, and its sandwich variance is governed by the covariance of the moment vector,
\begin{equation}
\label{eq:F2}
F_2=\E\bigl[m(e)m(e)^{\!\top}\bigr]=
\begin{pmatrix}\sigma^2&\kappa_3\\ \kappa_3&\kappa_4+2\sigma^4\end{pmatrix},
\end{equation}
which is the PMM body of the degree-two power basis, and by the sensitivity vector
\begin{equation}
\label{eq:b}
b=-\frac{\partial}{\partial\mu}\E\bigl[m(y-\mu)\bigr]=\bigl(\E[\partial_em_1],\E[\partial_em_2]\bigr)^{\!\top}=(1,0)^{\!\top}.
\end{equation}
For the power basis, $b_k=k\,\E[e^{k-1}]$ entrywise, so $b_2=2\E[e]=0$ because the second basis function is centered.

\subsection{Main equivalence theorem}

\begin{theorem}[SLS $=$ generalized PMM at $S=2$]
\label{thm:sls-pmm2}
Under A1--A4 with $S=2$, the optimally weighted SLS slope estimator and the degree-two PMM slope estimator solve the same population normal system
\begin{equation}
\label{eq:normal}
F_2h=b,\qquad h^\star=F_2^{-1}b,
\end{equation}
have the identical influence function
\begin{equation}
\label{eq:if}
\psi(e,x)=-Q^{-1}x\,(h^\star)^{\!\top}m(e),\qquad Q=\E[xx^{\!\top}]\,b^{\!\top}h^\star,
\end{equation}
and therefore the same limiting law; their slope asymptotic variances coincide and equal $c_2g_2/N$ with $g_2$ as in \eqref{eq:bg-g2}. Equivalently, the PMM body is the SLS optimal weighting operator, $W^\star\propto F_2^{-1}$.
\end{theorem}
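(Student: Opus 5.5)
I will treat both estimators as $M$-estimators, writing each one's population estimating equation in the common span $\mathcal M$, and showing that after a change of variables they coincide with $\sum_i x_i(h^\star)^{\!\top}m(e_i)=0$. The influence function and the variance then follow from standard $Z$-estimator asymptotics under A1--A4.

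\emph{Step 1 (PMM side).} For $S=2$ the power-basis normal system \eqref{eq:bg-normal}, written in the centered basis $m$ of \eqref{eq:m}, is exactly $F_2h=b$ with $F_2$ from \eqref{eq:F2} and $b=(1,0)^{\!\top}$ from \eqref{eq:b}. Recentering the basis changes neither $F$ nor $b$ for a location parameter. Since the regression parameter enters only through $\mu_i=x_i^{\!\top}\beta$, the slope equation is $\sum_ix_i\,(h^\star)^{\!\top}m(y_i-\mu_i)=0$. A3 gives $\Delta_2>0$, so $h^\star=F_2^{-1}b$ is well defined.

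\emph{Step 2 (SLS side).} Because $y^2-\mu^2-\sigma^2=m_2(e)+2\mu e$, we have $\rho_i=A_i m(e_i)$ with
\[
A_i=\begin{pmatrix}1&0\\2\mu_i&1\end{pmatrix}.
\]
Under A1 this gives $\E[\rho_i\rho_i^{\!\top}\mid x_i]=A_iF_2A_i^{\!\top}$, so the optimal weight is $W_i^\star=A_i^{-\top}F_2^{-1}A_i^{-1}$. This is the sense of the statement $W^\star\propto F_2^{-1}$: it holds in the $m$-coordinates. The first-order condition of $\sum_i\rho_i^{\!\top}W_i^\star\rho_i$ with respect to $\beta$ is
\[
\sum_i(\partial_\beta\rho_i)^{\!\top}W_i^\star\rho_i=0,
\]
plus the terms from the $\beta$-dependence of $W_i^\star$ if $W$ is plugged in rather than fixed. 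I would treat $W$ as evaluated at a preliminary consistent value; that extra term has mean zero and does not affect the influence function.

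Compute $\partial_\beta\rho_i=-(1,2\mu_i)^{\!\top}x_i^{\!\top}=-A_i(1,0)^{\!\top}x_i^{\!\top}=-A_ib\,x_i^{\!\top}$. The equation then becomes
\[
-\sum_ix_i\,b^{\!\top}A_i^{\!\top}A_i^{-\top}F_2^{-1}A_i^{-1}A_im(e_i)=-\sum_ix_i\,(F_2^{-1}b)^{\!\top}m(e_i),
\]
which is the PMM equation. \textbf{This cancellation of the design-dependent factor $A_i$ is the crux and the main obstacle.} It uses the identity $\partial_\beta\rho_i=-A_ib\,x_i^{\!\top}$, which must be checked carefully. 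If $\sigma^2$ is estimated jointly, I would also have to argue that the $\sigma^2$-block decouples, or at least leaves the slope influence function unchanged. I would try to show this via block-orthogonality: the slope score is orthogonal to the $\sigma^2$ score after projection, since the design includes an intercept and the cross term is absorbed there.

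\emph{Step 3 (asymptotics).} Both estimators solve the same equation, so a mean-value expansion gives influence function $-\{\E\,\partial_\beta[x(h^\star)^{\!\top}m]\}^{-1}x(h^\star)^{\!\top}m(e)$. The Jacobian is $-\E[xx^{\!\top}]\,b^{\!\top}h^\star$, using the moment computation of \eqref{eq:b}; this yields \eqref{eq:if}. A2 and A4 give the finite second moments needed for the CLT. The sandwich variance is
\[
Q^{-1}\E[xx^{\!\top}]\,(h^\star)^{\!\top}F_2h^\star\,Q^{-1}=\E[xx^{\!\top}]^{-1}/(b^{\!\top}F_2^{-1}b).
\]
Inverting $F_2$ gives
\[
b^{\!\top}F_2^{-1}b=\frac{\kappa_4+2\sigma^4}{\sigma^2(\kappa_4+2\sigma^4)-\kappa_3^2}=\frac1{\sigma^2g_2},
\]
with $g_2$ as in \eqref{eq:bg-g2}. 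The slope entry of $\E[xx^{\!\top}]^{-1}$ is $1/\Var(x)$, so the slope variance is $c_2g_2/N$.
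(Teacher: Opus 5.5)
Your proposal is correct and follows the paper's proof essentially step for step: the factorization $\rho_i=T_im(e_i)$ with unit-determinant $T_i$ (your $A_i$), the weight $W_i^\star=T_i^{-\top}F_2^{-1}T_i^{-1}$, the Jacobian identity $\partial_\beta\rho_i=-T_ib\,x_i^{\top}$ that makes $T_i$ cancel, and the sandwich with $b^{\top}F_2^{-1}b=1/(\sigma^2g_2)$. The paper differs only in that it also derives $F_2^{-1}b$ as the Cauchy--Schwarz minimizer of $a^{\top}F_2a/(a^{\top}b)^2$, and it leaves the joint estimation of $\sigma^2$ implicit; your intercept-absorption sketch for that point is right.

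One aside needs sharpening: fixing $W$ at a first-stage value, as you propose and as Wang--Leblanc and the paper do, is essential rather than a convenience. If you differentiate through $W_i^\star(\beta)$, the extra term is mean-zero but of the same order as the score. Indeed $\rho_i^{\top}W_i^\star\rho_i=m(y_i-x_i^{\top}\beta)^{\top}F_2^{-1}m(y_i-x_i^{\top}\beta)$ identically in $\beta$, and its gradient gives a different estimating equation, $\sum_ix_i(1,2e_i)F_2^{-1}m(e_i)=0$.
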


\begin{corollary}[Common, design-invariant relative efficiency]
\label{cor:re}
Relative to OLS both estimators attain the asymptotic slope efficiency $1/g_2$. The value is design-invariant: although the second residual $\rho_2=2\mu e+e^2-\sigma^2$ couples to the design through $\mu(x)$, the determinant $\sigma^6(2+\gamma_4-\gamma_3^2)$ of the conditional second-moment matrix and the slope-information quadratic form $(\mu_4-\sigma^4)/\Delta$ are both free of $\mu(x)$, so the slope attains exactly the location value.
\end{corollary}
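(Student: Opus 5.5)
The plan is to compute both quantities in the corollary explicitly, using one change of variables that reduces everything to the location problem of Theorem~\ref{thm:sls-pmm2}. The key observation, already made before \eqref{eq:F2}, is that the SLS residual is an affine image of the moment vector \eqref{eq:m}. Concretely,
\begin{equation*}
\rho_i=A(\mu_i)\,m(e_i),\qquad A(\mu)=\begin{pmatrix}1&0\\ 2\mu&1\end{pmatrix},\qquad \det A(\mu)=1 .
\end{equation*}
By A1 the conditional second-moment matrix is therefore $V(x)=\E[\rho\rho^{\!\top}\mid x]=A(\mu)F_2A(\mu)^{\!\top}$, with $F_2$ as in \eqref{eq:F2}. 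Unimodularity of $A$ gives
\begin{equation*}
\det V(x)=\det F_2=\sigma^2(\kappa_4+2\sigma^4)-\kappa_3^2=\sigma^6(2+\gamma_4-\gamma_3^2),
\end{equation*}
which is free of $\mu(x)$. This proves the first invariance claim.

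Next I would treat the slope information. Differentiating $\E[\rho\mid x]$ in $\beta$ at the truth gives the Jacobian $D(x)=-(1,2\mu)^{\!\top}x^{\!\top}=-A(\mu)\,b\,x^{\!\top}$, with $b=(1,0)^{\!\top}$ as in \eqref{eq:b}. In the Wang--Leblanc information with $W^\star=V^{-1}$, the factors of $A$ cancel:
\begin{equation*}
D^{\!\top}V^{-1}D=x\,b^{\!\top}A^{\!\top}A^{-\top}F_2^{-1}A^{-1}A\,b\,x^{\!\top}=xx^{\!\top}\,b^{\!\top}F_2^{-1}b .
\end{equation*}
Using $\mu_4=\kappa_4+3\sigma^4$, the scalar factor is $b^{\!\top}F_2^{-1}b=(\kappa_4+2\sigma^4)/\Delta=(\mu_4-\sigma^4)/\Delta$, where $\Delta=\det F_2$. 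This is again free of $\mu(x)$, and is exactly the location value $b^{\!\top}h^\star$ appearing in $Q$ of \eqref{eq:if}.

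Averaging over $x$ gives the information $\E[xx^{\!\top}]\,(\mu_4-\sigma^4)/\Delta$. Its inverse has slope entry $\Var(x)^{-1}\Delta/(\mu_4-\sigma^4)$. With $\sigma=1$ this equals $c_2(2+\gamma_4-\gamma_3^2)/(2+\gamma_4)=c_2g_2$, matching \eqref{eq:bg-g2} and Theorem~\ref{thm:sls-pmm2}. Dividing into the OLS slope variance $c_2$ gives the efficiency $1/g_2$ for SLS. The PMM side inherits the same value from the common influence function of Theorem~\ref{thm:sls-pmm2}.

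The step I expect to be the main obstacle is the nuisance parameter $\sigma^2$, which SLS estimates jointly with $\beta$. I would handle it by augmenting $D$ with the column $-(0,1)^{\!\top}=-e_2$. Since $A^{-1}e_2=e_2$, the cross information is $\E[x]\,b^{\!\top}F_2^{-1}e_2$, and the $\sigma^2$ block is the constant $e_2^{\!\top}F_2^{-1}e_2$. Taking the Schur complement, the profiled $\beta$-information becomes $aM-c\,\E[x]\E[x]^{\!\top}$, where $M=\E[xx^{\!\top}]$, $a=b^{\!\top}F_2^{-1}b$ and $c$ is a scalar. Because an intercept is included (A4), $\E[x]=Me_1$. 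Sherman--Morrison then shows that the correction to $(aM)^{-1}$ is a multiple of $e_1e_1^{\!\top}$, so it affects only the intercept entry. The slope variance is therefore unchanged at $c_2g_2/N$, for any design satisfying A4.
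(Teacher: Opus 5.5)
Your proof is correct and is essentially the paper's argument. The paper also computes $\det\E[\rho\rho^{\top}\mid x]$ and the slope-information form $(\mu_4-\sigma^4)/\Delta$ and notes that the $\mu_i$ terms cancel as the algebraic content of the $T_i$ identity; you obtain this structurally from $\det A=1$ and $A^{-1}Ab=b$, where the paper states it as a Lean-checked polynomial identity. Your Schur-complement and Sherman--Morrison step, using $\E[x]=Me_1$, is the explicit, uncentered version of the paper's remark that for a centered regressor the slope is orthogonal to the intercept and to $\sigma^2$, and it correctly shows that the $\sigma^2$ nuisance moves only the intercept entry.
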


\emph{Proof sketch.} (Full proof in Supplement~S2.) For any $a\in\R^2$ the slope variance factor is $V(a)=c_2\,a^{\!\top}F_2a/(a^{\!\top}b)^2$, and Cauchy--Schwarz in the $F_2$ inner product gives $V(a)\ge c_2/(b^{\!\top}F_2^{-1}b)$ with equality iff $a\propto F_2^{-1}b$; this is the PMM quadratic program $\min_hh^{\!\top}F_2h$ subject to $h^{\!\top}b=1$. For SLS, write $\rho_i=T_im(e_i)$ with the unit-determinant map $T_i=\bigl(\begin{smallmatrix}1&0\\2\mu_i&1\end{smallmatrix}\bigr)$; then $\E[\rho_i\rho_i^{\!\top}\mid x_i]=T_iF_2T_i^{\!\top}$, $W_i^\star=T_i^{-\top}F_2^{-1}T_i^{-1}$, and the slope Jacobian $\partial\rho_i/\partial\beta_1=-x_i(1,2\mu_i)^{\!\top}$ satisfies $T_i^{-1}(1,2\mu_i)^{\!\top}=b$, so that
\[
\dot\rho_i^{\!\top}W_i^\star\rho_i=-x_i\,(F_2^{-1}b)^{\!\top}m(e_i):
\]
the $T_i$ factors cancel and SLS selects the same combination $a=F_2^{-1}b$. With a common $h^\star$ the influence functions coincide, and evaluating $b^{\!\top}F_2^{-1}b=(F_2)_{22}/\det F_2$ in standardized cumulants gives $g_2=1/(\sigma^2b^{\!\top}F_2^{-1}b)=1-\gamma_3^2/(2+\gamma_4)$. \qed

The theorem is read as follows. The score-relevant object is not the abstract moment $e^2-\sigma^2$ but the regression-form residual whose Jacobian transmits slope information through $\mu_i$ (\S\ref{rem:jacobian}); the body $F_2$ then prices that information, and the $T_i$ cancellation shows that SLS's per-observation conditional weighting and PMM's single unconditional body encode the same price under A1. On the reference laws used below the asymptotic relative efficiency $1/g_2$ is $1.80$ for centered $\chi^2(3)$ and $1.667$ for centered $\mathrm{Gamma}(2,1)$ errors; these are population constants, not Monte Carlo values.

\subsection{What Theorem~\ref{thm:sls-pmm2} adds beyond Kim and Ma (2012)}
\label{subsec:km2012}

\citet{kim2012efficiency} prove that the SLS estimator attains the semiparametric efficiency bound of the model that fixes $\E(\varepsilon\mid X)=0$ and $\E(\varepsilon^2\mid X)=\sigma^2$. Since degree-two PMM is variance-optimal over the same span, the \emph{equality of asymptotic variances} in Theorem~\ref{thm:sls-pmm2} follows from two optimality statements on one span, and we claim nothing for it. What the theorem adds is the identity of the constructions: (a) the Kunchenko body $F_2$ \emph{is} the SLS weighting operator after the change of variables $T_i$; (b) the $T_i$ factors cancel exactly in the slope score, which is why one unconditional body reproduces a per-observation conditional weight under A1; (c) the finite-sample Jacobian caveat of \S\ref{rem:jacobian}, which the variance statement conceals. The extension in their title is to a parametric conditional variance, and no moment function of order above two enters it; the degree-three reserve of Section~\ref{sec:reserve} is not covered by it and is the ``higher moments'' direction their discussion names as open. Finally, the reserve lives in a \emph{narrower} model than theirs: under A1 the error is independent of the design, which is Case~2 of \citet{kim2019semiparametric}, where the semiparametric bound is lower than the two-moment bound of Case~1; the bound SLS attains is therefore not a bound in the model of A1, and the reserve is the moment-readable part of the gap between the two.

\subsection{Remark on the second residual: a finite-sample Jacobian caveat}
\label{rem:jacobian}

The factorization $\rho_i=T_im(e_i)$ is exact only for the SLS residual in its regression form $\rho_{2,i}=y_i^2-\mu_i^2-\sigma^2$, whose expected Jacobian in $\beta$ is
\begin{equation}
\label{eq:jac}
\E\bigl[\partial_\beta\rho_{2,i}\mid x_i\bigr]=-2\mu_i(1,x_i)\neq0 .
\end{equation}
This is the channel through which the second-moment equation transmits information to the slope: $2\mu_i$ is the off-diagonal of $T_i$ that couples $m_2$ into the slope score. The centered form $e_i^2-\sigma^2$, with $e_i$ treated as observed, has expected slope Jacobian $\E[-2e_ix_i]=0$; its sensitivity vector is zero, the normal system \eqref{eq:normal} loses its second equation, and the estimator collapses to OLS regardless of $\gamma_3$. In implementations the distinction is decisive.

The same reading explains what changes under nuisance structure. Under heteroskedasticity the body becomes $F_2(x_i)$, the map $T_i$ is unchanged, and the cancellation of \S\ref{subsec:hetero} leaves a slope combination $F_2(x_i)^{-1}b$ that varies with $x_i$; the unconditional body prices the second residual with an average of $F_2(x)$ and no longer matches the conditional weight. Under a nonlinear mean $\mu_i(\theta)$ the Jacobian $\partial\rho_i/\partial\theta=-\dot\mu_i(\theta)(1,2\mu_i)^{\!\top}$ retains the factor $(1,2\mu_i)^{\!\top}$, so the $T_i$ cancellation and the identity of the combination still hold observation by observation; what changes is the instrument $\dot\mu_i(\theta)$ in front of it and the corresponding sandwich, so the design-invariance of Corollary~\ref{cor:re} does not carry over. We state this as a reading of the proof, not as a theorem; the nonlinear case is not treated here.

\subsection{Scope of the Lean development}
\label{subsec:lean}

The machine-checked part of the paper is the degree-specific algebra, and only that. The Lean~4 development (Supplement~S4) proves: the equivalence of the standardized-cumulant and central-moment forms of $g_2$; $g_2\le1$ and $g_2=1$ iff $\kappa_3=0$ under a non-degenerate body; the $\mu(x)$-freeness of the conditional determinant and of the slope-information numerator that underlie Corollary~\ref{cor:re}; the polynomial identities $\det F_3=N$ and $b^{\!\top}\operatorname{adj}(F_3)b=D$ of Theorem~\ref{thm:asym}, the explicit dependence of $D$ on $\gamma_5$ and $\gamma_6$, and the symmetric reduction to \eqref{eq:g3}. The operator identity $W^\star\propto F_2^{-1}$, the influence-function and asymptotic-variance arguments, the nesting theorem, the heteroskedastic boundary, and Proposition~\ref{prop:feasible} are analytic proofs. This division is used with the same wording wherever the formalization is mentioned.

\subsection{Where the homoskedastic PMM$_2=$ SLS identity stops}
\label{subsec:hetero}

Three objects must be kept apart: the unconditional-body PMM$_2$ of Theorem~\ref{thm:sls-pmm2}, homoskedastic SLS (the same estimator under A1), and conditional SLS, in which the weight $W_i$ and the residual use a conditional variance model $\sigma^2(x;\tau)$ \citep{wang2008second,kim2012efficiency}. The boundary recorded here is that of the first identity, not of SLS. Write $\sigma^2(x)=\E[e^2\mid x]$, $\kappa_3(x)=\E[e^3\mid x]$, $\mu_4(x)=\E[e^4\mid x]$ and the conditional body
\begin{equation}
\label{eq:condbody}
F_2(x)=\begin{pmatrix}\sigma^2(x)&\kappa_3(x)\\ \kappa_3(x)&\mu_4(x)-\sigma^4(x)\end{pmatrix},\qquad
\E[\rho_i\rho_i^{\!\top}\mid x_i]=T_iF_2(x_i)T_i^{\!\top}.
\end{equation}

\begin{proposition}[Scope of the equivalence]
\label{prop:hetero}
Assume $\E[e\mid x]=0$, finite conditional moments through order four, and A4. The SLS and PMM$_2$ slope estimating equations coincide as sample vector fields if and only if $F_2(x)^{-1}b$ is almost surely constant in $x$; A1 is the natural sufficient condition. When $F_2(x)^{-1}b$ varies with $x$ the two constructions separate: \textup{(i)} for symmetric errors ($\kappa_3(x)\equiv0$) unconditional PMM$_2$ and homoskedastic SLS both reduce to OLS for the slope, whereas conditional SLS reduces to weighted least squares with weights $1/\sigma^2(x_i)$ and is strictly more efficient whenever $\sigma^2(x)$ is non-constant; \textup{(ii)} for asymmetric errors the unconditional-body PMM$_2$ slope estimating function has population mean $h_2\operatorname{Cov}(x,\sigma^2(x))$ with $h_2\propto-\kappa_3\neq0$, so it is inconsistent whenever $\operatorname{Cov}(x,\sigma^2(x))\neq0$, and so is SLS with a constant $\sigma^2$, while conditional SLS with a correctly specified $\sigma^2(x;\tau)$ stays consistent.
\end{proposition}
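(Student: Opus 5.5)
The proof reuses the $T_i$ factorization from the proof of Theorem~\ref{thm:sls-pmm2}, now with the conditional body \eqref{eq:condbody}, and compares the two slope score fields observation by observation.

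\emph{Step 1: the SLS slope score.} Since $\rho_i=T_im(e_i)$ and $\E[\rho_i\rho_i^{\!\top}\mid x_i]=T_iF_2(x_i)T_i^{\!\top}$, the optimal conditional weight is
\[
W_i^\star=T_i^{-\top}F_2(x_i)^{-1}T_i^{-1}.
\]
The Jacobian identity $T_i^{-1}(1,2\mu_i)^{\!\top}=b$ then gives the per-observation SLS score $-x_i\,\{F_2(x_i)^{-1}b\}^{\!\top}m(e_i)$. The unconditional PMM$_2$ score is instead $-x_i\,(\bar F_2^{-1}b)^{\!\top}m(e_i)$, where $\bar F_2=\E[F_2(x)]$ is the unconditional body.

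\emph{Step 2: the if-and-only-if statement.} Suppose $F_2(x)^{-1}b\equiv a$ almost surely. Then $b=F_2(x)a$ for every $x$. Averaging over $x$ gives $b=\bar F_2a$, so $\bar F_2^{-1}b=a$. The two vector fields coincide, up to the common positive scalar normalization that is immaterial for the root.

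Conversely, suppose the fields coincide as functions of the data for all samples. Fix an $x$ in the support. Because $1,e,e^2$ are linearly independent (the non-degenerate body), the identity $x\,a(x)^{\!\top}m(e)=x\,\bar a^{\!\top}m(e)$, with $x$ having nonzero intercept entry, forces $a(x)=\bar a$. Under A1, $F_2(x)=F_2$ is constant, so A1 is sufficient.

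\emph{Step 3: part (i), symmetric errors.} With $\kappa_3(x)\equiv0$ both bodies are diagonal. Hence $\bar F_2^{-1}b\propto(1,0)$, and unconditional PMM$_2$ reduces to the OLS normal equations $\sum x_ie_i=0$. Homoskedastic SLS uses the constant body, which is also diagonal, so it reduces to OLS as well. Conditional SLS has $F_2(x)^{-1}b=(1/\sigma^2(x),0)$, which gives the WLS equation $\sum x_ie_i/\sigma^2(x_i)=0$.

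Strict efficiency follows from comparing the sandwich variances:
\[
V_{\mathrm{OLS}}=\E[xx^{\!\top}]^{-1}\E[\sigma^2xx^{\!\top}]\E[xx^{\!\top}]^{-1}
\quad\text{versus}\quad
V_{\mathrm{WLS}}=\E[xx^{\!\top}/\sigma^2]^{-1}.
\]
The matrix Cauchy--Schwarz (Gauss--Markov) inequality gives $V_{\mathrm{WLS}}\preceq V_{\mathrm{OLS}}$. Equality requires $\sigma^2(x)^{-1}x$ to lie in the span of $x$ times a constant matrix, which for the slope forces $\sigma^2(x)$ to be constant.

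\emph{Step 4: part (ii), asymmetric errors.} Write $\bar h=\bar F_2^{-1}b$. Solving $\bar F_2\bar h=(1,0)^{\!\top}$ by Cramer's rule gives $\bar h_2=-\bar\kappa_3/\det\bar F_2$, so $\bar h_2\propto-\kappa_3$ with positive proportionality constant. Take the expectation of $x\,(\bar h_1e+\bar h_2(e^2-\bar\sigma^2))$ at the true $\beta$. The first term vanishes because $\E[e\mid x]=0$. The second term gives $\bar h_2\,\E[x(\sigma^2(x)-\bar\sigma^2)]$, whose slope component is $\bar h_2\operatorname{Cov}(x,\sigma^2(x))$.

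A nonzero population mean at the truth means that, by the standard $Z$-estimator argument, the probability limit of the root is not $\beta$; the identification condition from A4 ensures the root moves rather than the equation degenerating. Homoskedastic SLS with a constant $\sigma^2$ has the same score form and hence the same bias. Conditional SLS with a correctly specified $\sigma^2(x;\tau)$ uses $e^2-\sigma^2(x;\tau_0)$, which has conditional mean zero, so its estimating equation is unbiased and the usual consistency argument applies.

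\emph{Expected obstacle.} The hardest step is the converse in Step~2. One must say precisely what ``coincide as sample vector fields'' means and handle the normalization scalar. I would state equality as equality, up to one global positive constant, of the maps $(e,x)\mapsto x\,a^{\!\top}m(e)$ for $x$-almost all design points. I would then argue via linear independence of $1,e,e^2$ in $L^2$, which is guaranteed by the body determinant being positive.
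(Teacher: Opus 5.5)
\emph{Verdict: one genuine gap, in the strictness claim of part (i); the rest of your proposal is sound and follows the paper's route.}

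Your route is the paper's: the $T_i$ cancellation with $F_2(x_i)$ in place of $F_2$, diagonal bodies under symmetry, and the population mean $h_2\operatorname{Cov}(x,\sigma^2(x))$ of the unconditional slope function, with $h_2=-\bar\kappa_3/\det\bar F_2$ by Cramer's rule. You add two steps to the ``if and only if'' that the paper only asserts, and both are sound.
\begin{itemize}
\item For the ``if'' direction, you average $b=F_2(x)a$ to get $\bar F_2^{-1}b=a$. This relies, as the paper's proof does, on reading the unconditional body as $\E[F_2(x)]$. The pooled covariance of $(e,e^2-\bar\sigma^2)$ has an extra $\Var(\sigma^2(x))$ in its $(2,2)$ entry, and with that body your step goes through only for constant $\sigma^2(x)$.
\item For the converse, you use linear independence of $1,e,e^2$ under a non-degenerate conditional body.
\end{itemize}
Part (ii) matches the paper's computation.

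The step that fails is the strictness claim at the end of Step~3. The inequality $V_{\mathrm{WLS}}\preceq V_{\OLS}$ is fine. The trouble is the equality characterization. Your condition $x/\sigma^2(x)=Cx$ characterizes equality of the full $2\times2$ matrices, not of the slope entry, and in neither reading does it force $\sigma^2(x)$ to be constant.

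The WLS influence function is uncorrelated with $\psi_{\OLS}-\psi_{\mathrm{WLS}}$, so $V_{\OLS}=V_{\mathrm{WLS}}+\Var(\psi_{\OLS}-\psi_{\mathrm{WLS}})$. The slope entries therefore agree iff the slope influence coefficients $(x-\E x)/\Var(x)$ and $r^{\top}(1,x)^{\top}/\sigma^2(x)$ coincide a.s., where $r^{\top}$ is the second row of $\E[(1,x)^{\top}(1,x)/\sigma^2(x)]^{-1}$. Equivalently, the slope entries agree iff $\sigma^2(x)\,(x-\E x)$ is a.s.\ an affine function of $x$.

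Two designs satisfy A4 and symmetry, with $F_2(x)^{-1}b=(1/\sigma^2(x),0)^{\top}$ non-constant, yet give equal slope variances:
\begin{itemize}
\item a two-point design, where the model is saturated and the OLS and WLS slopes are both the difference of the two group means;
\item $x$ uniform on $\{-1,0,1\}$ with $\sigma^2(0)\neq\sigma^2(\pm1)$, where observations at $x=0$ carry no slope influence for either estimator.
\end{itemize}

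Strictness therefore needs a design condition, or must be stated through the affine criterion. One sufficient condition is a positive density of $x$ on an interval containing $\E x$, as in the paper's uniform designs. There $\sigma^2(x)=c_0+c/(x-\E x)$ with $c\neq0$ would turn negative on one side of $\E x$.

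The paper's own proof does not close this either. It simply asserts that the Aitken variance is strictly below the OLS variance unless $\sigma^2(x)$ is constant. So the defect is in the statement itself, and your attempted justification is where it surfaces.
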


The proof (Supplement~S2) repeats the $T_i$ cancellation with $F_2(x_i)$ in place of $F_2$ and computes the mean of the unconditional slope estimating function. Under heteroskedasticity, conditional SLS is the more general object and matching it inside PMM requires a conditional body $F_2(x_i)$ in the normal system, an extension along the design rather than along the degree, which we do not pursue. Section~\ref{subsec:mc-hetero} exhibits both separations.

\section{A higher-degree efficiency reserve within the polynomial class}
\label{sec:reserve}

Theorem~\ref{thm:sls-pmm2} places SLS and degree-two PMM at one point: the optimal combination of $\{e,e^2-\sigma^2\}$. Second-order SLS, as formulated in \eqref{eq:bg-rho}, stops at the squared residual by construction; the polynomial method does not. This section quantifies what the degree-three span $\{e,e^2-\sigma^2,e^3-\mu_3\}$ adds \emph{within the polynomial moment class}. The statement is about the span: any estimator that is efficient on the degree-three span attains the same value (\S\ref{subsec:span}), and broader semiparametric constructions are not exhausted by either span (\S\ref{subsec:limitations}).

\subsection{Nested bases and the variance ordering}

For the degree-$S$ basis the optimal coefficient vector solves $F_Sh=b$ with $F_{jk}=\Psi_{j+k}-\Psi_j\Psi_k$ and $b_k=k\,\E[e^{k-1}]$, and the coefficient of \eqref{eq:bg-gS} is
\begin{equation}
g_S=\frac{1}{\sigma^2\,b^{\!\top}F_S^{-1}b},
\label{eq:gS-quadform}
\end{equation}
so that the asymptotic slope variance of the degree-$S$ estimator is $c_2g_S/N$. The quadratic form $b^{\!\top}F_S^{-1}b$ is the squared length, in the $F_S$ inner product, of the Riesz representer of the sensitivity functional $h\mapsto b^{\!\top}h$ on the span of the centered basis. Under a regular density that representer is the projection of the location score, but nothing below uses a density: A1--A3 admit discrete error laws, and $b_i=i\,\E[\xi^{\,i-1}]$ is a moment functional.

\begin{theorem}[Monotone efficiency reserve]
\label{thm:nested}
Under A1--A3, $S\mapsto g_S$ is non-increasing, $g_{S+1}\le g_S\le\dots\le g_2\le1$, with $g_{S+1}=g_S$ iff the residual of $\xi^{S+1}$ after projection onto $\operatorname{span}\{\xi,\dots,\xi^S\}$ in the $F_S$ inner product is $b$-orthogonal, i.e.\ carries no sensitivity.
\end{theorem}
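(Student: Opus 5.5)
The plan is to read $b^{\!\top}F_S^{-1}b$ as a variational quantity and use nesting of spans. For the centered basis $u_k=\xi^k-\Psi_k$, $k=1,\dots,S$, write $U_S=\operatorname{span}\{u_1,\dots,u_S\}\subset L^2$. Under A3, $F_S$ is the (positive definite) Gram matrix of $u_1,\dots,u_S$. The map $h\mapsto h^{\!\top}u$ is then an isometry from $(\R^S,\langle\cdot,\cdot\rangle_{F_S})$ onto $U_S$. The Cauchy--Schwarz argument already used in the proof sketch of Theorem~\ref{thm:sls-pmm2} gives
\[
b^{\!\top}F_S^{-1}b=\max_{h\neq0}\frac{(b^{\!\top}h)^2}{h^{\!\top}F_Sh}.
\]
This is the squared norm of the sensitivity functional $L_S(h^{\!\top}u)=b^{\!\top}h$ on $U_S$, with maximizer $h\propto F_S^{-1}b$. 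The first step is to check that the functionals are compatible across degrees. Since $b_k=k\,\E[\xi^{k-1}]$ depends only on $k$, the vector $b^{(S+1)}$ restricted to its first $S$ coordinates is $b^{(S)}$. Likewise $F_S$ is the leading $S\times S$ block of $F_{S+1}$. Hence $L_{S+1}$ restricted to $U_S$ equals $L_S$.

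Monotonicity then follows at once. Embed $h\in\R^S$ as $(h,0)\in\R^{S+1}$. This leaves both $b^{\!\top}h$ and $h^{\!\top}Fh$ unchanged, so the maximum over the larger space is at least the maximum over the smaller one, and $b^{\!\top}F_{S+1}^{-1}b\ge b^{\!\top}F_S^{-1}b$. By \eqref{eq:gS-quadform}, $g_{S+1}\le g_S$. The chain terminates at $g_1=1$, because $F_1=\sigma^2$ and $b_1=1$. Together with $g_2\le1$ this gives the full chain.

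For the equality case I will make the gain explicit with a block (Schur complement) decomposition. Let $r=u_{S+1}-\Pi_{U_S}u_{S+1}$ be the projection residual, where $\Pi_{U_S}u_{S+1}=c^{\!\top}u_{(S)}$ with $c=F_S^{-1}f$ and $f_k=\operatorname{Cov}(\xi^k,\xi^{S+1})$. The residual has squared norm $s=F_{S+1,S+1}-f^{\!\top}F_S^{-1}f$, which equals $\Delta_{S+1}/\Delta_S>0$ under A3 at degree $S+1$. The block-inverse formula gives
\[
b^{\!\top}F_{S+1}^{-1}b=b_{(S)}^{\!\top}F_S^{-1}b_{(S)}+\frac{\bigl(b_{S+1}-c^{\!\top}b_{(S)}\bigr)^2}{s}.
\]
Here $b_{S+1}-c^{\!\top}b_{(S)}=L_{S+1}(r)$ is exactly the sensitivity carried by the residual $r$. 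So equality holds iff $L_{S+1}(r)=0$, i.e.\ iff $r$ is $b$-orthogonal, which is the stated condition. Since $U_S$ is also the span of $\xi,\dots,\xi^S$ modulo constants, the projection can equivalently be phrased in terms of the uncentered powers.

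The main obstacle is bookkeeping rather than analysis. I have to check that $b_k=k\,\E[\xi^{k-1}]$ is genuinely the derivative functional of the \emph{same} location family at every degree, so that the nesting of the $b$'s is legitimate. I also have to make sure A3 is invoked at degree $S+1$ so that $s>0$. If A3 held only at degree $S$, I would treat the case $s=0$ separately: then $u_{S+1}\in U_S$ and the two spans coincide, so $g_{S+1}=g_S$ trivially, with $F_{S+1}^{-1}$ read as a generalized inverse.
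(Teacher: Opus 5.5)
Your proof is correct and is essentially the paper's argument. The quantity $b^{\!\top}F_S^{-1}b$ is the squared norm of the sensitivity functional (its Riesz representer) on the nested spans $\mathcal H_S\subseteq\mathcal H_{S+1}$. Your Schur-complement identity $b^{\!\top}F_{S+1}^{-1}b-b^{\!\top}F_S^{-1}b=(b_{S+1}-c^{\!\top}b_{(S)})^2/s$ writes out explicitly the equality criterion that the paper states without computation.

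One correction to your closing aside. When $\Delta_S>0=\Delta_{S+1}$ you do not get $g_{S+1}=g_S$ trivially. The map $h\mapsto b^{\!\top}h$ is defined on coefficient vectors, and it does not descend to $L^2$ classes once the Gram matrix is singular. If $b_{S+1}-c^{\!\top}b_{(S)}\neq0$, then $b\notin\operatorname{range}F_{S+1}$, the supremum of $(b^{\!\top}h)^2/h^{\!\top}F_{S+1}h$ is $+\infty$, and $g_{S+1}=0$. For example, an asymmetric two-point error law at $S=1$ gives $g_2=1-\gamma_3^2/(2+\gamma_4)=0$. A generalized inverse would miss this. So either assume A3 at degree $S+1$, as the theorem does, or define $g_{S+1}$ through the supremum rather than through $F_{S+1}^{-1}$.
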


\begin{proof}
With $\phi_S$ the centered basis vector, $1/g_S=\sigma^2\Pi_S$ where $\Pi_S=b^{\!\top}F_S^{-1}b$ is the squared $F_S$-norm of the Riesz representer of $h\mapsto b^{\!\top}h$ on $\mathcal H_S=\operatorname{span}\{\xi-\Psi_1,\dots,\xi^S-\Psi_S\}$ --- a Gram-matrix quantity, defined for every law satisfying A2--A3 and needing no density. Since $\mathcal H_S\subseteq\mathcal H_{S+1}$, $\Pi_S\le\Pi_{S+1}$ with equality iff the residual of $\xi^{S+1}$ after projection onto $\mathcal H_S$ is $b$-orthogonal; and $\Pi_1=1/\sigma^2$ gives $g_S\le1$. A3 guarantees $F_S^{-1}$ exists.
\end{proof}

\subsection{The symmetric regime}

For a symmetric error law every odd standardized cumulant vanishes, $g_2=1$ by \eqref{eq:bg-g2}, and neither SLS nor degree-two PMM improves on OLS for the slope. The degree-three coefficient need not degenerate: with the odd moments equal to their symmetric value zero the system $F_3h=b$ reduces to the $\{\xi,\xi^3\}$ block and \eqref{eq:gS-quadform} evaluates to
\begin{equation}
g_3=1-\frac{\gamma_4^2}{6+9\gamma_4+\gamma_6},
\label{eq:g3}
\end{equation}
with $\gamma_6$ the sixth standardized cumulant.

\begin{theorem}[Symmetric reserve]
\label{thm:symm}
Let the error law be symmetric and satisfy A2--A3 with $S=3$, equivalently $6+9\gamma_4+\gamma_6>0$. Then $g_2=1$ and $g_3$ is \eqref{eq:g3}. Degree-two estimation, in particular SLS for the slope in the setting of Theorem~\ref{thm:sls-pmm2}, attains no efficiency gain, whereas degree-three PMM is strictly more efficient whenever $\gamma_4\neq0$, with asymptotic relative efficiency $1/g_3>1$.
\end{theorem}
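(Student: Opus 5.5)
We prove Theorem~\ref{thm:symm} by computing \eqref{eq:gS-quadform} at $S=3$ directly, using the vanishing of odd moments to make the body $F_3$ block-diagonal. The claim $g_2=1$ needs no new work: symmetry gives $\gamma_3=0$, and \eqref{eq:bg-g2} then gives $g_2=1$. By Theorem~\ref{thm:sls-pmm2}, optimally weighted SLS has slope variance $c_2g_2/N=c_2/N$, which is the OLS value. So the content of the theorem is the closed form \eqref{eq:g3} and its strict inequality.

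Standardize $\sigma=1$. Symmetry gives $\mu_1=\mu_3=\mu_5=0$, so the centered basis is $(e,\;e^2-1,\;e^3)$ and the entries of $F_3$ are as follows:
\begin{itemize}
\item $F_{11}=1$, $F_{13}=\mu_4$ and $F_{33}=\mu_6$;
\item $F_{22}=\mu_4-1$;
\item $F_{12}=\mu_3=0$ and $F_{23}=\mu_5-\mu_3=0$.
\end{itemize}
The sensitivity vector is $b=(1,\,2\mu_1,\,3\mu_2)^{\!\top}=(1,0,3)^{\!\top}$. Hence $F_3$ splits into the $\{e,e^3\}$ block $G=\bigl(\begin{smallmatrix}1&\mu_4\\ \mu_4&\mu_6\end{smallmatrix}\bigr)$ and the scalar $\mu_4-1$. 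Since $b$ vanishes on the middle coordinate, $b^{\!\top}F_3^{-1}b=(1,3)\,G^{-1}(1,3)^{\!\top}$. Inverting the $2\times2$ block gives
\[
b^{\!\top}F_3^{-1}b=\frac{\mu_6-6\mu_4+9}{\mu_6-\mu_4^2},
\qquad
1-g_3=\frac{(\mu_4-3)^2}{\mu_6-6\mu_4+9}.
\]

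Next we translate to cumulants. For a symmetric standardized law, $\mu_4=3+\gamma_4$ and $\mu_6=15+15\gamma_4+\gamma_6$; the terms in $\kappa_3^2$ drop out. The numerator becomes $\gamma_4^2$. The denominator becomes $15+15\gamma_4+\gamma_6-18-6\gamma_4+9=6+9\gamma_4+\gamma_6$. This yields \eqref{eq:g3}.

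For well-definedness we use A3. We have $\det F_3=(\mu_4-1)(\mu_6-\mu_4^2)$, and $\mu_4-1>0$ for any non-degenerate law (Jensen). So A3 amounts to $\det G=\mu_6-\mu_4^2>0$. Then $6+9\gamma_4+\gamma_6=\det G+(\mu_4-3)^2>0$, so A3 implies the positivity condition stated in the theorem. Finally, $1-g_3=\gamma_4^2/(6+9\gamma_4+\gamma_6)$ with a positive denominator. Hence $g_3<1$, equivalently $1/g_3>1$, exactly when $\gamma_4\neq0$, which agrees with the equality criterion of Theorem~\ref{thm:nested}.

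The main obstacle is bookkeeping rather than an idea. We must keep the symmetric moment–cumulant conversion for $\mu_6$ correct, and we must read the word ``equivalently'' in the hypothesis carefully. The decomposition $6+9\gamma_4+\gamma_6=\det G+(\mu_4-3)^2$ shows that this condition is implied by A3, and it is exactly what is needed for \eqref{eq:g3} to be finite and below one. We would state the result under A3, which is the hypothesis actually used.
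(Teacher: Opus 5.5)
Your proof is correct and is essentially the paper's argument: symmetry kills $F_{12}$, $F_{23}$ and $b_2$, the quadratic form $b^{\top}F_3^{-1}b$ is evaluated on the $\{e,e^3\}$ block, and strictness comes from $\gamma_4^2>0$ over a positive denominator. The paper phrases that last step as the $b$-component $-\sigma^2\gamma_4$ of the residual of $\xi^3$ after projection onto $\xi$.

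Your identity $6+9\gamma_4+\gamma_6=\det G+(\mu_4-3)^2$ proves the positivity that the paper only asserts. You are also right that the statement's ``equivalently'' holds in one direction only: a standardized symmetric three-point law on $\{-a,0,a\}$ with $a^2\neq3$ satisfies the inequality but violates A3. One small fix: $\mu_4>1$ does not follow from Jensen for every non-degenerate law, since the two-point law $\pm1$ has $\mu_4=1$; instead obtain it from A3, because $F_{22}=\mu_4-1$ is a diagonal entry of the positive-definite $F_3$.
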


\begin{proof}
$g_2=1$ follows from $\gamma_3=0$ in \eqref{eq:bg-g2}. Under symmetry $\mu_3=\mu_5=0$ zero the off-diagonals $F_{12}$, $F_{23}$ of $F_3$ and $b_2=2\mu_1=0$, so the $\{\xi^2\}$ direction decouples from the $\{\xi,\xi^3\}$ block in the projection of Theorem~\ref{thm:nested}. Within that block $b_3=3\E[e^2]=3\sigma^2\neq0$. The residual of $\xi^3$ after projection onto $\operatorname{span}\{\xi\}$ is $\xi^3-(\mu_4/\sigma^2)\xi$, whose $b$-component is $3\sigma^2-\mu_4/\sigma^2=-\sigma^2\gamma_4$; it is non-zero precisely when $\E[\xi^4]\neq3(\E[\xi^2])^2$, i.e.\ $\gamma_4\neq0$. Evaluating $b^{\!\top}F_3^{-1}b$ on the block gives \eqref{eq:g3}; since $\gamma_4^2>0$ and the body factor $6+9\gamma_4+\gamma_6$ is positive, $g_3<1$.
\end{proof}

Symmetry shifts the informative content from the third cumulant, which the second-order span reads, to the fourth, which only a cubic basis function reads.

\subsection{The general degree-three coefficient}
\label{subsec:asym-reserve}

That degree-three PMM reduces the slope variance under asymmetric errors was established in \citet{zabolotnii2018asym}; the contribution here is the explicit closed form in the four standardized cumulants, its symmetric reduction, and their machine-checked verification. Standardizing to $\sigma=1$, so that $m_2=1$, $m_3=\gamma_3$, $m_4=\gamma_4+3$, $m_5=\gamma_5+10\gamma_3$, $m_6=\gamma_6+15\gamma_4+10\gamma_3^2+15$, the body and sensitivity vector are
\begin{equation}
F_3=\begin{pmatrix}
1&\gamma_3&\gamma_4+3\\
\gamma_3&\gamma_4+2&\gamma_5+9\gamma_3\\
\gamma_4+3&\gamma_5+9\gamma_3&\gamma_6+15\gamma_4+9\gamma_3^2+15
\end{pmatrix},\qquad b=(1,0,3)^{\!\top},
\label{eq:F3-matrix}
\end{equation}
and $g_3=\det F_3/\bigl(b^{\!\top}\operatorname{adj}(F_3)b\bigr)=N/D$ with
\begin{align}
N&=-9\gamma_3^4+12\gamma_3^2\gamma_4-\gamma_3^2\gamma_6-24\gamma_3^2+2\gamma_3\gamma_4\gamma_5-12\gamma_3\gamma_5\nonumber\\
&\qquad{}-\gamma_4^3+7\gamma_4^2+\gamma_4\gamma_6+24\gamma_4-\gamma_5^2+2\gamma_6+12,\label{eq:g3-num}\\
D&=9\gamma_3^2\gamma_4-18\gamma_3^2-12\gamma_3\gamma_5+9\gamma_4^2+\gamma_4\gamma_6-\gamma_5^2+24\gamma_4+2\gamma_6+12.\label{eq:g3-den}
\end{align}

\begin{theorem}[General degree-three coefficient]
\label{thm:asym}
Under A2--A3 with $S=3$, $g_3=N/D$ with $N,D$ as in \eqref{eq:g3-num}--\eqref{eq:g3-den}; $g_3$ depends on all of $(\gamma_3,\gamma_4,\gamma_5,\gamma_6)$, whereas $g_2$ depends on $(\gamma_3,\gamma_4)$ only, and $\gamma_3=\gamma_5=0$ recovers \eqref{eq:g3}.
\end{theorem}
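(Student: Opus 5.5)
The proof is a direct computation from \eqref{eq:gS-quadform} with $S=3$ and $\sigma=1$. Since $F_3$ is symmetric and, by A3, invertible, $F_3^{-1}=\operatorname{adj}(F_3)/\det F_3$. Hence $1/g_3=b^{\!\top}F_3^{-1}b=b^{\!\top}\operatorname{adj}(F_3)b/\det F_3$, which gives $g_3=\det F_3/(b^{\!\top}\operatorname{adj}(F_3)b)$. Both quantities are finite by A2, because $F_3$ uses only moments through order six.

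\textbf{Step 1: the entries of $F_3$.} I would first justify \eqref{eq:F3-matrix} from $F_{jk}=m_{j+k}-m_jm_k$ using the raw-moment expressions in standardized cumulants ($m_1=0$, $m_2=1$, $m_3=\gamma_3$, $m_4=\gamma_4+3$, $m_5=\gamma_5+10\gamma_3$, $m_6=\gamma_6+15\gamma_4+10\gamma_3^2+15$). For example, $F_{23}=m_5-m_2m_3=\gamma_5+9\gamma_3$ and $F_{33}=m_6-m_3^2=\gamma_6+15\gamma_4+9\gamma_3^2+15$. The vector $b=(1,0,3)^{\!\top}$ follows from $b_k=k\,m_{k-1}$.

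\textbf{Step 2: the determinant.} I would expand $\det F_3$ along the first row. The result should simplify to $N$ in \eqref{eq:g3-num}.

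\textbf{Step 3: the quadratic form.} Because $b_2=0$,
\[
b^{\!\top}\operatorname{adj}(F_3)b=C_{11}+6C_{13}+9C_{33},
\]
where $C_{ij}$ are cofactors (the adjugate of a symmetric matrix is symmetric). The three cofactors are
\[
C_{11}=(\gamma_4+2)F_{33}-F_{23}^2,\qquad C_{13}=\gamma_3F_{23}-(\gamma_4+2)(\gamma_4+3),\qquad C_{33}=\gamma_4+2-\gamma_3^2 .
\]
Collecting terms should give $D$ in \eqref{eq:g3-den}. A useful check is that the $\gamma_3^2$ pieces combine: $(\gamma_4+2)\cdot9\gamma_3^2-81\gamma_3^2+54\gamma_3^2-9\gamma_3^2$ produces $9\gamma_3^2\gamma_4-18\gamma_3^2$. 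Likewise, $(\gamma_4+2)(\gamma_6+15\gamma_4+15)-6(\gamma_4+2)(\gamma_4+3)+9(\gamma_4+2)$ reduces to $9\gamma_4^2+\gamma_4\gamma_6+24\gamma_4+2\gamma_6+12$, matching the $\gamma_3$-free part of $D$.

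\textbf{Main obstacle: bookkeeping in $\det F_3$.} The expansion has many cross terms in $\gamma_3,\gamma_5$, such as $-9\gamma_3^4$ and $2\gamma_3\gamma_4\gamma_5$. I would organize it by powers of $\gamma_3$: first the $\gamma_3=\gamma_5=0$ part, then linear and quadratic corrections. The whole identity would be cross-checked by the Lean polynomial identities of \S\ref{subsec:lean} (\texttt{ring}).

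\textbf{Dependence on all four cumulants.} Since $g_2$ in \eqref{eq:bg-g2} involves only $(\gamma_3,\gamma_4)$, the claim concerns $g_3$. I would show that $\partial g_3/\partial\gamma_5$ and $\partial g_3/\partial\gamma_6$ are not identically zero. The cleanest route is to exhibit one admissible point where they are nonzero, or to note that $N/D$ is not proportional to a $\gamma_5$-free ratio. At $\gamma_3=\gamma_5=0$, the $\gamma_6$ dependence is visible in \eqref{eq:g3}. For $\gamma_5$, I would take $\gamma_3\neq0$ and observe that the linear $\gamma_5$ coefficients in $N$ ($2\gamma_3\gamma_4-12\gamma_3$) and in $D$ ($-12\gamma_3$) are not in the ratio $N:D$.

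\textbf{Symmetric reduction.} Setting $\gamma_3=\gamma_5=0$ gives
\[
N=-\gamma_4^3+7\gamma_4^2+\gamma_4\gamma_6+24\gamma_4+2\gamma_6+12=(\gamma_4+2)(6+9\gamma_4+\gamma_6)-\gamma_4^2(\gamma_4+2),
\]
and
\[
D=9\gamma_4^2+\gamma_4\gamma_6+24\gamma_4+2\gamma_6+12=(\gamma_4+2)(6+9\gamma_4+\gamma_6).
\]
Cancelling the common factor $\gamma_4+2>0$, which is nonzero by A3 since it is the variance of $e^2$, yields $g_3=1-\gamma_4^2/(6+9\gamma_4+\gamma_6)$, that is, \eqref{eq:g3}.
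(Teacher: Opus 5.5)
Your proposal is correct and follows the paper's proof. Both treat $N=\det F_3$ and $D=b^{\top}\operatorname{adj}(F_3)b$ as polynomial identities in the entries of $F_3$; your cofactors $C_{11},C_{13},C_{33}$ and the combination $C_{11}+6C_{13}+9C_{33}$ do reproduce \eqref{eq:g3-den}, and the first-row expansion does give \eqref{eq:g3-num}. Both then use the factorization $D=(\gamma_4+2)(6+9\gamma_4+\gamma_6)$, $D-N=\gamma_4^2(\gamma_4+2)$ at $\gamma_3=\gamma_5=0$. Your derivative check for the $\gamma_5$-dependence is slightly more careful than the paper's appeal to $D-N$, since $D$ itself also depends on $\gamma_5$. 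One point to state explicitly: positive definiteness of $F_3$ under A3 gives $D=\det F_3\,b^{\top}F_3^{-1}b>0$, and this is what licenses writing $g_3=N/D$ and cancelling in the symmetric case.
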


\begin{proof}
$N=\det F_3$ and $D=b^{\!\top}\operatorname{adj}(F_3)b$ are polynomial identities in the entries of \eqref{eq:F3-matrix}, machine-checked (\S\ref{subsec:lean}); $D-N=9\gamma_3^4-3\gamma_3^2\gamma_4+\gamma_3^2\gamma_6+6\gamma_3^2-2\gamma_3\gamma_4\gamma_5+\gamma_4^3+2\gamma_4^2$ exhibits the four-cumulant dependence. For $\gamma_3=\gamma_5=0$, $D=(\gamma_4+2)(6+9\gamma_4+\gamma_6)$ and $D-N=\gamma_4^2(\gamma_4+2)$, and the common factor cancels. Positive definiteness of $F_3$ under A3 gives $N>0$ and $D>0$; $g_3\le g_2$ is Theorem~\ref{thm:nested}.
\end{proof}

Table~\ref{tab:asym-reserve} evaluates the closed form on four right-skewed laws: the additional asymptotic reserve $g_2/g_3-1$ over the second-order span ranges from $30\%$ on $\mathrm{Gamma}(2,1)$ to $50\%$ on the exponential. These are population constants for these four laws, not a distribution-free magnitude; an oracle Monte Carlo with population coefficients $h=F_3^{-1}b$ ($M=10\,000$, Supplement~S5, Table~S4) reproduces them to within two bootstrap standard errors.

\begin{table}[t]
\centering
\caption{\emph{Asymptotic} degree-three reserve over the second-order span on four right-skewed laws, from \eqref{eq:g3-num}--\eqref{eq:g3-den}. $1/g_2$ and $1/g_3$ are the asymptotic relative efficiencies over OLS of the degree-two (SLS/PMM$_2$) and degree-three estimators; the parenthesized figure is $g_2/g_3-1$. Population constants; no sampling involved.}
\label{tab:asym-reserve}
\begin{tabular}{lccccc}
\toprule
error law & $\gamma_3$ & $\gamma_4$ & $\gamma_5$ & $\gamma_6$ & $1/g_2\;\to\;1/g_3$ (reserve)\\
\midrule
$\chi^2(2)=\mathrm{Exp}$ & $2.00$ & $6.0$ & $24.0$ & $120.0$ & $2.00\;\to\;3.00$\ \ ($+50\%$)\\
$\mathrm{Gamma}(1.2,1)$ & $1.83$ & $5.0$ & $18.3$ & $83.3$ & $1.91\;\to\;2.76$\ \ ($+45\%$)\\
$\chi^2(3)$ & $1.63$ & $4.0$ & $13.1$ & $53.3$ & $1.80\;\to\;2.49$\ \ ($+38\%$)\\
$\mathrm{Gamma}(2,1)$ & $1.41$ & $3.0$ & $8.49$ & $30.0$ & $1.67\;\to\;2.17$\ \ ($+30\%$)\\
\bottomrule
\end{tabular}
\end{table}

\subsection{What the second-order span cannot read}
\label{subsec:span}

\begin{proposition}[Symmetric-kurtosis blindness of the second-order span]
\label{prop:blind}
Under A1--A4 with $S=2$ and a symmetric error law, the optimal slope combination is $F_2^{-1}b=(1/\sigma^2,0)^{\!\top}$: the slope estimating equation of SLS and of degree-two PMM places zero weight on the squared residual, reduces to the OLS normal equation, and carries no information from $\gamma_4$ or any even cumulant beyond $\sigma^2$.
\end{proposition}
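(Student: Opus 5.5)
The plan is a direct computation from the normal system \eqref{eq:normal} of Theorem~\ref{thm:sls-pmm2}, followed by a reading of the resulting estimating equation. First I will specialize the body \eqref{eq:F2} to a symmetric law. Symmetry gives $\kappa_3=\E[e^3]=0$, so $F_2=\operatorname{diag}(\sigma^2,\kappa_4+2\sigma^4)$. A3 with $S=2$ guarantees $\det F_2=\sigma^2(\kappa_4+2\sigma^4)>0$, so the second diagonal entry is strictly positive and $F_2^{-1}=\operatorname{diag}\bigl(1/\sigma^2,\,1/(\kappa_4+2\sigma^4)\bigr)$. Combined with $b=(1,0)^{\!\top}$ from \eqref{eq:b}, this gives $h^\star=F_2^{-1}b=(1/\sigma^2,0)^{\!\top}$. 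The second coordinate vanishes because $b_2=0$ and the off-diagonal of $F_2$ is zero, not because $\kappa_4$ takes any particular value. This already shows that the weight on $m_2=e^2-\sigma^2$ is zero for every admissible $\gamma_4$.

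Second, I will transfer this to both estimators through Theorem~\ref{thm:sls-pmm2}. For PMM$_2$ the slope estimating equation is $\sum_i x_i\,(h^\star)^{\!\top}m(e_i)=\sigma^{-2}\sum_i x_i e_i=0$, which is the OLS normal equation $\sum_i x_i(y_i-x_i^{\!\top}\beta)=0$ up to a nonzero scalar. For optimally weighted SLS I will reuse the $T_i$ cancellation from the proof sketch, $\dot\rho_i^{\!\top}W_i^\star\rho_i=-x_i(F_2^{-1}b)^{\!\top}m(e_i)$. This gives the same vector field, so SLS places zero weight on the squared residual in its regression form as well. The influence function \eqref{eq:if} then reduces to the OLS influence function $-\E[xx^{\!\top}]^{-1}x e$, and the variance $c_2g_2/N$ equals $c_2/N$, consistent with $g_2=1$ in \eqref{eq:bg-g2}.

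Third, I will read off the ``no even-cumulant information'' clause. The equation $\sum_i x_ie_i=0$ contains no moment parameter at all, because the factor $1/\sigma^2$ is an overall scale and does not affect the root. Its asymptotic variance $\sigma^2\E[xx^{\!\top}]^{-1}$ depends on the error law only through $\sigma^2$. So $\gamma_4$, and any higher even cumulant, enter neither the estimator nor its limit law.

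The only delicate point is the SLS side. There the weight is stated per observation for $\rho_i$, not for $m(e_i)$, so the conditional weight $W_i^\star=T_i^{-\top}F_2^{-1}T_i^{-1}$ is generally \emph{not} diagonal and does put weight on $\rho_{2,i}$ in the criterion. I will therefore make explicit that ``zero weight on the squared residual'' refers to the slope score after the $T_i$ factors cancel, which is the object Theorem~\ref{thm:sls-pmm2} identifies. I do not expect other obstacles, since everything is a $2\times2$ computation under A1.
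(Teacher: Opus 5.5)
Your proposal is correct and follows the paper's own proof. The $T_i$-cancellation reduces the SLS slope score to $\sum_i x_i(F_2^{-1}b)^{\top}m(e_i)$. Symmetry makes $F_2$ diagonal, so $F_2^{-1}b=(1/\sigma^2,0)^{\top}$ and the equation collapses to the OLS normal equation. Your hedge on the SLS side is unnecessary, though: $T_i^{-\top}$ fixes every vector with zero second component, so $W_i^\star\dot\rho_i=-x_iT_i^{-\top}F_2^{-1}b=-x_i(1/\sigma^2,0)^{\top}$, and the slope score puts zero weight on the regression-form residual $\rho_{2,i}$ itself, not only on $m_2$ after cancellation. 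This annihilation of $\rho_2$ in the slope direction is exactly how the paper states the mechanism.
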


The proof (Supplement~S2) notes that under symmetry $F_2$ is diagonal while the conditional cross-moment $\E[\rho_{i1}\rho_{i2}\mid x_i]=2\mu_i\sigma^2$ is not zero: the mechanism is not a block-diagonal weight but the annihilation of $\rho_2$ in the slope direction, $W_i^\star\partial_\beta\rho_i\propto T_i^{-\top}F_2^{-1}b$. The symmetric-kurtosis direction is $\xi^3-\operatorname{proj}_{\{\xi\}}\xi^3\in\mathcal H_3\setminus\mathcal H_2$, $L^2$-orthogonal to everything the span $\{e,e^2-\sigma^2\}$ can form.

Two qualifications keep the statement to its proper size. First, the reserve belongs to the span, not to the algorithm: by the same projection argument, any estimator that is efficient on the degree-three span attains $1/g_3$, in particular a two-step GMM on the six conditions $\{e,xe,e^2-\sigma^2,x(e^2-\sigma^2),e^3-\mu_3,x(e^3-\mu_3)\}$. What distinguishes the PMM route is the normal system: three coefficients with a fixed sensitivity vector $b$ in place of a $6\times6$ weight matrix estimated from the same sample, a difference that is invisible asymptotically and visible at $n\le200$ (Section~\ref{subsec:mc-reserve}). Second, the direction is unreadable by the second-order span, not by semiparametric estimation: in Case~2 of \citet{kim2019semiparametric} the efficient score involves the error log-density and its bound lies below $1/g_3$ whenever the law is non-Gaussian, and under a symmetry assumption (their Case~3) the efficient estimator is different again. Degree-three PMM reads the part of that gap that a cubic moment function can reach without a density.

\subsection{The feasible degree-$S$ estimator and the choice of degree}
\label{subsec:feasible}

The closed forms above use the population coefficients $h=F_S^{-1}b$. The estimator actually implemented replaces them by sample moments. Let $\hat\beta^{(0)}$ be OLS, $\hat e_i=y_i-x_i^{\!\top}\hat\beta^{(0)}$, $\hat\alpha_k=N^{-1}\sum_i\hat e_i^k$ for $k\le2S$, $\widehat F_S$ and $\hat b$ the body and sensitivity vector formed from $\hat\alpha$, and $\hat h=\widehat F_S^{-1}\hat b$. The feasible degree-$S$ estimator $\hat\beta_S$ solves
\begin{equation}
\sum_{i=1}^N (x_i-\bar x)\;\hat h^{\!\top}\bigl(m_S(y_i-x_i^{\!\top}\beta)-\hat\alpha_{1:S}\bigr)=0,\qquad m_S(e)=(e,e^2,\dots,e^S)^{\!\top},
\label{eq:feasible}
\end{equation}
for the slope, with the intercept equation $\sum_i\hat h^{\!\top}(m_S(\cdot)-\hat\alpha_{1:S})=0$; one Newton step from $\hat\beta^{(0)}$ or iteration to convergence, with the moments re-estimated at the updated $\hat\beta$, give the same first-order result.

\begin{proposition}[Feasible plug-in estimator]
\label{prop:feasible}
Under A1--A4, $\hat\beta_S$ is consistent and
\[
\sqrt N(\hat\beta_{S,1}-\beta_1)\;\xrightarrow{d}\;\mathcal N(0,\;c_2g_S),
\]
the same first-order law as the estimator with the population coefficients $h=F_S^{-1}b$. Finite moments of order $2S$ suffice: sixth moments for $S=3$. Finite moments of order $4S$ are needed only for the $\sqrt N$-asymptotic normality of $\hat h$ and $\hat g_S$ themselves (a confidence interval for the estimated reserve), not for the law of $\hat\beta_S$.
\end{proposition}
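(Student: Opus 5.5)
The argument is a standard two-step estimating-equation analysis. The feasible estimator solves a Z-equation whose coefficients $\hat h$ depend on nuisance quantities, namely the sample moments $\hat\alpha$ and the preliminary $\hat\beta^{(0)}$. The plan has four steps: show that $\hat\alpha$ and $\hat h$ are consistent; establish consistency of $\hat\beta_S$; linearize the estimating equation around the true $\beta$; and show that the estimation error in $\hat h$ and in the centering vector $\hat\alpha_{1:S}$ enters the slope equation only at order $o_p(N^{-1/2})$. The oracle variance $c_2g_S$ is then read off from Theorem~\ref{thm:nested} and \eqref{eq:gS-quadform}, using the sandwich computation already in the proof sketch of Theorem~\ref{thm:sls-pmm2}. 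That computation gives $V(a)=c_2a^{\!\top}F_Sa/(a^{\!\top}b)^2$, which equals $c_2/(\sigma^2 b^{\!\top}F_S^{-1}b)\cdot\sigma^2$ at $a=F_S^{-1}b$.

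\emph{Step 1 (nuisance consistency).} Write $\hat e_i=e_i-x_i^{\!\top}(\hat\beta^{(0)}-\beta)$ and expand $\hat e_i^k$ binomially. Under A1, A2 and A4, OLS is $\sqrt N$-consistent. The cross terms are products of $\|\hat\beta^{(0)}-\beta\|^j$ with averages of $|e_i|^{k-j}\|x_i\|^j$, and these averages are $O_p(1)$ by Hölder's inequality using $\E e^{2S}<\infty$ and $\E\|x\|^{2S}<\infty$. Hence $\hat\alpha_k\to\mu_k$ in probability for $k\le 2S$. A3 and continuity of matrix inversion then give $\hat h\to h^\star=F_S^{-1}b$. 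Only the weak law of large numbers is used here, so order-$2S$ moments suffice. Order $4S$ would only be needed for a CLT for $\hat\alpha_{2S}$, which is the content of the last sentence of the proposition.

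\emph{Step 2 (consistency and linearization).} The equation \eqref{eq:feasible} is polynomial in $\beta$ with coefficients that converge uniformly on compacts. Its population limit is $\E[(x-\E x)\,h^{\star\top}(m_S(e-x^{\!\top}\delta)-\alpha)]$, with $\delta=\beta'-\beta$ and the same moment vector $\alpha$ in the centering. This limit has derivative $-\Var(x)\,b^{\!\top}h^\star\neq0$ at $\delta=0$, and identifiability gives a consistent root. A mean-value expansion yields
\[
\sqrt N(\hat\beta_{S,1}-\beta_1)=\{\Var(x)\,b^{\!\top}h^\star\}^{-1}N^{-1/2}\sum_i(x_i-\bar x)\,\hat h^{\!\top}\{m_S(e_i)-\hat\alpha_{1:S}\}+o_p(1).
\]
The Jacobian converges because $\partial_e m_S$ has entries $ke^{k-1}$ with $k-1<2S$.

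\emph{Step 3 (the main obstacle: nuisance terms are negligible).} This is the crux, and the first thing to use is the centering $x_i-\bar x$.

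\begin{itemize}
\item The centering term $\hat\alpha_{1:S}$ is constant across $i$, so $\sum_i(x_i-\bar x)\hat\alpha_{1:S}=0$ exactly. Sample moments therefore do not perturb the slope equation at all.
\item The coefficient error contributes $(\hat h-h^\star)^{\!\top}N^{-1/2}\sum_i(x_i-\bar x)m_S(e_i)$. By A1 the errors are independent of the design, so $N^{-1/2}\sum_i(x_i-\bar x)m_S(e_i)=O_p(1)$; this uses $\E e^{2S}<\infty$ for the variance of $e^S$. Multiplied by $\hat h-h^\star=o_p(1)$, the term is $o_p(1)$.
\item The dependence of $\hat\alpha$ on $\hat\beta^{(0)}$ is absorbed in the same way. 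That dependence is a deterministic perturbation that is constant in $i$ and is killed by the centering, apart from the $\hat h$ channel already controlled.
\end{itemize}

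This orthogonality (Neyman-type, but exact here) is what lets consistency of $\hat h$ replace a rate. The delicate point is uniformity for the iterated version with re-estimated moments. For that version I would argue via one Newton step: iteration to convergence agrees with the Newton step to $o_p(N^{-1/2})$, because the population equation has a nonsingular derivative.

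\emph{Step 4 (limit law).} The leading term is $N^{-1/2}\sum_i(x_i-\E x)h^{\star\top}m_S(e_i)+o_p(1)$. By the i.i.d. CLT it converges to a normal with variance $\Var(x)\,h^{\star\top}F_Sh^\star$. Dividing by $\{\Var(x)\,b^{\!\top}h^\star\}^2$ and using $h^{\star\top}F_Sh^\star=b^{\!\top}h^\star=b^{\!\top}F_S^{-1}b$ gives the variance
\[
\frac{1}{\Var(x)\,b^{\!\top}F_S^{-1}b}=c_2g_S,
\]
by \eqref{eq:gS-quadform}.
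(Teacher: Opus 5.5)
Your proposal is correct and follows essentially the same route as the paper's proof in Supplement~S3. The shared steps are: the binomial expansion of $\hat e_i^k$ with H\"older and the law of large numbers for $\hat\alpha_k\to_p\alpha_k$ under $2S$-th moments; continuity of $h(\alpha)$ on $\{\Delta_S>0\}$; the exact disappearance of $\hat\alpha_{1:S}$ from the slope equation because $\sum_i(x_i-\bar x)=0$; the linear-in-$h$ decomposition $(\hat h-h)^{\top}U_N$ with $U_N=O_p(N^{-1/2})$ by independence of $e$ and $x$; and the sandwich value $1/\{\Var(x)\,b^{\top}F_S^{-1}b\}=c_2g_S$. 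The only slip is in your opening paragraph, where $V(a)$ as written evaluates to $\sigma^2c_2g_S$ because it tacitly takes $\sigma=1$, as the paper's own sketch does; this is harmless, since your Step~4 computes the variance directly and correctly obtains $c_2g_S$.
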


The proof is in Supplement~S3; three facts drive it. (i) The centered design in \eqref{eq:feasible} makes $\sum_i(x_i-\bar x)=0$, so the estimated means $\hat\alpha_{1:S}$ drop out of the slope equation exactly, and the plug-in enters only through $\hat h$. (ii) The derivative of the estimating function with respect to $h$ at the truth, $N^{-1}\sum_i(x_i-\bar x)m_S(e_i)^{\!\top}$, has mean zero under A1 and is $O_p(N^{-1/2})$, so an error $\hat h-h=o_p(1)$ perturbs the equation by $o_p(N^{-1/2})$ and leaves the first-order law unchanged; this is the standard argument for estimated nuisance quantities with zero expected derivative \citep[Section~6]{newey1994large}. (iii) $\hat h\to_ph$ needs only $\hat\alpha_k\to_p\alpha_k$ for $k\le2S$, which is a law of large numbers under A2 together with the OLS rate for $\hat\beta^{(0)}$ and A4, and continuity of $h(\alpha)=F_S(\alpha)^{-1}b(\alpha)$ on $\{\Delta_S>0\}$; uniqueness of the selected direction is A3. The finite-sample cost that \citet{kim2012efficiency} anticipate for estimated higher moments is real, but it is a second-order effect, and its sign turns out to depend on the law (Section~\ref{subsec:mc-feasible}).

\emph{Choosing between $S=2$ and $S=3$.} A practitioner needs a rule. The one we use is a pretest on the estimated reserve: compute $\hat d=1/\hat g_3-1/\hat g_2$ from the OLS residuals and its bootstrap standard deviation $\hat s$ ($B=200$ residual resamples), and take $S=3$ if $\hat d>2\hat s$, else $S=2$. The rule is conservative by design: it pays the degree-three estimation cost only when the reserve is resolved by the data, it selects $S=2$ under Gaussian errors at every sample size, and its realized efficiency is never below that of degree two in our experiments (Section~\ref{subsec:mc-feasible}). We also tested direction-specific stabilization of $\hat h$ along the $\hat m_6$ coordinate, which the submitted version conjectured might help; it does not, and the results are reported rather than omitted.

\section{Monte Carlo evidence and data examples}
\label{sec:montecarlo}

\subsection{Design and estimators}
\label{subsec:mc-setup}

Every regression cell uses
\begin{equation}
y_i=\beta_0+\beta_1x_i+e_i,\qquad \beta_0=2,\ \beta_1=1.5,\ x_i\sim\mathrm U(0,5),
\label{eq:dgp}
\end{equation}
with the slope $\beta_1$ as estimand and common random numbers within a cell. The error laws are centered so that $\E[e]=0$ and are otherwise left at their natural scale, since every estimator is scale-equivariant: $\chi^2(3)-3$; $\mathrm{Gamma}(k,1)-k$ for $k\in\{1,2,4,8\}$; a standardized $\mathrm{lognormal}(0,0.5)$; $\mathrm U(-\sqrt3,\sqrt3)$; $(\mathrm{Beta}(2,2)-\tfrac12)\sqrt{20}$; Student $t_\nu$ scaled to unit variance for $\nu\in\{10,7,5\}$ (the sixth moment is infinite for $\nu\le6$, so $t_5$ is a deliberate violation of A2); and $\mathcal N(0,1)$. The asymptotic constants $1/g_2$, $1/g_3$ quoted next to each cell are computed from the analytic cumulants where available and otherwise from $4\times10^6$ draws; they agree with Table~\ref{tab:asym-reserve} to the digits shown.

Seven estimators of $\beta_1$ are compared. OLS is the baseline. PMM$_2$ and PMM$_3$ are the feasible plug-in estimators of \S\ref{subsec:feasible} with the full degree-$S$ body, solved by Newton's method on \eqref{eq:feasible}; the public \textsf{EstemPMM} package \citep{estempmmcran} is used as a cross-check for degree two, and its degree-three routine, which assumes a symmetric error and uses the even-moment body, is reported in Supplement~S5 on every law to document that it is valid on the symmetric laws only. SLS is the feasible optimally weighted Wang--Leblanc estimator with the regression-form second residual and $\widehat W_i=\widehat T_i^{-\top}\widehat F_2^{-1}\widehat T_i^{-1}$ from OLS residual moments; PMM$_2$ and SLS thus share $\widehat F_2$, and any finite-sample gap between them is a solver effect. GMM$_2$ and GMM$_3$ are two-step efficient GMM estimators \citep{hansen1982large} on the four degree-two conditions and on the six degree-three conditions of \S\ref{subsec:span}, respectively. All efficiencies below are Monte Carlo $\RE=\mathrm{MSE}_{\OLS}/\mathrm{MSE}$ of \emph{feasible} estimators unless a caption says asymptotic or oracle; the standard errors in parentheses are paired bootstrap standard errors over replications. Convergence was $100\%$ in every cell and $|\mathrm{bias}|<0.02$ throughout. Scripts and result files are in the code supplement; the complete tables are in Supplement~S5.

\subsection{Equivalence of PMM$_2$ and SLS under asymmetric errors}
\label{subsec:mc-h1}

Table~\ref{tab:e1-asym} reports the asymmetric regime ($M=2000$). PMM$_2$ and SLS are indistinguishable at every $n$, GMM$_2$ joins them from $n=100$ on (at $n=50$ its estimated $4\times4$ weight costs it between $0.1$ and $0.3$ in $\RE$), and from $n=500$ on all three sit within one standard error of the asymptotic value $1/g_2$. The equivalence is stronger than equal variances: in the original pilot study (Supplement~S5, Table~S2) the per-replication correlation between the PMM$_2$ and SLS slopes rises from $0.996$ at $n=100$ to $0.9997$ at $n=1000$ and the paired difference contracts toward zero, which is the signature of a shared influence function (Figure~\ref{fig:h1}). The finite-sample values of the second-order estimators at $n\le200$ lie above $1/g_2$; this is the same second-order effect as for degree three, discussed in \S\ref{subsec:mc-feasible}.

\begin{table}[t]
\centering
\caption{\emph{Feasible} Monte Carlo relative efficiency over OLS under asymmetric errors, design \eqref{eq:dgp}, $M=2000$ replications, paired bootstrap standard errors in parentheses. All estimators use moments estimated from OLS residuals. The asymptotic constants $1/g_2$ and $1/g_3$ are given in the row labels. The last column is the per-replication correlation between the PMM$_3$ and GMM$_3$ slopes.}
\label{tab:e1-asym}
\scriptsize
\setlength{\tabcolsep}{3pt}
\begin{tabular}{@{}llcccccc@{}}
\toprule
Error & $n$ & PMM$_2$ & SLS & GMM$_2$ & PMM$_3$ & GMM$_3$ & cor$_{3}$\\
\midrule
\multicolumn{8}{@{}l}{$\chi^2(3)$: $1/g_2=1.80$, $1/g_3=2.50$}\\
 & 50 & $2.12\,(0.07)$ & $2.05\,(0.06)$ & $1.82\,(0.06)$ & $3.05\,(0.12)$ & $1.52\,(0.05)$ & 0.686 \\
 & 100 & $1.97\,(0.06)$ & $1.97\,(0.06)$ & $1.91\,(0.05)$ & $2.96\,(0.11)$ & $2.12\,(0.06)$ & 0.833 \\
 & 200 & $1.90\,(0.06)$ & $1.89\,(0.06)$ & $1.89\,(0.06)$ & $3.01\,(0.12)$ & $2.66\,(0.09)$ & 0.929 \\
 & 500 & $1.80\,(0.05)$ & $1.79\,(0.05)$ & $1.80\,(0.05)$ & $2.71\,(0.10)$ & $2.62\,(0.08)$ & 0.978 \\
 & 1000 & $1.83\,(0.05)$ & $1.83\,(0.05)$ & $1.85\,(0.05)$ & $2.74\,(0.10)$ & $2.72\,(0.09)$ & 0.987 \\
\midrule
\multicolumn{8}{@{}l}{$\mathrm{Gamma}(2,1)$: $1/g_2=1.67$, $1/g_3=2.17$}\\
 & 50 & $1.72\,(0.05)$ & $1.73\,(0.05)$ & $1.58\,(0.04)$ & $2.27\,(0.08)$ & $1.34\,(0.04)$ & 0.748 \\
 & 100 & $1.76\,(0.05)$ & $1.75\,(0.05)$ & $1.72\,(0.05)$ & $2.32\,(0.09)$ & $1.82\,(0.05)$ & 0.878 \\
 & 200 & $1.74\,(0.05)$ & $1.72\,(0.05)$ & $1.75\,(0.05)$ & $2.46\,(0.08)$ & $2.22\,(0.07)$ & 0.942 \\
 & 500 & $1.81\,(0.05)$ & $1.80\,(0.05)$ & $1.82\,(0.05)$ & $2.40\,(0.09)$ & $2.36\,(0.08)$ & 0.982 \\
 & 1000 & $1.68\,(0.05)$ & $1.68\,(0.05)$ & $1.69\,(0.05)$ & $2.22\,(0.07)$ & $2.22\,(0.07)$ & 0.992 \\
\bottomrule
\end{tabular}
\end{table}

\begin{figure}[t]
  \centering
  \includegraphics[width=\textwidth]{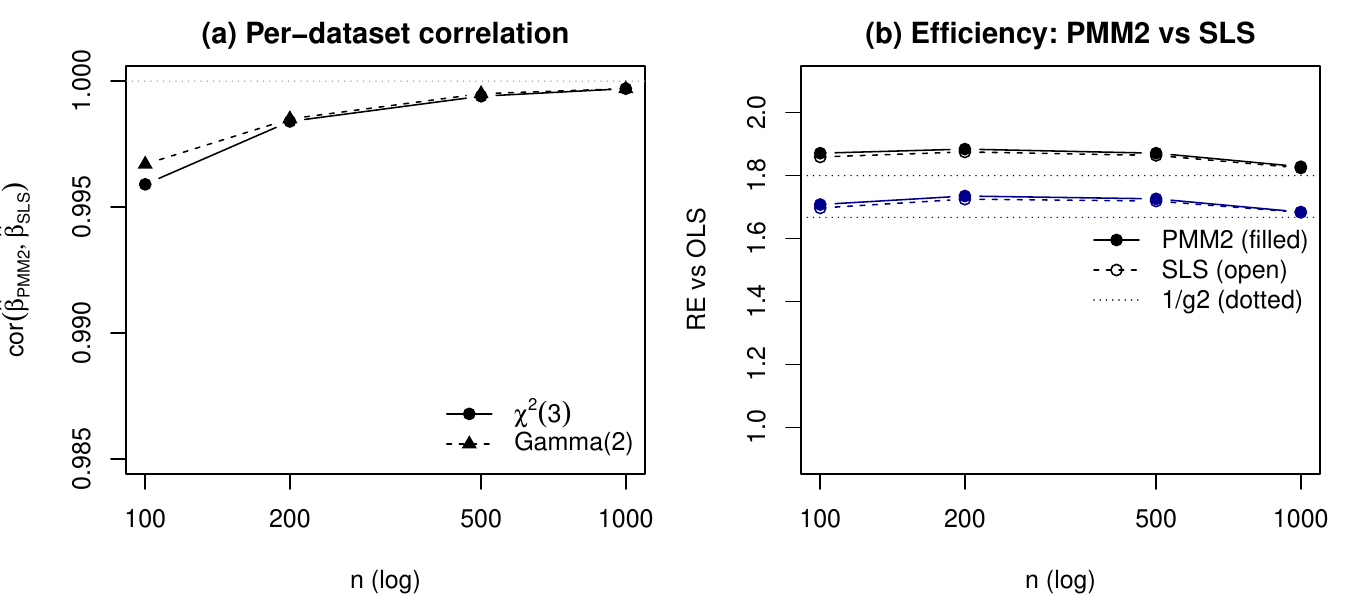}
  \caption{PMM$_2$ and SLS become indistinguishable as $n$ grows under asymmetric errors (pilot study, $M=500$, Supplement~S5). (a) Per-replication correlation of the two slope estimates. (b) Realized feasible efficiencies of PMM$_2$ (filled) and SLS (open); dotted lines mark the asymptotic value $1/g_2$.}
  \label{fig:h1}
\end{figure}

\subsection{The degree-three reserve and the degree-three GMM}
\label{subsec:mc-reserve}

The right-hand block of Table~\ref{tab:e1-asym} and Table~\ref{tab:e1-sym} address the reserve. On the skewed laws PMM$_3$ exceeds every degree-two estimator at every $n$, by the margin Table~\ref{tab:asym-reserve} predicts once $n\ge500$. GMM$_3$ confirms that the reserve is a property of the span: at $n\ge500$ it coincides with PMM$_3$ (correlation $0.98$--$0.99$), whereas at $n\le200$ its estimated $6\times6$ weight matrix costs it most of the gain ($1.5$ against $3.1$ at $n=50$ on $\chi^2(3)$). On the uniform law, Theorem~\ref{thm:symm}'s regime, PMM$_2$, SLS and GMM$_2$ stay at the OLS line for every $n$ while PMM$_3$ climbs from $2.2$ at $n=50$ to $3.37$ (SE $0.12$) at $n=1000$, reaching the asymptotic $1/g_3=3.33$ from below, and GMM$_3$ follows one step behind. Under the Gaussian control (Supplement~S5, Table~S7) no estimator gains, and the degree-three estimators pay a small, shrinking finite-sample cost ($\RE$ of PMM$_3$ $0.82$ at $n=50$, $1.00$ at $n=1000$).

\begin{table}[t]
\centering
\caption{\emph{Feasible} Monte Carlo relative efficiency over OLS under symmetric platykurtic errors $\mathrm U(-\sqrt3,\sqrt3)$, design \eqref{eq:dgp}, $M=2000$, paired bootstrap standard errors in parentheses. Asymptotic constants $1/g_2=1$, $1/g_3=3.33$. The Gaussian control and the symmetric even-moment degree-three routine of \textsf{EstemPMM}, which agrees with PMM$_3$ on this law to within $0.1$, are in Supplement~S5, Table~S7.}
\label{tab:e1-sym}
\scriptsize
\setlength{\tabcolsep}{3pt}
\begin{tabular}{@{}lcccccc@{}}
\toprule
$n$ & PMM$_2$ & SLS & GMM$_2$ & PMM$_3$ & GMM$_3$ & cor$_{3}$\\
\midrule
50 & $0.93\,(0.01)$ & $0.89\,(0.01)$ & $0.94\,(0.01)$ & $2.18\,(0.08)$ & $2.00\,(0.07)$ & 0.927 \\
100 & $0.96\,(0.01)$ & $0.95\,(0.01)$ & $0.98\,(0.01)$ & $2.78\,(0.11)$ & $2.50\,(0.09)$ & 0.972 \\
200 & $0.99\,(0.00)$ & $0.99\,(0.01)$ & $0.99\,(0.01)$ & $2.99\,(0.11)$ & $2.83\,(0.10)$ & 0.988 \\
500 & $0.99\,(0.00)$ & $0.99\,(0.00)$ & $1.00\,(0.00)$ & $3.18\,(0.12)$ & $3.15\,(0.12)$ & 0.995 \\
1000 & $0.99\,(0.00)$ & $0.99\,(0.00)$ & $1.00\,(0.00)$ & $3.37\,(0.12)$ & $3.33\,(0.12)$ & 0.998 \\
\bottomrule
\end{tabular}
\end{table}

\subsection{Feasible-sample sweep over skewness, kurtosis and tails}
\label{subsec:mc-feasible}

Table~\ref{tab:e2} follows the feasible PMM$_3$ over eleven error laws and six sample sizes against its asymptotic constant, with the Monte Carlo standard deviation of the estimated sixth standardized cumulant as the instability gauge (full grid in Supplement~S5). Three regularities emerge. First, on every law with a positive reserve the feasible degree-three estimator is more efficient than the feasible degree-two estimator at every $n\ge200$, at $n=100$ on all of them except $t_{10}$ and Gamma(8,1) --- where it trails by less than one Monte Carlo standard error ($1.16$ against $1.18$ and $0.98$ against $0.99$; Supplement~S5) --- and already at $n=50$ on the laws with $\gamma_3\ge1$: the reserve survives estimation of the moment structure. Second, the direction of the finite-sample departure from $1/g_3$ depends on the law. On the right-skewed laws the feasible $\RE$ lies \emph{above} the asymptotic constant and converges to it from above (Gamma(1): $3.9$ at $n=50$, $3.2$ at $n=2000$, against $3.00$); on the platykurtic laws it converges from below (uniform $2.2\to3.2$; Beta(2,2) $1.17\to1.52$ against $1.54$). The plug-in coefficient vector is computed from the same residuals it weights, and on skewed laws this dependence reduces the slope error at second order, an $O(1/n)$ effect that decays slowly; the asymptotic constant, not a small-sample table value, is the number to quote for a skewed law. Third, the sixth-moment instability is real, $\operatorname{sd}(\hat\gamma_6)$ being of the order of $\gamma_6$ on the Gamma and lognormal laws for $n\le500$, but it does not translate into a loss, because $\hat m_6$ enters only through the direction of $\hat h$. On $t_5$, where the sixth moment does not exist, the feasible PMM$_3$ still returns $\RE$ between $1.02$ and $1.18$ with no convergence failures, so the violation of A2 degrades rather than breaks the estimator; on $t_7$ and $t_{10}$ the reserve is asymptotically small ($1/g_3\le1.05$) and the realized values match it.

\begin{table}[t]
\centering
\caption{\emph{Feasible} PMM$_3$ across error laws and sample sizes, design \eqref{eq:dgp}, $M=2000$; Monte Carlo relative efficiency over OLS with paired bootstrap standard errors, next to the \emph{asymptotic} constants $1/g_2$ and $1/g_3$. $\RE_2$ is the feasible PMM$_2$ at $n=2000$. $\operatorname{sd}(\hat\gamma_6)$ is the Monte Carlo standard deviation of the sixth standardized cumulant estimated from OLS residuals at $n=200$. The $t_5$ row violates A2: its sixth moment is infinite, so $\gamma_6$ and with it the population constant $1/g_3$ do not exist, and that entry is left blank rather than filled with the value the finite formula returns on a truncated moment.}
\label{tab:e2}
\scriptsize
\setlength{\tabcolsep}{2.4pt}
\begin{tabular}{@{}lcccccccc@{}}
\toprule
Error law & $1/g_2$ & $\RE_2$ & $1/g_3$ & $\RE_3$, $n{=}50$ & $n{=}200$ & $n{=}1000$ & $n{=}2000$ & $\operatorname{sd}(\hat\gamma_6)$\\
\midrule
$\mathrm{Gamma}(1,1)$ & 2.00 & $1.95\,(0.06)$ & 3.00 & $3.90\,(0.17)$ & $3.53\,(0.14)$ & $3.63\,(0.14)$ & $3.16\,(0.12)$ & 194 \\
$\mathrm{Gamma}(2,1)$ & 1.67 & $1.61\,(0.05)$ & 2.17 & $2.27\,(0.08)$ & $2.46\,(0.08)$ & $2.22\,(0.07)$ & $2.20\,(0.07)$ & 66 \\
$\mathrm{Gamma}(4,1)$ & 1.40 & $1.41\,(0.04)$ & 1.60 & $1.44\,(0.05)$ & $1.66\,(0.04)$ & $1.61\,(0.04)$ & $1.64\,(0.04)$ & 30 \\
$\mathrm{Gamma}(8,1)$ & 1.22 & $1.18\,(0.02)$ & 1.29 & $1.10\,(0.03)$ & $1.27\,(0.03)$ & $1.32\,(0.03)$ & $1.24\,(0.03)$ & 11 \\
lognormal$(0,0.5)$ & 1.63 & $1.71\,(0.05)$ & 1.99 & $2.09\,(0.07)$ & $2.30\,(0.08)$ & $2.26\,(0.07)$ & $2.25\,(0.07)$ & 285 \\
$\mathrm U(-\sqrt3,\sqrt3)$ & 1.00 & $1.00\,(0.00)$ & 3.33 & $2.18\,(0.08)$ & $2.99\,(0.11)$ & $3.37\,(0.12)$ & $3.18\,(0.12)$ & 1 \\
$\mathrm{Beta}(2,2)$ & 1.00 & $1.00\,(0.00)$ & 1.54 & $1.17\,(0.03)$ & $1.43\,(0.04)$ & $1.51\,(0.04)$ & $1.52\,(0.04)$ & 1 \\
$t_{10}$ & 1.00 & $1.00\,(0.00)$ & 1.04 & $0.90\,(0.02)$ & $1.05\,(0.01)$ & $1.03\,(0.01)$ & $1.04\,(0.01)$ & 25 \\
$t_7$ & 1.00 & $1.01\,(0.00)$ & 1.05 & $0.93\,(0.02)$ & $1.10\,(0.01)$ & $1.08\,(0.01)$ & $1.08\,(0.01)$ & 261 \\
$t_5$ & 1.00 & $1.01\,(0.01)$ & --- & $1.02\,(0.03)$ & $1.18\,(0.02)$ & $1.15\,(0.02)$ & $1.11\,(0.02)$ & 252 \\
$\mathcal N(0,1)$ & 1.00 & $1.00\,(0.00)$ & 1.00 & $0.82\,(0.02)$ & $0.98\,(0.01)$ & $1.00\,(0.00)$ & $0.99\,(0.00)$ & 2 \\
\bottomrule
\end{tabular}
\end{table}

\subsection{Stabilization of the plug-in and the degree-selection rule}
\label{subsec:mc-select}

Table~\ref{tab:e3} compares, with one and the same Newton solver, the plain plug-in PMM$_3$, the oracle-coefficient PMM$_3$ with the population $h$, three stabilized plug-ins, and the pretest-selected degree ($M=1000$). The stabilized variants are a scalar Tikhonov ridge $\hat h=(\widehat F_3+\lambda I)^{-1}\hat b$ with $\lambda=0.1\operatorname{tr}(\widehat F_3)/3$; a direction-specific ridge that penalizes only the $(3,3)$ entry of $\widehat F_3$, i.e.\ the $\hat m_6$ coordinate, by $\lambda=0.5\,\widehat F_{33}$; and a James--Stein-type shrinkage of $\hat m_6$ toward its value under the closure $\kappa_6=0$, with weight $\hat v/\{\hat v+(\hat m_6-m_6^\ast)^2\}$ and $\hat v$ the estimated variance of $\hat m_6$. The plain plug-in sits at or above the oracle-coefficient estimator on the skewed laws at every $n$ (the second-order effect of \S\ref{subsec:mc-feasible}) and within one standard error of it on the uniform law; under Gaussian errors it pays for estimating a direction that does not exist ($0.82$ against $1.00$ at $n=50$, vanishing by $n=1000$). None of the three stabilized variants improves on the plain plug-in on the skewed laws: the scalar ridge and the $\kappa_6$-shrinkage are destructive there ($\RE$ between $0.1$ and $0.7$, because they move the weight off the informative low-order coordinates), and the $\hat m_6$-ridge loses $14$--$25\%$ of the plain estimator's efficiency at every $n$; on the uniform law the shrinkage is harmless and the ridges cost. We therefore recommend no regularization of the body. The pretest rule behaves as intended: it selects $S=3$ in fewer than $3\%$ of samples at $n=50$, in $86$--$100\%$ at $n\ge200$ on the laws with a reserve, and in at most $0.3\%$ on Gaussian errors at any $n$, so the selected estimator equals PMM$_2$ where the reserve is unresolved and PMM$_3$ where it is resolved, and is never below PMM$_2$ in efficiency.

\begin{table}[t]
\centering
\caption{Stabilized plug-ins and the pretest rule ($M=1000$, design \eqref{eq:dgp}): \emph{feasible} Monte Carlo relative efficiency over OLS with paired bootstrap standard errors. ``oracle $h$'' uses the \emph{population} coefficients $h=F_3^{-1}b$ with the same solver; ``$\hat m_6$-ridge'' penalizes only the sixth-moment coordinate of $\widehat F_3$; ``select'' is the pretest-selected degree $S\in\{2,3\}$ and $\hat p_3$ the fraction of samples in which it selected $S=3$. Scalar ridge and $\kappa_6$-shrinkage are in Supplement~S5, Table~S10.}
\label{tab:e3}
\scriptsize
\setlength{\tabcolsep}{3pt}
\begin{tabular}{@{}llcccccc@{}}
\toprule
Error & $n$ & PMM$_2$ & PMM$_3$ plain & PMM$_3$ oracle $h$ & PMM$_3$ $\hat m_6$-ridge & select & $\hat p_3$\\
\midrule
\multicolumn{8}{@{}l}{$\chi^2(3)$}\\
 & 50 & $1.95\,(0.10)$ & $2.67\,(0.16)$ & $2.35\,(0.14)$ & $2.21\,(0.12)$ & $1.96\,(0.10)$ & 0.003 \\
 & 100 & $2.03\,(0.08)$ & $3.09\,(0.16)$ & $2.61\,(0.12)$ & $2.32\,(0.10)$ & $2.15\,(0.09)$ & 0.200 \\
 & 200 & $1.83\,(0.08)$ & $2.79\,(0.15)$ & $2.31\,(0.11)$ & $2.09\,(0.09)$ & $2.69\,(0.14)$ & 0.937 \\
 & 500 & $1.94\,(0.08)$ & $2.69\,(0.13)$ & $2.49\,(0.12)$ & $2.20\,(0.09)$ & $2.69\,(0.13)$ & 1.000 \\
 & 1000 & $1.98\,(0.08)$ & $2.90\,(0.14)$ & $2.70\,(0.13)$ & $2.30\,(0.10)$ & $2.90\,(0.14)$ & 1.000 \\
\midrule
\multicolumn{8}{@{}l}{$\mathrm{Gamma}(2,1)$}\\
 & 50 & $1.81\,(0.07)$ & $2.52\,(0.13)$ & $2.26\,(0.11)$ & $2.03\,(0.09)$ & $1.81\,(0.07)$ & 0.000 \\
 & 100 & $1.74\,(0.07)$ & $2.42\,(0.12)$ & $2.18\,(0.10)$ & $1.95\,(0.08)$ & $1.74\,(0.07)$ & 0.077 \\
 & 200 & $1.81\,(0.08)$ & $2.33\,(0.12)$ & $2.17\,(0.11)$ & $2.00\,(0.09)$ & $2.12\,(0.11)$ & 0.857 \\
 & 500 & $1.68\,(0.07)$ & $2.38\,(0.12)$ & $2.18\,(0.10)$ & $1.91\,(0.08)$ & $2.38\,(0.12)$ & 0.997 \\
 & 1000 & $1.54\,(0.06)$ & $2.18\,(0.10)$ & $2.04\,(0.10)$ & $1.75\,(0.07)$ & $2.18\,(0.10)$ & 1.000 \\
\midrule
\multicolumn{8}{@{}l}{$\mathrm U(-\sqrt3,\sqrt3)$}\\
 & 50 & $0.94\,(0.01)$ & $2.20\,(0.12)$ & $2.10\,(0.14)$ & $1.55\,(0.04)$ & $0.94\,(0.01)$ & 0.027 \\
 & 100 & $0.98\,(0.01)$ & $3.07\,(0.17)$ & $2.98\,(0.18)$ & $1.93\,(0.05)$ & $1.39\,(0.05)$ & 0.667 \\
 & 200 & $0.99\,(0.01)$ & $2.89\,(0.16)$ & $2.86\,(0.16)$ & $1.98\,(0.05)$ & $2.89\,(0.16)$ & 1.000 \\
 & 500 & $0.99\,(0.00)$ & $3.32\,(0.17)$ & $3.38\,(0.18)$ & $2.08\,(0.04)$ & $3.32\,(0.17)$ & 1.000 \\
 & 1000 & $0.99\,(0.00)$ & $3.60\,(0.19)$ & $3.61\,(0.19)$ & $2.15\,(0.05)$ & $3.60\,(0.19)$ & 1.000 \\
\midrule
\multicolumn{8}{@{}l}{$\mathcal N(0,1)$}\\
 & 50 & $0.95\,(0.01)$ & $0.82\,(0.02)$ & $1.00\,(0.00)$ & $0.93\,(0.01)$ & $0.95\,(0.01)$ & 0.000 \\
 & 100 & $0.95\,(0.01)$ & $0.90\,(0.02)$ & $1.00\,(0.00)$ & $0.94\,(0.01)$ & $0.95\,(0.01)$ & 0.000 \\
 & 200 & $0.98\,(0.01)$ & $0.95\,(0.01)$ & $1.00\,(0.00)$ & $0.97\,(0.01)$ & $0.98\,(0.01)$ & 0.000 \\
 & 500 & $0.99\,(0.01)$ & $0.98\,(0.01)$ & $1.00\,(0.00)$ & $0.99\,(0.01)$ & $0.99\,(0.01)$ & 0.000 \\
 & 1000 & $1.00\,(0.00)$ & $0.99\,(0.01)$ & $1.00\,(0.00)$ & $1.00\,(0.01)$ & $1.00\,(0.00)$ & 0.003 \\
\bottomrule
\end{tabular}
\end{table}

\subsection{Beyond A1: heteroskedasticity}
\label{subsec:mc-hetero}

Proposition~\ref{prop:hetero} is checked with two designs ($M=2000$; full table in Supplement~S5, Table~S11). Design~A is symmetric with an even variance function, $x\sim\mathrm U(-2.5,2.5)$, $e\mid x=\sigma(x)z$, $\sigma^2(x)=1+\tfrac12x^2$, $z\sim\mathcal N(0,1)$, so $\operatorname{Cov}(x,\sigma^2(x))=0$; Design~B is asymmetric with a monotone variance, $x\sim\mathrm U(0,5)$, $\sigma^2(x)=\exp\{0.4(x-2.5)\}$, $z$ a standardized $\chi^2(3)$, so $\operatorname{Cov}(x,\sigma^2(x))\neq0$ and $\kappa_3\neq0$. In Design~A every estimator is unbiased; OLS and unconditional PMM$_2$ share $\RE=1$, while conditional SLS (variance model $\exp(\tau_0+\tau_1x+\tau_2x^2)$, then weighted least squares) and oracle weighted least squares reach $\RE\approx1.09$ at $n=2000$, the Aitken gain the unconditional body cannot see. In Design~B OLS, conditional SLS and oracle WLS remain unbiased, the latter two with $\RE\approx1.3$, whereas unconditional PMM$_2$ carries a slope bias of about $-0.10$ that does not shrink with $n$ ($-0.109,-0.100,-0.094$ at $n=100,500,2000$) and its $\RE$ --- which is the MSE ratio of \S\ref{subsec:mc-setup} throughout, so a bias that does not shrink enters it --- collapses ($0.42,0.11,0.035$): the inconsistency of Proposition~\ref{prop:hetero}(ii). Outside A1, conditional SLS is the more general object.

\subsection{Two public data sets and a drilling-calibrated error law}
\label{subsec:real}

Theorems~\ref{thm:sls-pmm2}--\ref{thm:asym} predict the sign of the gains on real residual distributions; the population constants cannot be checked on a single data set, so we report, for each set, the residual cumulants, the estimates with paired case-bootstrap standard errors ($B=2000$), and a residual-calibrated Monte Carlo in which errors are resampled from the centered OLS residuals and the design from the observed regressor ($M=2000$), at the data's own $n$ and at $n=1000$. Details and diagnostics are in Supplement~S6.

\emph{cars} \citep{ezekiel1930methods} regresses stopping distance on speed ($n=50$); the OLS residuals are right-skewed ($\hat\gamma_3=0.89$, $\hat\gamma_4=0.89$), and the Breusch--Pagan test \citep{breusch1979simple} does not reject homoskedasticity ($p=0.07$). \emph{faithful} \citep{azzalini1990look} regresses eruption duration on waiting time ($n=272$); the residuals are nearly symmetric and platykurtic ($\hat\gamma_3=-0.14$, $\hat\gamma_4=-0.45$; $p=0.21$). Table~\ref{tab:real} reports the results. On \emph{cars} the estimated reserve is $1/\hat g_2=1.37$ and $1/\hat g_3=1.43$; the calibrated Monte Carlo gives $\RE$ of $1.30$ for PMM$_2$, $1.27$ for SLS and $1.21$ for PMM$_3$ at $n=50$, and $1.41$, $1.41$, $1.46$ at $n=1000$, so at the data's own size the degree-two gain is realized and the degree-three surcharge is not. On \emph{faithful} the degree-two channel is empty by symmetry ($1/\hat g_2=1.01$) and the platykurtic channel small ($1/\hat g_3=1.10$); the calibrated $\RE$ of PMM$_3$ is $1.04$ at $n=272$ and $1.09$ at $n=1000$, while PMM$_2$ and SLS stay at $0.99$--$1.01$, which is Theorem~\ref{thm:symm} at the scale of the data.

\begin{table}[t]
\centering
\caption{Two public regression data sets. Top: OLS residual skewness and excess kurtosis, Breusch--Pagan $p$-value, and the \emph{estimated} asymptotic efficiencies $1/\hat g_2$, $1/\hat g_3$ from residual moments. Middle: slope estimates with paired case-bootstrap standard errors ($B=2000$). Bottom: residual-calibrated Monte Carlo $\RE$ over OLS ($M=2000$) at the data's $n$ and at $n=1000$; paired bootstrap standard errors of these $\RE$ values are at most $0.04$ (Supplement~S6).}
\label{tab:real}
\footnotesize
\begin{tabular}{@{}lcc@{}}
\toprule
& \emph{cars} ($n=50$) & \emph{faithful} ($n=272$)\\
\midrule
residual $\hat\gamma_3$, $\hat\gamma_4$ & $0.89$, $0.89$ & $-0.14$, $-0.45$\\
Breusch--Pagan $p$ & $0.07$ & $0.21$\\
estimated $1/\hat g_2$, $1/\hat g_3$ & $1.37$, $1.43$ & $1.01$, $1.10$\\
\midrule
slope, OLS (SE) & $3.93\,(0.41)$ & $0.0756\,(0.0019)$\\
slope, PMM$_2$ (SE) & $3.44\,(0.48)$ & $0.0759\,(0.0020)$\\
slope, PMM$_3$ (SE) & $3.23\,(0.50)$ & $0.0760\,(0.0020)$\\
slope, SLS (SE) & $3.60\,(0.40)$ & $0.0756\,(0.0019)$\\
\midrule
calibrated $\RE$, PMM$_2$: $n_{\text{data}}$ / $1000$ & $1.30$ / $1.41$ & $0.99$ / $1.01$\\
calibrated $\RE$, SLS & $1.27$ / $1.41$ & $0.99$ / $1.00$\\
calibrated $\RE$, PMM$_3$ & $1.21$ / $1.46$ & $1.04$ / $1.09$\\
\bottomrule
\end{tabular}
\end{table}

\emph{A drilling-calibrated error law.} Field regressions of rate of penetration on weight on bit are a standard engineering use of least squares with skewed, heavy-tailed residuals. We use the processed 1-ft Pason log of Utah FORGE well 58-32, published under CC~BY~4.0 \citep{podgorney2018forge}. Raw residuals of such a regression violate A1: within a depth window their lag-one autocorrelation is $0.35$--$0.80$, so the data are \emph{not} used as a regression sample. Instead the empirical error law is extracted: in each of fourteen 500-ft windows on the dominant rotary-speed plateau, OLS of rate of penetration on weight on bit is followed by AR(1) prewhitening of the residuals, and the standardized innovations are pooled ($n=3266$; window table in Supplement~S6). The pooled law is right-skewed and heavy-tailed ($\hat\gamma_3=0.83$, $\hat\gamma_4=8.2$, $\hat\gamma_6=375$), with estimated asymptotic efficiencies $1/\hat g_2=1.07$ and $1/\hat g_3=1.17$; after a three-inter-quartile-range Tukey fence that removes $0.9\%$ of the innovations the law is mildly skewed ($\hat\gamma_3=0.18$, $\hat\gamma_4=1.4$) with $1/\hat g_2=1.01$ and $1/\hat g_3=1.14$. Errors resampled from either law drive the design \eqref{eq:dgp} (Table~\ref{tab:forge}). On the pooled law PMM$_3$ reaches $\RE$ of $1.29$, $1.34$ and $1.25$ at $n=100$, $250$ and $1000$, PMM$_2$ and SLS $1.08$--$1.13$, and GMM$_3$ $1.06$, $1.29$ and $1.26$, the same small-sample lag as in Table~\ref{tab:e1-asym}; on the fenced law the degree-two estimators have nothing to read ($\RE\approx1.00$) while PMM$_3$ keeps $1.09$--$1.16$. On real drilling innovations, then, the second-order span is nearly empty and the cubic direction holds a $10$--$16\%$ reserve that a feasible estimator realizes from $n=100$.

\begin{table}[t]
\centering
\caption{Residual-calibrated Monte Carlo with the FORGE 58-32 innovation law ($M=2000$, design \eqref{eq:dgp}): \emph{feasible} $\RE$ over OLS with paired bootstrap standard errors. ``pooled'' is the law of all $3266$ prewhitened innovations; ``fenced'' removes $0.9\%$ beyond a 3-IQR Tukey fence.}
\label{tab:forge}
\scriptsize
\setlength{\tabcolsep}{3pt}
\begin{tabular}{@{}llccccc@{}}
\toprule
law & $n$ & PMM$_2$ & SLS & PMM$_3$ & GMM$_3$ & $1/\hat g_2$, $1/\hat g_3$\\
\midrule
pooled ($n_{\text{law}}=3266$)
 & 100 & $1.11\,(0.02)$ & $1.12\,(0.02)$ & $1.29\,(0.03)$ & $1.06\,(0.04)$ & $1.07,\ 1.17$ \\
 & 250 & $1.13\,(0.02)$ & $1.13\,(0.02)$ & $1.34\,(0.03)$ & $1.29\,(0.03)$ & $1.07,\ 1.17$ \\
 & 1000 & $1.08\,(0.01)$ & $1.08\,(0.01)$ & $1.25\,(0.02)$ & $1.26\,(0.03)$ & $1.07,\ 1.17$ \\
\midrule
fenced ($n_{\text{law}}=3237$)
 & 100 & $1.01\,(0.01)$ & $1.00\,(0.01)$ & $1.09\,(0.02)$ & $1.00\,(0.02)$ & $1.01,\ 1.14$ \\
 & 250 & $1.00\,(0.01)$ & $1.00\,(0.01)$ & $1.11\,(0.02)$ & $1.08\,(0.02)$ & $1.01,\ 1.14$ \\
 & 1000 & $1.01\,(0.01)$ & $1.01\,(0.01)$ & $1.16\,(0.02)$ & $1.16\,(0.02)$ & $1.01,\ 1.14$ \\
\bottomrule
\end{tabular}
\end{table}

\section{Discussion}
\label{sec:discussion}

\subsection{One family read at one degree}

Theorems~\ref{thm:sls-pmm2} and~\ref{thm:nested} place SLS and PMM on a single axis indexed by the polynomial degree: the estimating family is generated by $\{\xi,\dots,\xi^S\}$, the optimal coefficients solve $F_Sh=b$, and enlarging the basis cannot increase the variance ratio. Optimally weighted SLS is the $S=2$ instance, with $F_2$ as its weighting operator and one influence function shared with degree-two PMM; it dominates a same-span GMM with a sub-optimal weight and equals the optimal-weight GMM at convergence. What the simulations establish is the equivalence itself, the fingerprints of one estimating equation, not of two methods that perform alike.

\subsection{The reserve as a property of the span}

The headroom above $S=2$ within the polynomial class is the cubic moment direction that the second-order span does not contain; under symmetry it is the only informative direction, and the second-order estimators reduce to OLS for the slope in the setting of Theorem~\ref{thm:sls-pmm2}. It is a property of the enlarged span, reached asymptotically by any efficient estimator on it, GMM$_3$ included; the PMM normal system is a low-dimensional route to it that pays at small $n$. It is not a semiparametric bound: in Case~2 of \citet{kim2019semiparametric} the efficient score uses the error log-density and its bound is lower still. The claim of this paper is bounded accordingly: within the polynomial moment class, degree three is where the second-order span stops and the next informative direction begins, and Tables~\ref{tab:e1-asym}--\ref{tab:forge} show that a feasible estimator collects that direction at sample sizes of practical interest.

\subsection{Orthogonal axes of extension}
\label{subsec:axes}

The degree axis is one of several along which the normal-system core can be moved. The basis need not consist of integer powers, and one may change the basis family, add regularization, or make the body conditional on covariates as \S\ref{subsec:hetero} suggests. These are orthogonal extensions: they vary the basis or the conditioning while holding the degree machinery fixed, whereas this paper varies the degree while holding the integer-power basis fixed. The theorems are statements about the degree axis, not about a software implementation or a future basis family.

\subsection{Limitations: five boundaries}
\label{subsec:limitations}

Five boundaries delimit the claims. First, the closed forms are degree-specific: $g_2$ and $g_3$ are compact, but for $S\ge4$ the ratio is the solution of a larger normal system, and the ordering for higher degrees rests on Theorem~\ref{thm:nested} rather than on explicit constants. Second, the construction is moment-based: A2 must hold, the apparatus is silent for laws without the required moments, and Table~\ref{tab:e2} shows that at the edge ($t_5$) the feasible estimator degrades gracefully but no longer targets a defined constant; a characteristic-function reformulation is the appropriate tool there. A second route keeps the polynomial form but leaves the integer-power basis: a signed-parity fractional-power family indexed by a continuous parameter, for which a degree-two variance-reduction coefficient has been derived in closed form \citep{zabolotnii2026patp}, trades finite integer moments of order $2S$ for fractional ones. Whether it recovers any part of the degree-three reserve under A1, and at what conditioning cost, is open; we did not test it here. Third, $1/g_S$ is a within-polynomial-class frontier, not a semiparametric bound over all procedures. For bounded-support laws such as the uniform the regular location Fisher information is not the relevant finite constant, for Gamma-shape laws at or below shape two it diverges at the boundary, and where the regularity conditions of \citet{bickel1982adaptive} and \citet{kim2019semiparametric} hold, adaptive and density-based procedures target information beyond any finite-degree span. Table~\ref{tab:e1-sym} therefore establishes a separation between PMM$_3$ and the degree-two family, not superiority over broader semiparametric procedures, which we did not implement. Fourth, the SLS $=$ PMM$_2$ identity is asymptotic and specific to linear regression under A1: at finite $n$ the two estimators differ at order $O(1/n)$, and outside A1 Proposition~\ref{prop:hetero} governs. Fifth, the reserve is a large-sample statement with a law-dependent finite-sample behaviour. Proposition~\ref{prop:feasible} gives the feasible estimator the oracle variance to first order, but the second-order term is not small at $n\le200$: on skewed laws it \emph{raises} the realized efficiency above $1/g_3$, on platykurtic laws it lowers it, and $\operatorname{sd}(\hat\gamma_6)$ is of the order of $\gamma_6$ for $n\le500$ on the Gamma and lognormal laws. No regularization of the body we tried improved on the plain plug-in; the pretest of \S\ref{subsec:feasible} is the tool we can recommend, and its cost is that it forgoes the reserve at $n\le100$ on laws where the plain PMM$_3$ would already have collected part of it.

\section{Conclusion}
\label{sec:conclusion}

For linear regression with i.i.d., conditionally homoskedastic errors, optimally weighted second-order least squares and degree-two polynomial maximization are one estimating equation: the same optimal combination of $\{e,e^2-\sigma^2\}$, the Kunchenko body as the SLS weighting operator, one influence function, and the common slope variance $c_2g_2/N$. The identity stops exactly where the optimal slope combination starts to depend on the design. Within the polynomial moment class the next degree holds a reserve, quantified in closed form, that the second-order span cannot read and that any efficient estimator on the degree-three span can; the feasible plug-in estimator attains it to first order under sixth moments, a bootstrap pretest decides when to use it, and it survives on eleven simulated laws, two public data sets, and a drilling-calibrated error law. Three directions follow: explicit constants and the estimation trade-off for $S\ge4$; the nonlinear-regression case, where the instrument $\dot\mu_i(\theta)$ replaces the design and Corollary~\ref{cor:re}'s invariance is lost; and a conditional body $F_S(x)$ that would carry the degree axis into the heteroskedastic setting where conditional SLS now stands alone.

\section*{Supplementary material}
A supplementary document (PDF) accompanies the paper: S1 related estimators; S2 full proofs of Theorem~\ref{thm:sls-pmm2}, Proposition~\ref{prop:hetero} and Proposition~\ref{prop:blind}; S3 proof of Proposition~\ref{prop:feasible}; S4 the Lean~4 development; S5 complete Monte Carlo tables; S6 data-processing details.

\section*{Data and code availability}
The Lean~4 proofs, all R scripts, committed result files, and step-by-step reproduction instructions are released at \url{https://github.com/SZabolotnii/PMM-SLS-BRIDGE-code-supplement}. The \textsf{EstemPMM} package is on CRAN at \url{https://cran.r-project.org/package=EstemPMM}. The \emph{cars} and \emph{faithful} data ship with R; the FORGE well 58-32 log is available from the Geothermal Data Repository under \url{https://doi.org/10.15121/1495411}.

%% file: supplement_aism.tex
\begin{abstract}
This document accompanies the main paper and is organized as follows. Section~S1 discusses related estimators omitted from the main text. Section~S2 gives the full proofs of Theorem~1, Corollary~1, Proposition~1 and Proposition~2 of the main paper. Section~S3 proves Proposition~3 (the feasible plug-in estimator). Section~S4 describes the Lean~4 development and maps each formal statement to the main text. Section~S5 collects the complete Monte Carlo tables, including the pilot tables of the original submission and the full grids of the revision experiments. Section~S6 documents the two data examples and the drilling-calibrated error law. Equation, table and theorem numbers with the prefix ``S'' refer to this document; unprefixed numbers refer to the main paper. Notation and Assumptions A1--A4 are those of the main paper.
\end{abstract}

\section{Related estimators}
\label{sec:s1}

PMM and SLS sit inside a broader family of methods that exploit moment information without committing to a full likelihood. Quadratic inference functions \citep{qu2000qif} also optimize a quadratic form generated by basis components, but for correlated-data estimating equations rather than along the residual-power degree axis studied in the main paper. Huber's $M$-estimators \citep{huber1981robust} bound the influence of outliers through a fixed $\psi$-function, trading a small efficiency loss at the model for robustness against contamination; an $M$-estimator fixes its $\psi$-function a priori, whereas efficient GMM, PMM (through the body $F_S$) and SLS (through $W_i$) read the optimal weighting from the estimated moment structure of the data. Hosking's L-moments \citep{hosking1990lmoments} reparameterize a distribution through linear combinations of order statistics, giving low-variance shape summaries that are mainly descriptive rather than tuned to a regression slope. Robustness and descriptive parsimony, rather than slope efficiency under a known moment structure, are the aims of these two, so they complement rather than compete with the estimators compared in the main paper. The higher-order-statistics tradition \citep{mendel1991tutorial} shares PMM's premise that cumulants beyond the second carry usable information, but works chiefly in the spectral (polyspectral) domain.

Within the estimating-function literature, the projection fact that a minimum-variance combination of a given set of moment functions is unique is classical \citep{godambe1960optimum,crowder1987linear,godambe1989extension,heyde1997quasi}, and its GMM and optimal-instrument forms are in \citet{hansen1982large}, \citet{chamberlain1987asymptotic} and \citet{newey1993efficient}. The main paper claims no novelty for this fact; what it adds is the identification of the two concrete constructions on the span $\{e,e^2-\sigma^2\}$ and the quantification of the degree-three span above it.

\section{Full proofs for Section~3 and Section~4.4 of the main paper}
\label{sec:s2}

Throughout, $m(e)=(e,e^2-\sigma^2)^{\!\top}$, $F_2=\E[m(e)m(e)^{\!\top}]$, $b=(1,0)^{\!\top}$, $\rho_i=(y_i-\mu_i,\,y_i^2-\mu_i^2-\sigma^2)^{\!\top}$, and $T_i=\bigl(\begin{smallmatrix}1&0\\2\mu_i&1\end{smallmatrix}\bigr)$, so that $\rho_i=T_im(e_i)$.

\subsection{Proof of Theorem~1}

\emph{Step 1 (both reduce to a quadratic program in $\mathcal M$).}
Any estimator built from $\mathcal M=\operatorname{span}\{m_1,m_2\}$ uses an estimating equation $\sum_ix_i\,a^{\!\top}m(e_i)=0$ for some $a\in\R^2$. Standard M-estimation theory under A1--A4 gives the slope influence function $\psi_a(e,x)=-Q_a^{-1}x\,a^{\!\top}m(e)$ with Jacobian factor $Q_a=\E[xx^{\!\top}]\,(a^{\!\top}b)$, so that the slope asymptotic variance factor is
\begin{equation}
V(a)=c_2\,\frac{a^{\!\top}F_2a}{(a^{\!\top}b)^2}.
\label{eq:s-vfactor}
\end{equation}
The Cauchy--Schwarz inequality in the inner product $\langle u,v\rangle=u^{\!\top}F_2v$ gives $(a^{\!\top}b)^2=\langle a,F_2^{-1}b\rangle^2\le(a^{\!\top}F_2a)(b^{\!\top}F_2^{-1}b)$, hence $V(a)\ge c_2/(b^{\!\top}F_2^{-1}b)$, with equality iff $a\propto F_2^{-1}b$. The variance-minimizing combination over $\mathcal M$ is therefore unique up to scale and equal to $h^\star=F_2^{-1}b$. This is the PMM degree-two construction: minimizing the variance of the degree-$S$ stochastic polynomial estimating equation subject to unit location sensitivity is the constrained quadratic program $\min_hh^{\!\top}F_Sh$ subject to $h^{\!\top}b=1$, whose Lagrange stationarity condition is $F_Sh\propto b$ \citep{kunchenko2002polynomial}; at $S=2$ this is the normal system of the theorem.

\emph{Step 2 (SLS realizes the same optimum).}
The SLS first-order conditions for the slope are $\sum_i\dot\rho_i^{\!\top}W_i\rho_i=0$ with $\dot\rho_i=\partial\rho_i/\partial\beta_1$. \citet{wang2008second} show that $W_i^\star=\{\E[\rho_i\rho_i^{\!\top}\mid x_i]\}^{-1}$ minimizes the asymptotic variance within the class of SLS weights; \citet{kim2012efficiency} show that the resulting estimator is efficient in the model that fixes the first two conditional moments. Since $\rho_i=T_im(e_i)$ with $\det T_i=1$, the conditional second moment factorizes as $\E[\rho_i\rho_i^{\!\top}\mid x_i]=T_iF_2T_i^{\!\top}$ under A1, so
\begin{equation}
W_i^\star=T_i^{-\top}F_2^{-1}T_i^{-1}.
\label{eq:s-wstar}
\end{equation}
The slope Jacobian is $\dot\rho_i=-x_i(1,2\mu_i)^{\!\top}$, and
\begin{equation}
T_i^{-1}\begin{pmatrix}1\\2\mu_i\end{pmatrix}=\begin{pmatrix}1&0\\-2\mu_i&1\end{pmatrix}\begin{pmatrix}1\\2\mu_i\end{pmatrix}=\begin{pmatrix}1\\0\end{pmatrix}=b,
\qquad\text{so}\qquad (1,2\mu_i)\,T_i^{-\top}=b^{\!\top}.
\label{eq:s-tcancel}
\end{equation}
Substituting \eqref{eq:s-wstar}, \eqref{eq:s-tcancel} and $\rho_i=T_im(e_i)$,
\[
\dot\rho_i^{\!\top}W_i^\star\rho_i=-x_i\,(1,2\mu_i)\,T_i^{-\top}F_2^{-1}T_i^{-1}T_im(e_i)=-x_i\,(F_2^{-1}b)^{\!\top}m(e_i).
\]
The slope estimating equation therefore collapses to $\sum_ix_i(F_2^{-1}b)^{\!\top}m(e_i)=0$, up to the scalar normalization of Step~1. SLS selects $a=h^\star=F_2^{-1}b$, and its optimal weighting is the PMM body inverse, $W^\star\propto F_2^{-1}$.

\emph{Step 3 (identical influence function and variance).}
With a common $h^\star$ the influence functions of Step~1 coincide and equal $\psi(e,x)=-Q^{-1}x(h^\star)^{\!\top}m(e)$ with $Q=\E[xx^{\!\top}]\,b^{\!\top}h^\star$; first-order asymptotic equivalence yields a common limiting Gaussian law with slope variance factor $V(h^\star)=c_2/(b^{\!\top}F_2^{-1}b)$. Standardizing $e$ to unit variance and inverting the $2\times2$ matrix $F_2$ with $b=(1,0)^{\!\top}$, $b^{\!\top}F_2^{-1}b=(F_2^{-1})_{11}=(F_2)_{22}/\det F_2$. In standardized cumulants $\det F_2=\sigma^6[(2+\gamma_4)-\gamma_3^2]$ and $(F_2)_{22}=\sigma^4(2+\gamma_4)$, so
\begin{equation}
g_2=\frac{1}{\sigma^2\,b^{\!\top}F_2^{-1}b}=1-\frac{\gamma_3^2}{2+\gamma_4},
\label{eq:s-g2}
\end{equation}
and $V(h^\star)=c_2g_2$; dividing by $N$ gives the common asymptotic variance $c_2g_2/N$. Since $\gamma_3^2\ge0$ and $2+\gamma_4=\Var(e^2)/\sigma^4>0$ for any non-degenerate law, $0<g_2\le1$ with $g_2=1$ iff $\gamma_3=0$. \qed

\subsection{Proof of Corollary~1 (design invariance)}

The conditional second-moment matrix of $\rho_i$ has determinant $\Delta=\sigma^2\mu_4-\sigma^6-\kappa_3^2=\sigma^6(2+\gamma_4-\gamma_3^2)$, and the slope-information quadratic form $q=g_{\beta_0}^{\!\top}\Sigma_i^{-1}g_{\beta_0}=(\mu_4-\sigma^4)/\Delta$; in both expressions the $\mu_i$ and $\mu_i^2$ terms cancel identically, which is the algebraic content of \eqref{eq:s-tcancel} and is machine-checked (Section~S4). The slope is then orthogonal to the intercept and to $\sigma^2$ for a centered regressor, and a direct computation gives $\Var(\hat\beta_1)=\sigma^2g_2/(Nv_x)$ with $v_x=\Var(x)$, so that $v_x$ cancels against the OLS variance $\sigma^2/(Nv_x)$ and the slope attains exactly $\RE=1/g_2$. For centered $\chi^2(3)$ ($\gamma_3^2=8/3$, $\gamma_4=4$) and centered $\mathrm{Gamma}(2,1)$ ($\gamma_3^2=2$, $\gamma_4=3$) this gives $1/g_2=9/5=1.80$ and $5/3=1.667$; both evaluations are also checked in Lean. An independent asymptotic sandwich computation for the $\chi^2(3)$ case returned $1.798$. \qed

\subsection{Proof of Proposition~1 (scope of the equivalence)}

The factorization $\E[\rho_i\rho_i^{\!\top}\mid x_i]=T_iF_2(x_i)T_i^{\!\top}$ is immediate from $\rho_i=T_im(e_i)$ and $\E[m(e_i)m(e_i)^{\!\top}\mid x_i]=F_2(x_i)$. In Step~2 above the maps $T_i$ cancel against $W_i^\star=T_i^{-\top}F_2(x_i)^{-1}T_i^{-1}$ to leave the SLS vector field on the left of
\[
\sum_ix_i\,\bigl(F_2(x_i)^{-1}b\bigr)^{\!\top}m(e_i)=0
\qquad\text{versus}\qquad
\sum_ix_i\,\bigl(F_2^{-1}b\bigr)^{\!\top}m(e_i)=0,
\]
the unconditional degree-two PMM equation on the right; the two are the same vector field iff $F_2(x_i)^{-1}b$ does not depend on $x_i$, in particular whenever the conditional body is constant (A1), but more generally whenever the first column of $F_2(x)^{-1}$ is $x$-invariant.

(i) Under symmetry $\kappa_3(x)\equiv0$ makes every $F_2(x)$ diagonal, so $F_2(x)^{-1}b=(1/\sigma^2(x),0)^{\!\top}$ and the conditional SLS slope equation is $\sum_ix_ie_i/\sigma^2(x_i)=0$, the weighted least squares normal equation, whose asymptotic variance is the Aitken bound, strictly below the OLS variance unless $\sigma^2(x)$ is constant. The unconditional body $F_2=\E[F_2(x)]$ is also diagonal, so $F_2^{-1}b=(1/\bar\sigma^2,0)^{\!\top}$ and the unconditional PMM$_2$ (and constant-weight SLS) slope equation is $\sum_ix_ie_i=0$, ordinary least squares.

(ii) The unconditional-body slope estimating function is $\sum_ix_i\,(h_1e_i+h_2(e_i^2-\bar\sigma^2))$ with $h=F_2^{-1}b$. Its population mean is $h_1\E[xe]+h_2\E[x(e^2-\bar\sigma^2)]=h_2\,\E[x(\sigma^2(x)-\bar\sigma^2)]=h_2\operatorname{Cov}(x,\sigma^2(x))$, using $\E[xe]=\E[x\E[e\mid x]]=0$ and $\E[xe^2]=\E[x\sigma^2(x)]$. For asymmetric errors $h_2\propto-\kappa_3\neq0$, so the mean is non-zero whenever $\operatorname{Cov}(x,\sigma^2(x))\neq0$, giving an asymptotic bias. The same computation with $\bar\sigma^2$ replaced by a constant $\sigma^2$ shows that the constant-$\sigma^2$ SLS second residual has conditional mean $\sigma^2(x)-\sigma^2\neq0$; restoring $\E[\rho_2\mid x]=0$ requires the variance model $\sigma^2(x;\tau)$ of \citet{wang2008second}. \qed

\subsection{Proof of Proposition~2 (symmetric-kurtosis blindness of the second-order span)}

The optimally weighted SLS slope first-order condition is $\sum_i(\partial_\beta\rho_i)^{\!\top}W_i\rho_i=0$ with $\partial_\beta\rho_i=-x_i(1,2\mu_i)^{\!\top}$. By \eqref{eq:s-tcancel} this reduces to $\sum_ix_i\,(F_2^{-1}b)^{\!\top}m(e_i)=0$, the degree-two PMM equation with optimal combination $a=F_2^{-1}b$. The off-diagonal of $F_2$ is $\kappa_3=\mu_3$. Under symmetry $\mu_3=0$, so $F_2$ is diagonal and $a=(1/\sigma^2,0)^{\!\top}$: the optimal combination places zero weight on the squared residual and the slope condition collapses to $\sum_ix_ie_i=0$, the OLS normal equation.

The conditional cross-moment is not the mechanism: $\E[\rho_{i1}\rho_{i2}\mid x_i]=2\mu_i\sigma^2+\mu_3$ does not vanish at $\mu_3=0$, so $W_i$ is not block-diagonal. Rather, the optimal weight annihilates $\rho_2$ in the slope direction, $W_i^\star\partial_\beta\rho_i\propto T_i^{-\top}F_2^{-1}b=(1/\sigma^2,0)^{\!\top}$, whose second component is zero. No even cumulant $\gamma_4,\gamma_6,\dots$ enters, because the centered basis $m=(e,e^2-\sigma^2)$ contains no monomial above $e^2$. Hence the SLS slope variance equals the OLS variance, $g_2=1$, independently of $\gamma_4$. \qed

\subsection{Proof details for Theorem~4}

With $\sigma=1$ the central moments are $m_2=1$, $m_3=\gamma_3$, $m_4=\gamma_4+3$, $m_5=\gamma_5+10\gamma_3$, $m_6=\gamma_6+15\gamma_4+10\gamma_3^2+15$, and the body entries are $F_{jk}=m_{j+k}-m_jm_k$ with $m_1=0$: $F_{11}=1$, $F_{12}=\gamma_3$, $F_{13}=\gamma_4+3$, $F_{22}=m_4-1=\gamma_4+2$, $F_{23}=m_5-m_2m_3=\gamma_5+9\gamma_3$, $F_{33}=m_6-m_3^2=\gamma_6+15\gamma_4+9\gamma_3^2+15$; the sensitivity vector is $b=(1,2m_1,3m_2)^{\!\top}=(1,0,3)^{\!\top}$. Expanding $\det F_3$ and $b^{\!\top}\operatorname{adj}(F_3)b=\operatorname{adj}_{11}+6\operatorname{adj}_{13}+9\operatorname{adj}_{33}$ gives the polynomials $N$ and $D$ of the main text; the two identities, together with $D-N$, the dependence of $D$ on $\gamma_5$ and $\gamma_6$, and the symmetric factorization $D=(\gamma_4+2)(6+9\gamma_4+\gamma_6)$, $D-N=\gamma_4^2(\gamma_4+2)$, are proved in Lean by \texttt{ring}-normalization (Section~S4). \qed

\section{Proof of Proposition~3 (feasible plug-in estimator)}
\label{sec:s3}

\emph{Setting.} Under A1--A4, let $\hat\beta^{(0)}$ be the OLS estimator, $\hat e_i=y_i-x_i^{\!\top}\hat\beta^{(0)}$, $\hat\alpha_k=N^{-1}\sum_i\hat e_i^k$ ($k\le2S$), $\alpha_k=\E[e^k]$, $\hat h=h(\hat\alpha)$ and $h=h(\alpha)$, where $h(\alpha)=F_S(\alpha)^{-1}b(\alpha)$ with $F_S(\alpha)_{jk}=\alpha_{j+k}-\alpha_j\alpha_k$ and $b(\alpha)_k=k\alpha_{k-1}$ ($\alpha_0=1$). Write $m_S(e)=(e,\dots,e^S)^{\!\top}$, $\bar m_S(e)=m_S(e)-\alpha_{1:S}$, and $z_i=x_i-\bar x$ for the centered regressor (a scalar in the model of the main paper; the argument is coordinatewise in general). The feasible slope estimator $\hat\beta_{S,1}$ solves, jointly with the intercept equation,
\begin{equation}
\Psi_N(\beta,\hat h)=\frac1N\sum_{i=1}^Nz_i\,\hat h^{\!\top}\bigl(m_S(y_i-x_i^{\!\top}\beta)-\hat\alpha_{1:S}\bigr)=\frac1N\sum_{i=1}^Nz_i\,\hat h^{\!\top}m_S(y_i-x_i^{\!\top}\beta)=0,
\label{eq:s-feasible}
\end{equation}
where the second equality holds identically in $\beta$ because $\sum_iz_i=0$.

\emph{Step 1 (the estimated means drop out).} Equation \eqref{eq:s-feasible} shows that $\hat\alpha_{1:S}$ does not enter the slope equation at all; the plug-in enters only through $\hat h$. This is exact, not asymptotic, and is the reason the intercept is required in A4.

\emph{Step 2 (consistency of $\hat h$ under $2S$-th moments).} Expand $\hat e_i^k=(e_i-x_i^{\!\top}\delta)^k$ with $\delta=\hat\beta^{(0)}-\beta=O_p(N^{-1/2})$:
\[
\hat\alpha_k-\alpha_k=\Bigl(\frac1N\sum_ie_i^k-\alpha_k\Bigr)+\sum_{j=1}^k\binom kj(-1)^j\,\frac1N\sum_ie_i^{k-j}(x_i^{\!\top}\delta)^j .
\]
The first bracket is $o_p(1)$ by the law of large numbers under $\E|e|^{2S}<\infty$ (A2), since $k\le2S$. For the $j$-th remainder, $|N^{-1}\sum_ie_i^{k-j}(x_i^{\!\top}\delta)^j|\le\|\delta\|^jN^{-1}\sum_i|e_i|^{k-j}\|x_i\|^j$, and by independence of $e$ and $x$ (A1) and H\"older, $\E[|e|^{k-j}\|x\|^j]\le(\E|e|^{2S})^{(k-j)/(2S)}(\E\|x\|^{2S})^{j/(2S)}<\infty$ by A2 and A4, so the average is $O_p(1)$ and the term is $O_p(N^{-j/2})$. Hence $\hat\alpha_k\to_p\alpha_k$ for every $k\le2S$. On the open set $\{\alpha:\det F_S(\alpha)>0\}$ the map $\alpha\mapsto h(\alpha)$ is a rational function (Cramer's rule) and therefore continuous; A3 places $\alpha$ in this set, so $\hat h\to_ph$ by the continuous mapping theorem. A3 also makes $h$ the unique solution of the normal system, so the selected polynomial direction is well defined.

\emph{Step 3 (the plug-in does not affect the first-order law).} Let $\Psi_N(\beta,h)$ denote \eqref{eq:s-feasible} with the population $h$. At the true $\beta$,
\[
\Psi_N(\beta,\hat h)=\Psi_N(\beta,h)+(\hat h-h)^{\!\top}U_N,\qquad U_N=\frac1N\sum_{i=1}^Nz_i\,m_S(e_i).
\]
Under A1, $\E[z_im_S(e_i)]=\E[z_i]\,\E[m_S(e_i)]=0$ for the population-centered regressor, and the sample centering changes $U_N$ by $-\bar z\,N^{-1}\sum_im_S(e_i)=O_p(N^{-1/2})\cdot O_p(1)$; with finite second moments of $z_im_S(e_i)$, which hold under A2 and A4 ($\E[z^2e^{2k}]=\E[z^2]\E[e^{2k}]<\infty$ for $k\le S$), the central limit theorem gives $U_N=O_p(N^{-1/2})$. Hence $(\hat h-h)^{\!\top}U_N=o_p(1)\cdot O_p(N^{-1/2})=o_p(N^{-1/2})$: the estimating function with the plug-in coefficients differs from the oracle estimating function by a term of smaller order than the oracle function's own fluctuation. This is the standard situation of an estimated nuisance quantity whose expected derivative is zero \citep[Section~6.1]{newey1994large}; here it is exact rather than approximate because the estimating function is linear in $h$.

\emph{Step 4 (Jacobian and conclusion).} The derivative of $\Psi_N(\beta,\hat h)$ with respect to the slope at the truth is $-N^{-1}\sum_iz_ix_{i}\,\hat h^{\!\top}m_S'(e_i)$, where $m_S'(e)=(1,2e,\dots,Se^{S-1})^{\!\top}$; by the law of large numbers, A1 and $\hat h\to_ph$ it converges in probability to $-\Var(x)\,h^{\!\top}b$ with $b=\E[m_S'(e)]=(1,2\alpha_1,\dots,S\alpha_{S-1})^{\!\top}$, and $h^{\!\top}b=b^{\!\top}F_S^{-1}b>0$ by A3. A uniform law of large numbers on a neighbourhood of $\beta$ (available under A2 and A4, since $m_S$ is a polynomial) and the usual M-estimation expansion \citep[Theorems~2.1 and~3.1]{newey1994large} then give consistency of $\hat\beta_{S,1}$ and
\begin{align*}
\sqrt N(\hat\beta_{S,1}-\beta_1)&=\bigl(\Var(x)\,h^{\!\top}b\bigr)^{-1}\sqrt N\,h^{\!\top}U_N+o_p(1)\\
&\xrightarrow{d}\mathcal N\Bigl(0,\ \frac{\Var(x)\,h^{\!\top}F_Sh}{\Var(x)^2(h^{\!\top}b)^2}\Bigr)=\mathcal N\Bigl(0,\ \frac{1}{\Var(x)\,b^{\!\top}F_S^{-1}b}\Bigr),
\end{align*}
using $h^{\!\top}F_Sh=h^{\!\top}b=b^{\!\top}F_S^{-1}b$. Since $c_2g_S=\sigma^2/\Var(x)\cdot1/(\sigma^2b^{\!\top}F_S^{-1}b)$, this is $\mathcal N(0,c_2g_S)$, the oracle law. A single Newton step from the $\sqrt N$-consistent OLS start yields the same expansion; iterating and re-estimating $\hat\alpha$ at the updated $\hat\beta$ changes $\hat h$ by $o_p(1)$ and hence nothing at first order.

\emph{Remark (where the $4S$-th moment enters).} Step~2 used only $\hat h-h=o_p(1)$. If one wants a $\sqrt N$-rate and a limit law for $\hat h$ or for $\hat g_S=1/(\hat\alpha_2\,\hat b^{\!\top}\widehat F_S^{-1}\hat b)$, for instance to attach a confidence interval to the estimated reserve $1/\hat g_3-1/\hat g_2$ of the selection rule, the central limit theorem for $\hat\alpha_{2S}$ requires $\Var(e^{2S})<\infty$, i.e.\ $\E[e^{4S}]<\infty$ (the twelfth moment for $S=3$), and the delta method applies to the rational map $h(\cdot)$. The bootstrap used in the pretest of the main paper estimates the same dispersion without this moment condition entering the first-order law of $\hat\beta_S$. The second-order term of the variance of $\hat\beta_S$, which is what the finite-sample tables of Section~S5 measure, likewise depends on higher moments; it is not characterized here. \qed

\section{The Lean~4 development}
\label{sec:s4}

The formalization is a single Lean~4 file (\texttt{PmmSls.lean}, Lean toolchain \texttt{v4.30.0}, Mathlib as pinned in the repository's \texttt{lake-manifest.json}) in the code supplement; \texttt{lake build} checks it. Its statements are about real numbers and polynomial identities, not about random variables: the moments enter as real parameters, and the hypotheses are the algebraic counterparts of A2--A3 (positivity of the variance, non-vanishing or positivity of the body factor). Table~\ref{tab:s-lean} maps each formal statement to the main paper. Nothing about the operator identity $W^\star\propto F_2^{-1}$, the influence functions, the asymptotic variance arguments, the nesting theorem, the heteroskedastic boundary or Proposition~3 is formalized; these are the analytic proofs of Section~S2, Section~S3 and the main text.

\begin{table}[ht]
\centering
\caption{Formal statements in \texttt{PmmSls.lean} and their counterparts in the main paper.}
\label{tab:s-lean}
\scriptsize
\setlength{\tabcolsep}{3pt}
\begin{tabular}{@{}p{4.0cm}p{5.3cm}l@{}}
\toprule
Lean statement & Content & Main paper\\
\midrule
\texttt{g2\_forms} & equivalence of the central-moment form $1-\kappa_3^2/(\sigma^2(\kappa_4+2\sigma^4))$ and the standardized form $1-\gamma_3^2/(2+\gamma_4)$ of $g_2$ & eq.~(4), Thm~1\\
\texttt{g2\_le\_one} & $g_2\le1$ when $\sigma^2>0$ and $2\sigma^4+\kappa_4>0$ & Thm~1\\
\texttt{g2\_eq\_one\_iff\_symmetric} & $g_2=1\iff\kappa_3=0$ under the same hypotheses & Thm~1, Prop.~2\\
\texttt{det\_cond\_cov\_mu\_free} & $\det\E[\rho\rho^{\!\top}\mid x]$ does not depend on $\mu(x)$ & Cor.~1\\
\texttt{slope\_info\_numerator\_mu\_free} & the slope-information quadratic form does not depend on $\mu(x)$ & Cor.~1\\
\texttt{slope\_RE\_eq\_g2}, \texttt{g2\_times\_slope\_RE\_eq\_one} & the design-invariant slope efficiency equals $1/g_2$ & Cor.~1\\
\texttt{RE\_chisq3}, \texttt{RE\_gamma2} & $1/g_2=9/5$ and $5/3$ for the two reference laws & \S3.2\\
\texttt{g3num}, \texttt{g3den} & the polynomials $N$ and $D$ & eqs.~(19)--(20)\\
\texttt{det\_F3\_eq\_g3num} & $\det F_3=N$ for the standardized body & Thm~4\\
\texttt{g3den\_eq\_cramer} & $b^{\!\top}\operatorname{adj}(F_3)b=D$ & Thm~4\\
\texttt{g3\_reduction\_numerator} & the expression for $D-N$ & Thm~4\\
\texttt{g3den\_gamma5\_sensitivity}, \texttt{g3den\_gamma6\_sensitivity} & $D$ depends on $\gamma_5$ and on $\gamma_6$ & Thm~4\\
\texttt{g3\_symmetric\_reduction} & at $\gamma_3=\gamma_5=0$, $N/D=1-\gamma_4^2/(6+9\gamma_4+\gamma_6)$ & Thm~3, eq.~(16)\\
\bottomrule
\end{tabular}
\end{table}

\section{Complete Monte Carlo tables}
\label{sec:s5}

All regression experiments use the design of the main paper, $y_i=2+1.5x_i+e_i$, $x_i\sim\mathrm U(0,5)$, slope as estimand, common random numbers within a cell, and report $\RE=\mathrm{MSE}_{\OLS}/\mathrm{MSE}$. Standard errors in parentheses are paired bootstrap standard errors over replications ($500$ resamples). Tables~\ref{tab:s-h1}--\ref{tab:s-gauss} are the tables of the original submission (pilot regression cells with $M=500$; oracle cells with $M=10\,000$); Tables~\ref{tab:s-e1}--\ref{tab:s-hetero} are the revision experiments. The scripts are \texttt{asym\_s3\_mc.R} and \texttt{hetero\_mc.R} (directory \texttt{experiments/}) and \texttt{e1\_gmm3\_competitor.R}, \texttt{e2\_feasible\_sweep.R}, \texttt{e3\_m6\_shrinkage\_selection.R} (directory \texttt{experiments/revision\_2026-09/}) in the code supplement.

\subsection{Tables of the original submission}

\begin{table}[ht]
\centering
\caption{Pilot regression Monte Carlo under asymmetric errors ($M=500$): \emph{feasible} $\RE$ over OLS for PMM$_2$, SLS and GMM$_2$, the asymptotic $1/g_2$, and the per-replication correlation between the PMM$_2$ and SLS slopes. Figure~1 of the main paper is drawn from this table.}
\label{tab:s-h1}
\small
\begin{tabular}{@{}llccccc@{}}
\toprule
Error & $n$ & PMM$_2$ & SLS & GMM$_2$ & $1/g_2$ & cor(PMM$_2$,SLS)\\
\midrule
$\chi^2(3)$
 & 100  & 1.871 & 1.859 & 1.812 & 1.80 & 0.9959 \\
 & 200  & 1.884 & 1.875 & 1.859 & 1.80 & 0.9984 \\
 & 500  & 1.871 & 1.864 & 1.871 & 1.80 & 0.9994 \\
 & 1000 & 1.828 & 1.824 & 1.831 & 1.80 & 0.9997 \\
\midrule
$\mathrm{Gamma}(2,1)$
 & 100  & 1.708 & 1.697 & 1.669 & 1.67 & 0.9967 \\
 & 200  & 1.735 & 1.725 & 1.727 & 1.67 & 0.9985 \\
 & 500  & 1.726 & 1.719 & 1.744 & 1.67 & 0.9995 \\
 & 1000 & 1.684 & 1.683 & 1.693 & 1.67 & 0.9997 \\
\bottomrule
\end{tabular}
\end{table}

\begin{table}[ht]
\centering
\caption{\emph{Oracle} location Monte Carlo ($M=10\,000$, population coefficients $h=F_S^{-1}b$, paired bootstrap standard errors): $\RE$ against the sample mean next to the asymptotic constants.}
\label{tab:s-oracle-loc}
\small
\begin{tabular}{@{}llcccc@{}}
\toprule
Error & $n$ & PMM$_2$ $\RE$ (SE) & $1/g_2$ & PMM$_3$ $\RE$ (SE) & $1/g_3$\\
\midrule
$\chi^2(3)$
 & 200  & $1.826\,(0.025)$ & 1.799 & $2.576\,(0.040)$ & 2.479 \\
 & 1000 & $1.808\,(0.025)$ & 1.799 & $2.528\,(0.041)$ & 2.479 \\
 & 4000 & $1.746\,(0.023)$ & 1.799 & $2.402\,(0.038)$ & 2.479 \\
\midrule
$\mathrm{Gamma}(2,1)$
 & 200  & $1.632\,(0.021)$ & 1.667 & $2.146\,(0.031)$ & 2.167 \\
 & 1000 & $1.657\,(0.020)$ & 1.667 & $2.167\,(0.032)$ & 2.167 \\
 & 4000 & $1.668\,(0.021)$ & 1.667 & $2.146\,(0.031)$ & 2.167 \\
\bottomrule
\end{tabular}
\end{table}

\begin{table}[ht]
\centering
\caption{\emph{Oracle} degree-three Monte Carlo on the four laws of Table~1 of the main paper ($M=10\,000$, population coefficients $h=F_3^{-1}b$, paired bootstrap standard errors): realized $\RE_3$ over OLS next to the closed-form $1/g_3$. In $1.2\times10^5$ fits the Newton step guard engaged twice and the divergence fallback never.}
\label{tab:s-oracle-asym}
\small
\begin{tabular}{lccc}
\toprule
error law & $1/g_3$ & $\RE_3$ ($n{=}1000$) & $\RE_3$ ($n{=}4000$)\\
\midrule
$\chi^2(2)=\mathrm{Exp}$ & $3.00$ & $2.99\,(0.05)$ & $3.00\,(0.05)$\\
$\mathrm{Gamma}(1.2,1)$ & $2.76$ & $2.75\,(0.04)$ & $2.75\,(0.05)$\\
$\chi^2(3)$ & $2.49$ & $2.53\,(0.04)$ & $2.40\,(0.04)$\\
$\mathrm{Gamma}(2,1)$ & $2.17$ & $2.17\,(0.03)$ & $2.15\,(0.03)$\\
\bottomrule
\end{tabular}
\end{table}

\begin{table}[ht]
\centering
\caption{Pilot regression Monte Carlo under symmetric platykurtic errors $\mathrm U(-\sqrt3,\sqrt3)$ ($M=500$): \emph{feasible} $\RE$ over OLS. PMM$_3$ here is the symmetric even-moment routine of \textsf{EstemPMM}. Asymptotic $1/g_3=3.33$.}
\label{tab:s-h2}
\small
\begin{tabular}{@{}lcccc@{}}
\toprule
$n$ & SLS & GMM$_2$ & PMM$_3$ & $1/g_3$\\
\midrule
50  & 0.910 & 0.951 & 2.065 & 3.33 \\
100 & 0.955 & 0.985 & 2.953 & 3.33 \\
200 & 0.990 & 1.002 & 2.872 & 3.33 \\
500 & 1.002 & 1.006 & 3.159 & 3.33 \\
\bottomrule
\end{tabular}
\end{table}

\begin{table}[ht]
\centering
\caption{Pilot Gaussian control ($M=500$): \emph{feasible} $\RE$ over OLS and the PMM$_2$--SLS correlation. No reserve exists; the sub-unity efficiency is the finite-sample cost of the extra moment equation.}
\label{tab:s-gauss}
\small
\begin{tabular}{@{}lccc@{}}
\toprule
$n$ & PMM$_2$ & SLS & cor(PMM$_2$,SLS)\\
\midrule
50  & 0.931 & 0.896 & 0.9958 \\
100 & 0.991 & 0.982 & 0.9990 \\
200 & 0.989 & 0.986 & 0.9999 \\
500 & 0.983 & 0.983 & 1.0000 \\
\bottomrule
\end{tabular}
\end{table}

\begin{figure}[ht]
  \centering
  \includegraphics[width=0.78\textwidth]{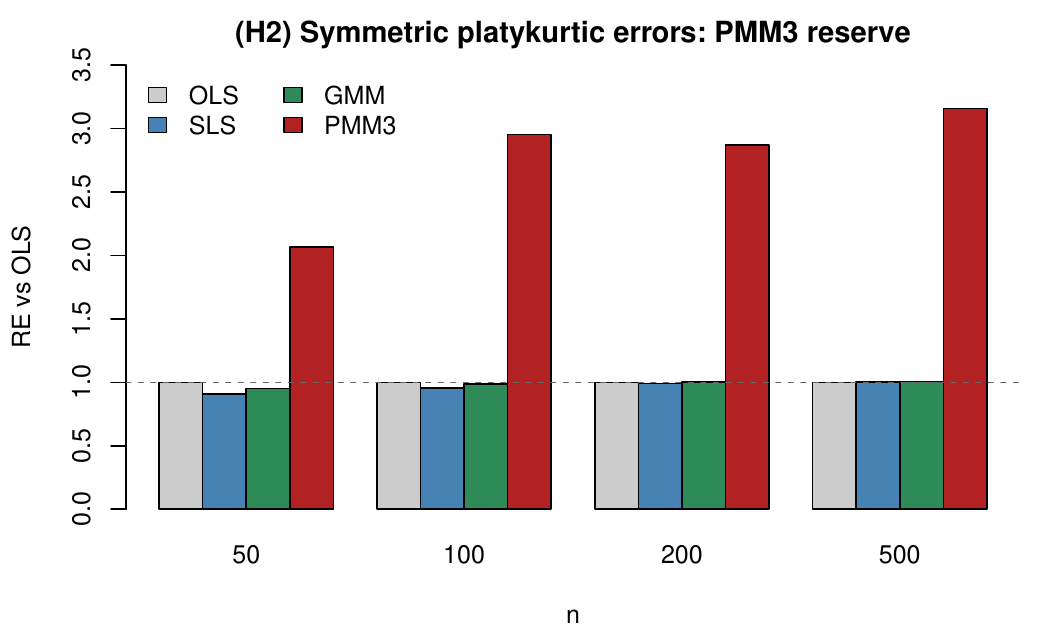}
  \caption{Pilot study ($M=500$, Table~\ref{tab:s-h2}): on symmetric platykurtic errors SLS and GMM$_2$ coincide with OLS while the symmetric degree-three estimator attains $\RE$ up to $3.16$.}
  \label{fig:s-h2}
\end{figure}

\begin{figure}[ht]
  \centering
  \includegraphics[width=0.62\textwidth]{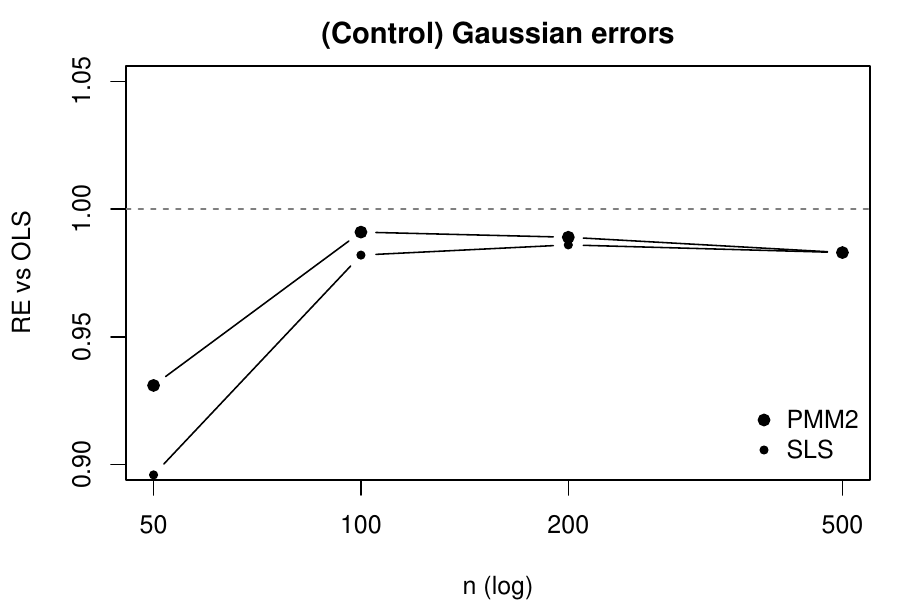}
  \caption{Pilot Gaussian control ($M=500$, Table~\ref{tab:s-gauss}): PMM$_2$ and SLS incur no asymptotic efficiency loss relative to OLS; the small finite-sample shortfall vanishes as $n$ grows.}
  \label{fig:s-gauss}
\end{figure}

\subsection{Revision experiment E1: degree-three GMM competitor}

\begin{sidewaystable}
\centering
\caption{Experiment E1 ($M=2000$): \emph{feasible} $\RE$ over OLS with paired bootstrap standard errors for all seven estimators and four laws. PMM$_3$ is the general degree-three plug-in with the full $3\times3$ body; PMM$_3^{\mathrm{sym}}$ is the symmetric even-moment routine of \textsf{EstemPMM}, reported on every law to document that it is \emph{not} appropriate for skewed errors. cor$_3$ and diff$_3$ are the per-replication correlation and the mean paired difference (SE) between the PMM$_3$ and GMM$_3$ slopes.}
\label{tab:s-e1}
\scriptsize
\begin{tabular}{@{}llcccccccc@{}}
\toprule
Error & $n$ & PMM$_2$ & SLS & GMM$_2$ & PMM$_3$ & PMM$_3^{\mathrm{sym}}$ & GMM$_3$ & cor$_3$ & diff$_3$\\
\midrule
$\chi^2(3)$ ($1/g_2=1.80$, $1/g_3=2.50$)
 & 50 & $2.12\,(0.07)$ & $2.05\,(0.06)$ & $1.82\,(0.06)$ & $3.05\,(0.12)$ & $1.28\,(0.03)$ & $1.52\,(0.05)$ & 0.686 & $0.0162\,(0.0033)$ \\
 & 100 & $1.97\,(0.06)$ & $1.97\,(0.06)$ & $1.91\,(0.05)$ & $2.96\,(0.11)$ & $1.28\,(0.03)$ & $2.12\,(0.06)$ & 0.833 & $0.0020\,(0.0015)$ \\
 & 200 & $1.90\,(0.06)$ & $1.89\,(0.06)$ & $1.89\,(0.06)$ & $3.01\,(0.12)$ & $1.25\,(0.03)$ & $2.66\,(0.09)$ & 0.929 & $0.0002\,(0.0006)$ \\
 & 500 & $1.80\,(0.05)$ & $1.79\,(0.05)$ & $1.80\,(0.05)$ & $2.71\,(0.10)$ & $1.21\,(0.02)$ & $2.62\,(0.08)$ & 0.978 & $0.0000\,(0.0002)$ \\
 & 1000 & $1.83\,(0.05)$ & $1.83\,(0.05)$ & $1.85\,(0.05)$ & $2.74\,(0.10)$ & $1.22\,(0.02)$ & $2.72\,(0.09)$ & 0.987 & $0.0002\,(0.0001)$ \\
\midrule
$\mathrm{Gamma}(2,1)$ ($1/g_2=1.67$, $1/g_3=2.17$)
 & 50 & $1.72\,(0.05)$ & $1.73\,(0.05)$ & $1.58\,(0.04)$ & $2.27\,(0.08)$ & $1.19\,(0.02)$ & $1.34\,(0.04)$ & 0.748 & $0.0027\,(0.0018)$ \\
 & 100 & $1.76\,(0.05)$ & $1.75\,(0.05)$ & $1.72\,(0.05)$ & $2.32\,(0.09)$ & $1.20\,(0.02)$ & $1.82\,(0.05)$ & 0.878 & $0.0003\,(0.0008)$ \\
 & 200 & $1.74\,(0.05)$ & $1.72\,(0.05)$ & $1.75\,(0.05)$ & $2.46\,(0.08)$ & $1.21\,(0.02)$ & $2.22\,(0.07)$ & 0.942 & $0.0002\,(0.0003)$ \\
 & 500 & $1.81\,(0.05)$ & $1.80\,(0.05)$ & $1.82\,(0.05)$ & $2.40\,(0.09)$ & $1.24\,(0.02)$ & $2.36\,(0.08)$ & 0.982 & $0.0002\,(0.0001)$ \\
 & 1000 & $1.68\,(0.05)$ & $1.68\,(0.05)$ & $1.69\,(0.05)$ & $2.22\,(0.07)$ & $1.18\,(0.02)$ & $2.22\,(0.07)$ & 0.992 & $0.0003\,(0.0001)$ \\
\midrule
$\mathrm U(-\sqrt3,\sqrt3)$ ($1/g_2=1.00$, $1/g_3=3.33$)
 & 50 & $0.93\,(0.01)$ & $0.89\,(0.01)$ & $0.94\,(0.01)$ & $2.18\,(0.08)$ & $2.08\,(0.06)$ & $2.00\,(0.07)$ & 0.927 & $-0.0013\,(0.0006)$ \\
 & 100 & $0.96\,(0.01)$ & $0.95\,(0.01)$ & $0.98\,(0.01)$ & $2.78\,(0.11)$ & $2.61\,(0.09)$ & $2.50\,(0.09)$ & 0.972 & $-0.0003\,(0.0002)$ \\
 & 200 & $0.99\,(0.00)$ & $0.99\,(0.01)$ & $0.99\,(0.01)$ & $2.99\,(0.11)$ & $2.91\,(0.10)$ & $2.83\,(0.10)$ & 0.988 & $0.0003\,(0.0001)$ \\
 & 500 & $0.99\,(0.00)$ & $0.99\,(0.00)$ & $1.00\,(0.00)$ & $3.18\,(0.12)$ & $3.15\,(0.12)$ & $3.15\,(0.12)$ & 0.995 & $0.0000\,(0.0000)$ \\
 & 1000 & $0.99\,(0.00)$ & $0.99\,(0.00)$ & $1.00\,(0.00)$ & $3.37\,(0.12)$ & $3.34\,(0.12)$ & $3.33\,(0.12)$ & 0.998 & $0.0000\,(0.0000)$ \\
\midrule
$\mathcal N(0,1)$ ($1/g_2=1.00$, $1/g_3=1.00$)
 & 50 & $0.96\,(0.01)$ & $0.94\,(0.01)$ & $0.94\,(0.01)$ & $0.82\,(0.02)$ & $0.93\,(0.01)$ & $0.77\,(0.02)$ & 0.934 & $0.0005\,(0.0009)$ \\
 & 100 & $0.98\,(0.01)$ & $0.98\,(0.01)$ & $0.96\,(0.01)$ & $0.93\,(0.01)$ & $0.96\,(0.01)$ & $0.86\,(0.01)$ & 0.967 & $-0.0001\,(0.0004)$ \\
 & 200 & $1.00\,(0.01)$ & $1.00\,(0.01)$ & $0.99\,(0.01)$ & $0.98\,(0.01)$ & $0.99\,(0.01)$ & $0.94\,(0.01)$ & 0.983 & $-0.0004\,(0.0002)$ \\
 & 500 & $1.00\,(0.00)$ & $1.00\,(0.00)$ & $1.00\,(0.01)$ & $0.99\,(0.01)$ & $0.99\,(0.00)$ & $0.98\,(0.01)$ & 0.994 & $0.0001\,(0.0001)$ \\
 & 1000 & $1.00\,(0.00)$ & $1.00\,(0.00)$ & $1.00\,(0.00)$ & $1.00\,(0.00)$ & $1.00\,(0.00)$ & $1.00\,(0.01)$ & 0.997 & $0.0000\,(0.0000)$ \\
\bottomrule
\end{tabular}
\end{sidewaystable}

\subsection{Revision experiment E2: feasible-sample sweep}

\begin{table}[ht]
\centering
\caption{Experiment E2 ($M=2000$), laws 1--6: \emph{feasible} $\RE$ over OLS with paired bootstrap standard errors for the general degree-two and degree-three plug-ins; PMM$_2^{\mathrm E}$ is the \textsf{EstemPMM} degree-two routine as a cross-check; $\operatorname{sd}(\hat\gamma_6)$ is the Monte Carlo standard deviation of the sixth standardized cumulant of the OLS residuals. Asymptotic constants in the law headers.}
\label{tab:s-e2a}
\scriptsize
\setlength{\tabcolsep}{3pt}
\begin{tabular}{@{}llcccc@{}}
\toprule
Error law & $n$ & PMM$_2$ & PMM$_2^{\mathrm E}$ & PMM$_3$ & $\operatorname{sd}(\hat\gamma_6)$\\
\midrule
\multicolumn{6}{@{}l}{$\mathrm{Gamma}(1,1)=\mathrm{Exp}$ ($\gamma_3=2.00$, $\gamma_4=5.96$, $\gamma_6=116.7$; $1/g_2=2.00$, $1/g_3=3.00$)}\\
 & 50 & $2.28\,(0.09)$ & 2.18 & $3.90\,(0.17)$ & 47.5 \\
 & 100 & $2.26\,(0.08)$ & 2.21 & $3.86\,(0.17)$ & 103.4 \\
 & 200 & $2.23\,(0.08)$ & 2.21 & $3.53\,(0.14)$ & 194.1 \\
 & 500 & $2.27\,(0.07)$ & 2.26 & $3.72\,(0.14)$ & 217.5 \\
 & 1000 & $2.09\,(0.06)$ & 2.09 & $3.63\,(0.14)$ & 207.7 \\
 & 2000 & $1.95\,(0.06)$ & 1.94 & $3.16\,(0.12)$ & 175.4 \\
\midrule
\multicolumn{6}{@{}l}{$\mathrm{Gamma}(2,1)$ ($\gamma_3=1.41$, $\gamma_4=2.98$, $\gamma_6=28.8$; $1/g_2=1.67$, $1/g_3=2.17$)}\\
 & 50 & $1.78\,(0.05)$ & 1.72 & $2.27\,(0.08)$ & 22.3 \\
 & 100 & $1.78\,(0.05)$ & 1.76 & $2.32\,(0.09)$ & 42.6 \\
 & 200 & $1.75\,(0.05)$ & 1.74 & $2.46\,(0.08)$ & 66.4 \\
 & 500 & $1.81\,(0.05)$ & 1.81 & $2.40\,(0.09)$ & 76.1 \\
 & 1000 & $1.68\,(0.05)$ & 1.68 & $2.22\,(0.07)$ & 66.0 \\
 & 2000 & $1.61\,(0.05)$ & 1.61 & $2.20\,(0.07)$ & 58.0 \\
\midrule
\multicolumn{6}{@{}l}{$\mathrm{Gamma}(4,1)$ ($\gamma_3=1.00$, $\gamma_4=1.50$, $\gamma_6=7.1$; $1/g_2=1.40$, $1/g_3=1.60$)}\\
 & 50 & $1.39\,(0.03)$ & 1.37 & $1.44\,(0.05)$ & 9.6 \\
 & 100 & $1.36\,(0.03)$ & 1.35 & $1.52\,(0.04)$ & 19.4 \\
 & 200 & $1.50\,(0.03)$ & 1.49 & $1.66\,(0.04)$ & 30.3 \\
 & 500 & $1.43\,(0.03)$ & 1.42 & $1.61\,(0.04)$ & 27.1 \\
 & 1000 & $1.42\,(0.03)$ & 1.42 & $1.61\,(0.04)$ & 25.0 \\
 & 2000 & $1.41\,(0.04)$ & 1.41 & $1.64\,(0.04)$ & 18.5 \\
\midrule
\multicolumn{6}{@{}l}{$\mathrm{Gamma}(8,1)$ ($\gamma_3=0.71$, $\gamma_4=0.76$, $\gamma_6=2.0$; $1/g_2=1.22$, $1/g_3=1.29$)}\\
 & 50 & $1.17\,(0.02)$ & 1.16 & $1.10\,(0.03)$ & 6.5 \\
 & 100 & $1.18\,(0.02)$ & 1.18 & $1.16\,(0.03)$ & 10.3 \\
 & 200 & $1.23\,(0.02)$ & 1.23 & $1.27\,(0.03)$ & 11.5 \\
 & 500 & $1.23\,(0.02)$ & 1.23 & $1.28\,(0.03)$ & 11.2 \\
 & 1000 & $1.25\,(0.02)$ & 1.25 & $1.32\,(0.03)$ & 13.4 \\
 & 2000 & $1.18\,(0.02)$ & 1.18 & $1.24\,(0.03)$ & 6.4 \\
\midrule
\multicolumn{6}{@{}l}{lognormal$(0,0.5)$ ($\gamma_3=1.75$, $\gamma_4=5.96$, $\gamma_6=252.5$; $1/g_2=1.63$, $1/g_3=1.99$)}\\
 & 50 & $1.79\,(0.06)$ & 1.74 & $2.09\,(0.07)$ & 30.7 \\
 & 100 & $1.79\,(0.06)$ & 1.77 & $2.19\,(0.08)$ & 148.6 \\
 & 200 & $1.81\,(0.05)$ & 1.80 & $2.30\,(0.08)$ & 284.8 \\
 & 500 & $1.75\,(0.05)$ & 1.75 & $2.35\,(0.08)$ & 563.1 \\
 & 1000 & $1.72\,(0.04)$ & 1.72 & $2.26\,(0.07)$ & 443.4 \\
 & 2000 & $1.71\,(0.05)$ & 1.71 & $2.25\,(0.07)$ & 538.3 \\
\midrule
\multicolumn{6}{@{}l}{$\mathrm U(-\sqrt3,\sqrt3)$ ($\gamma_3=0.00$, $\gamma_4=-1.20$, $\gamma_6=6.9$; $1/g_2=1.00$, $1/g_3=3.33$)}\\
 & 50 & $0.92\,(0.01)$ & 0.93 & $2.18\,(0.08)$ & 2.2 \\
 & 100 & $0.96\,(0.01)$ & 0.96 & $2.78\,(0.11)$ & 1.3 \\
 & 200 & $0.99\,(0.01)$ & 0.99 & $2.99\,(0.11)$ & 0.9 \\
 & 500 & $0.99\,(0.00)$ & 0.99 & $3.18\,(0.12)$ & 0.6 \\
 & 1000 & $0.99\,(0.00)$ & 0.99 & $3.37\,(0.12)$ & 0.4 \\
 & 2000 & $1.00\,(0.00)$ & 1.00 & $3.18\,(0.12)$ & 0.3 \\
\bottomrule
\end{tabular}
\end{table}

\begin{table}[ht]
\centering
\caption{Experiment E2 ($M=2000$), laws 7--11 (continuation of Table~\ref{tab:s-e2a}). The $t_5$ law has an infinite sixth moment and violates A2.}
\label{tab:s-e2b}
\scriptsize
\setlength{\tabcolsep}{3pt}
\begin{tabular}{@{}llcccc@{}}
\toprule
Error law & $n$ & PMM$_2$ & PMM$_2^{\mathrm E}$ & PMM$_3$ & $\operatorname{sd}(\hat\gamma_6)$\\
\midrule
\multicolumn{6}{@{}l}{$\mathrm{Beta}(2,2)$ ($\gamma_3=0.00$, $\gamma_4=-0.86$, $\gamma_6=3.8$; $1/g_2=1.00$, $1/g_3=1.54$)}\\
 & 50 & $0.92\,(0.01)$ & 0.94 & $1.17\,(0.03)$ & 2.4 \\
 & 100 & $0.96\,(0.01)$ & 0.96 & $1.29\,(0.04)$ & 1.6 \\
 & 200 & $0.98\,(0.01)$ & 0.98 & $1.43\,(0.04)$ & 1.1 \\
 & 500 & $0.99\,(0.00)$ & 0.99 & $1.49\,(0.04)$ & 0.7 \\
 & 1000 & $0.99\,(0.00)$ & 0.99 & $1.51\,(0.04)$ & 0.5 \\
 & 2000 & $1.00\,(0.00)$ & 1.00 & $1.52\,(0.04)$ & 0.3 \\
\midrule
\multicolumn{6}{@{}l}{$t_{10}$ ($\gamma_3=0.00$, $\gamma_4=1.03$, $\gamma_6=12.8$; $1/g_2=1.00$, $1/g_3=1.04$)}\\
 & 50 & $0.98\,(0.01)$ & 0.99 & $0.90\,(0.02)$ & 8.9 \\
 & 100 & $0.99\,(0.01)$ & 0.99 & $0.98\,(0.01)$ & 21.1 \\
 & 200 & $1.01\,(0.01)$ & 1.01 & $1.05\,(0.01)$ & 24.5 \\
 & 500 & $1.00\,(0.01)$ & 1.00 & $1.05\,(0.01)$ & 50.6 \\
 & 1000 & $1.00\,(0.00)$ & 1.00 & $1.03\,(0.01)$ & 63.4 \\
 & 2000 & $1.00\,(0.00)$ & 1.00 & $1.04\,(0.01)$ & 31.2 \\
\midrule
\multicolumn{6}{@{}l}{$t_7$ ($\gamma_3=0.00$, $\gamma_4=2.00$, $\gamma_6=56.3$; $1/g_2=1.00$, $1/g_3=1.05$)}\\
 & 50 & $1.02\,(0.02)$ & 1.04 & $0.93\,(0.02)$ & 18.9 \\
 & 100 & $1.02\,(0.01)$ & 1.02 & $1.04\,(0.02)$ & 41.0 \\
 & 200 & $1.02\,(0.01)$ & 1.02 & $1.10\,(0.01)$ & 260.6 \\
 & 500 & $1.02\,(0.01)$ & 1.02 & $1.10\,(0.01)$ & 115.3 \\
 & 1000 & $1.00\,(0.01)$ & 1.00 & $1.08\,(0.01)$ & 156.2 \\
 & 2000 & $1.01\,(0.00)$ & 1.01 & $1.08\,(0.01)$ & 142.0 \\
\midrule
\multicolumn{6}{@{}l}{$t_5$ ($\gamma_3=0.00$, $\gamma_4=5.50$, $\gamma_6=\infty$; $1/g_2=1.00$, $1/g_3$ undefined)}\\
\multicolumn{6}{@{}l}{\quad(sample values from $4\times10^6$ draws; the population sixth moment is infinite)}\\
 & 50 & $1.06\,(0.02)$ & 1.07 & $1.02\,(0.03)$ & 40.6 \\
 & 100 & $1.05\,(0.02)$ & 1.05 & $1.14\,(0.02)$ & 123.8 \\
 & 200 & $1.04\,(0.01)$ & 1.04 & $1.18\,(0.02)$ & 252.3 \\
 & 500 & $1.02\,(0.01)$ & 1.02 & $1.16\,(0.02)$ & 1224.4 \\
 & 1000 & $1.02\,(0.01)$ & 1.02 & $1.15\,(0.02)$ & 997.1 \\
 & 2000 & $1.01\,(0.01)$ & 1.01 & $1.11\,(0.02)$ & 8187.1 \\
\midrule
\multicolumn{6}{@{}l}{$\mathcal N(0,1)$ ($\gamma_3=0.00$, $\gamma_4=0.00$, $\gamma_6=0.0$; $1/g_2=1.00$, $1/g_3=1.00$)}\\
 & 50 & $0.95\,(0.01)$ & 0.96 & $0.82\,(0.02)$ & 3.2 \\
 & 100 & $0.98\,(0.01)$ & 0.98 & $0.93\,(0.01)$ & 2.2 \\
 & 200 & $1.00\,(0.01)$ & 1.00 & $0.98\,(0.01)$ & 1.8 \\
 & 500 & $1.00\,(0.00)$ & 1.00 & $0.99\,(0.01)$ & 1.2 \\
 & 1000 & $1.00\,(0.00)$ & 1.00 & $1.00\,(0.00)$ & 0.8 \\
 & 2000 & $1.00\,(0.00)$ & 1.00 & $0.99\,(0.00)$ & 0.6 \\
\bottomrule
\end{tabular}
\end{table}

\subsection{Revision experiment E3: stabilized plug-ins}

\begin{sidewaystable}
\centering
\caption{Experiment E3 ($M=1000$): \emph{feasible} $\RE$ over OLS with paired bootstrap standard errors for the plain degree-three plug-in, the oracle-coefficient estimator (population $h$, same solver), and three stabilized plug-ins: scalar Tikhonov ridge $\hat h=(\widehat F_3+\lambda I)^{-1}\hat b$ with $\lambda=0.1\operatorname{tr}(\widehat F_3)/3$; the $\hat m_6$-ridge that adds $0.5\,\widehat F_{33}$ to the $(3,3)$ entry only; and the James--Stein-type shrinkage of $\hat m_6$ toward its value under $\kappa_6=0$. The last column is $\operatorname{sd}(\hat\gamma_6)$. The pretest-selected estimator is in Table~5 of the main paper. Large standard errors in the shrinkage column reflect a mean squared error dominated by a few divergent replications.}
\label{tab:s-e3}
\scriptsize
\begin{tabular}{@{}llcccccc@{}}
\toprule
Error & $n$ & PMM$_3$ plain & oracle $h$ & scalar ridge & $\hat m_6$-ridge & $\hat m_6$-shrinkage & $\operatorname{sd}(\hat\gamma_6)$\\
\midrule
$\chi^2(3)$
 & 50 & $2.67\,(0.16)$ & $2.35\,(0.14)$ & $0.27\,(0.02)$ & $2.21\,(0.12)$ & $0.23\,(0.70)$ & 22.5 \\
 & 100 & $3.09\,(0.16)$ & $2.61\,(0.12)$ & $0.18\,(0.01)$ & $2.32\,(0.10)$ & $0.36\,(0.19)$ & 65.6 \\
 & 200 & $2.79\,(0.15)$ & $2.31\,(0.11)$ & $0.13\,(0.01)$ & $2.09\,(0.09)$ & $0.01\,(0.15)$ & 162.7 \\
 & 500 & $2.69\,(0.13)$ & $2.49\,(0.12)$ & $0.11\,(0.01)$ & $2.20\,(0.09)$ & $0.03\,(0.15)$ & 86.4 \\
 & 1000 & $2.90\,(0.14)$ & $2.70\,(0.13)$ & $0.11\,(0.01)$ & $2.30\,(0.10)$ & $0.00\,(0.01)$ & 79.1 \\
\midrule
$\mathrm{Gamma}(2,1)$
 & 50 & $2.52\,(0.13)$ & $2.26\,(0.11)$ & $0.68\,(0.05)$ & $2.03\,(0.09)$ & $0.71\,(0.60)$ & 18.2 \\
 & 100 & $2.42\,(0.12)$ & $2.18\,(0.10)$ & $0.52\,(0.04)$ & $1.95\,(0.08)$ & $0.45\,(0.60)$ & 69.1 \\
 & 200 & $2.33\,(0.12)$ & $2.17\,(0.11)$ & $0.41\,(0.03)$ & $2.00\,(0.09)$ & $0.15\,(0.41)$ & 57.1 \\
 & 500 & $2.38\,(0.12)$ & $2.18\,(0.10)$ & $0.36\,(0.03)$ & $1.91\,(0.08)$ & $0.00\,(0.03)$ & 98.7 \\
 & 1000 & $2.18\,(0.10)$ & $2.04\,(0.10)$ & $0.41\,(0.02)$ & $1.75\,(0.07)$ & $0.01\,(0.04)$ & 57.5 \\
\midrule
$\mathrm U(-\sqrt3,\sqrt3)$
 & 50 & $2.20\,(0.12)$ & $2.10\,(0.14)$ & $2.17\,(0.09)$ & $1.55\,(0.04)$ & $1.86\,(0.13)$ & 2.2 \\
 & 100 & $3.07\,(0.17)$ & $2.98\,(0.18)$ & $2.76\,(0.11)$ & $1.93\,(0.05)$ & $3.12\,(0.19)$ & 1.4 \\
 & 200 & $2.89\,(0.16)$ & $2.86\,(0.16)$ & $2.66\,(0.11)$ & $1.98\,(0.05)$ & $2.90\,(0.16)$ & 0.9 \\
 & 500 & $3.32\,(0.17)$ & $3.38\,(0.18)$ & $2.91\,(0.11)$ & $2.08\,(0.04)$ & $3.33\,(0.17)$ & 0.5 \\
 & 1000 & $3.60\,(0.19)$ & $3.61\,(0.19)$ & $3.06\,(0.12)$ & $2.15\,(0.05)$ & $3.61\,(0.19)$ & 0.4 \\
\midrule
$\mathcal N(0,1)$
 & 50 & $0.82\,(0.02)$ & $1.00\,(0.00)$ & $0.84\,(0.02)$ & $0.93\,(0.01)$ & $0.55\,(0.03)$ & 3.4 \\
 & 100 & $0.90\,(0.02)$ & $1.00\,(0.00)$ & $0.84\,(0.02)$ & $0.94\,(0.01)$ & $0.80\,(0.03)$ & 2.1 \\
 & 200 & $0.95\,(0.01)$ & $1.00\,(0.00)$ & $0.84\,(0.02)$ & $0.97\,(0.01)$ & $0.92\,(0.02)$ & 1.5 \\
 & 500 & $0.98\,(0.01)$ & $1.00\,(0.00)$ & $0.89\,(0.02)$ & $0.99\,(0.01)$ & $0.98\,(0.01)$ & 1.3 \\
 & 1000 & $0.99\,(0.01)$ & $1.00\,(0.00)$ & $0.87\,(0.02)$ & $1.00\,(0.01)$ & $0.99\,(0.01)$ & 0.8 \\
\bottomrule
\end{tabular}
\end{sidewaystable}

\subsection{Heteroskedasticity}

\begin{table}[ht]
\centering
\caption{Heteroskedasticity Monte Carlo ($M=2000$; Section~5.6 of the main paper). Columns OLS--WLS are $\RE=\mathrm{MSE}_{\OLS}/\mathrm{MSE}$; the last column is the PMM$_2$ slope bias (all other estimators have $|\mathrm{bias}|<0.005$). PMM$_2$(unc.) uses the unconditional body $F_2$; SLS(cond.) uses the fitted conditional variance model $\exp(\hat\tau_0+\hat\tau_1x+\hat\tau_2x^2)$; WLS(orac.) is weighted least squares with the true weights $1/\sigma^2(x_i)$. Design~A (symmetric, $\operatorname{Cov}(x,\sigma^2(x))=0$): efficiency separation only. Design~B (asymmetric, $\operatorname{Cov}(x,\sigma^2(x))\neq0$): the unconditional-body PMM$_2$ slope is inconsistent.}
\label{tab:s-hetero}
\small
\begin{tabular}{@{}llccccc@{}}
\toprule
design & $n$ & OLS & PMM$_2$(unc.) & SLS(cond.) & WLS(orac.) & PMM$_2$ bias\\
\midrule
A (sym.)  & 100  & 1.000 & 0.965 & 1.013 & 1.085 & $+0.004$\\
          & 500  & 1.000 & 0.990 & 1.063 & 1.089 & $+0.002$\\
          & 2000 & 1.000 & 1.001 & 1.091 & 1.103 & $-0.001$\\
\midrule
B (asym.) & 100  & 1.000 & 0.415 & 1.250 & 1.291 & $-0.109$\\
          & 500  & 1.000 & 0.112 & 1.257 & 1.278 & $-0.100$\\
          & 2000 & 1.000 & 0.035 & 1.309 & 1.302 & $-0.094$\\
\bottomrule
\end{tabular}
\end{table}

\subsection{Verification of the simulation code}

The simulation study of the original submission was checked along five lines: the correctness of the feasible-optimal SLS implementation, the significance of the equivalence under a paired bootstrap, the closed-form $g_2$ algebra, the framing of the symmetric experiment, and the overall design; the scripts performing these checks are part of the code supplement. The one issue found, a seed collision across cells, left all within-cell paired comparisons unaffected and was corrected. Two by-products reinforce the theoretical claims: an independent asymptotic sandwich computation for the $\chi^2(3)$ case returned $\RE=1.798$ against the theoretical $1.800$, and a direct calculation of $\sqrt n\,\widehat{\operatorname{sd}}(\hat\beta_1^{\mathrm{PMM_2}}-\hat\beta_1^{\mathrm{SLS}})\to0$ corroborates the strong PMM$_2$--SLS equivalence. The checks also re-established the implementation point of Section~3.4 of the main paper: the degree-two gain materializes only with the regression-form residual $y^2-\mu^2-\sigma^2$ and disappears under the finite-sample-degenerate $e^2-\sigma^2$ form. For the revision experiments the general degree-three solver (\texttt{common.R::pmm\_reg}) was checked against \textsf{EstemPMM} on the symmetric laws, where the two agree (Table~\ref{tab:s-e1}), and against the population coefficients (oracle column of Table~\ref{tab:s-e3}).

\section{Data examples: details}
\label{sec:s6}

\subsection{\emph{cars} and \emph{faithful}}

Both data sets ship with R (\texttt{datasets::cars}, \texttt{datasets::faithful}). \emph{cars} \citep{ezekiel1930methods}: stopping distance (ft) on speed (mph), $n=50$; \emph{faithful} \citep{azzalini1990look}: eruption duration (min) on waiting time (min), $n=272$. For each set we fit OLS, compute the residual central moments through order six and the standardized cumulants, the Breusch--Pagan test of \citet{breusch1979simple} as implemented in \texttt{lmtest::bptest}, and the plug-in $1/\hat g_2$, $1/\hat g_3$ from the residual moments. Point estimates of the slope are OLS, the general degree-two and degree-three plug-ins, the \textsf{EstemPMM} degree-two routine, and feasible SLS; their standard errors are paired case-bootstrap standard errors over $B=2000$ resamples of $(y_i,x_i)$ pairs (seed 101). The residual-calibrated Monte Carlo resamples errors from the centered OLS residuals and the regressor from the observed $x$, generates $y=\hat\beta_0+\hat\beta_1x+e$, and runs $M=2000$ replications (seed 202) at $n=n_{\text{data}}$ and at $n=1000$; $\RE$ is measured against $\hat\beta_1$. Table~\ref{tab:s-e4} gives the full results; the script is \texttt{e4\_real\_data.R}.

\begin{table}[ht]
\centering
\caption{Data examples: slope estimates with paired case-bootstrap standard errors ($B=2000$), and residual-calibrated Monte Carlo $\RE$ over OLS with paired bootstrap standard errors ($M=2000$). PMM$_2^{\mathrm E}$ is the \textsf{EstemPMM} degree-two routine.}
\label{tab:s-e4}
\scriptsize
\setlength{\tabcolsep}{2pt}
\begin{tabular}{@{}lcccccc@{}}
\toprule
& \multicolumn{3}{c}{\emph{cars} ($n=50$)} & \multicolumn{3}{c}{\emph{faithful} ($n=272$)}\\
\cmidrule(lr){2-4}\cmidrule(lr){5-7}
estimator & slope (SE) & $\RE$, $n=50$ & $\RE$, $n=1000$ & slope (SE) & $\RE$, $n=272$ & $\RE$, $n=1000$\\
\midrule
OLS & $3.932\,(0.409)$ & 1 & 1 & $0.0756\,(0.0019)$ & 1 & 1\\
PMM$_2$ & $3.437\,(0.481)$ & $1.304\,(0.031)$ & $1.411\,(0.031)$ & $0.0759\,(0.0020)$ & $0.990\,(0.006)$ & $1.007\,(0.006)$\\
PMM$_3$ & $3.233\,(0.504)$ & $1.212\,(0.032)$ & $1.458\,(0.035)$ & $0.0760\,(0.0020)$ & $1.040\,(0.015)$ & $1.086\,(0.015)$\\
PMM$_2^{\mathrm E}$ & $3.482\,(0.408)$ & $1.288\,(0.030)$ & $1.410\,(0.031)$ & $0.0759\,(0.0020)$ & $0.991\,(0.005)$ & $1.007\,(0.006)$\\
SLS & $3.599\,(0.398)$ & $1.275\,(0.030)$ & $1.405\,(0.031)$ & $0.0756\,(0.0019)$ & $0.988\,(0.006)$ & $1.003\,(0.005)$\\
\bottomrule
\end{tabular}
\end{table}

\subsection{The FORGE 58-32 innovation law}

\emph{Source.} Utah FORGE well 58-32, processed 1-ft Pason drilling log, file \path{Well_58-32_processed_pason_log.csv} of the Geothermal Data Repository submission ``Utah FORGE: Drilling Data for Student Competition'' \citep{podgorney2018forge}, licence CC~BY~4.0. Columns used: depth (ft), rate of penetration (ft/h, 1-ft average), weight on bit (klbf), rotary speed (rpm).

\emph{Why not a regression sample.} Rows with non-positive or missing rate of penetration, weight on bit or rotary speed are dropped. Within a depth window the residuals of an OLS fit of rate of penetration on weight on bit are strongly serially dependent (lag-one autocorrelation between $0.35$ and $0.81$, Table~\ref{tab:s-windows}), and rotary speed changes between windows, so the data violate A1 as a regression sample and are used only as a source of an empirical error law.

\emph{Pipeline.} Rotary speed is binned to the nearest 10~rpm. For each 500-ft window with at least 100 rows, the dominant rotary-speed bin is selected and retained if it holds at least 80 rows; OLS of rate of penetration on weight on bit is fitted in that subset; an AR(1) model without mean (\texttt{stats::arima}) is fitted to the residuals; the innovations are standardized to zero mean and unit variance and pooled over windows. Fourteen windows survive (depths 500--7500~ft), giving $n=3266$ pooled innovations. A second law removes innovations beyond a Tukey fence at three inter-quartile ranges ($0.9\%$ of the sample, $n=3237$). Table~\ref{tab:s-windows} lists the windows; the pooled and fenced cumulants and the plug-in $1/\hat g_2$, $1/\hat g_3$ are in Table~9 of the main paper.

\emph{Calibrated Monte Carlo.} Errors are resampled with replacement from the pooled (or fenced) law and inserted into the design of the main paper, $y=2+1.5x+e$, $x\sim\mathrm U(0,5)$, at $n\in\{100,250,1000\}$ with $M=2000$ replications (seed $20260915+n$); OLS, the general degree-two and degree-three plug-ins, feasible SLS and the six-condition GMM$_3$ are compared. The script is \texttt{e5\_forge\_calibrated.R}.

\begin{table}[ht]
\centering
\caption{FORGE 58-32: the fourteen 500-ft windows behind the pooled innovation law. rpm is the dominant rotary-speed bin; $n$ the number of rows in it; $\hat\phi_1$ the fitted AR(1) coefficient; acf$_1$ the lag-one residual autocorrelation before and after prewhitening; $\hat\gamma_3,\hat\gamma_4,\hat\gamma_6$ the standardized cumulants of the window's innovations.}
\label{tab:s-windows}
\footnotesize
\begin{tabular}{@{}lcccccccc@{}}
\toprule
window (ft) & rpm & $n$ & $\hat\phi_1$ & acf$_1$ before & acf$_1$ after & $\hat\gamma_3$ & $\hat\gamma_4$ & $\hat\gamma_6$\\
\midrule
500--1000 & 80 & 191 & 0.47 & 0.47 & $-0.14$ & $-1.59$ & 3.8 & $-22$ \\
1000--1500 & 90 & 190 & 0.49 & 0.48 & $-0.12$ & $-0.48$ & 3.4 & 14 \\
1500--2000 & 80 & 173 & 0.70 & 0.70 & $-0.22$ & 0.60 & 1.4 & $-9$ \\
2000--2500 & 60 & 142 & 0.36 & 0.35 & $-0.06$ & $-0.08$ & $-0.3$ & 0 \\
2500--3000 & 50 & 336 & 0.48 & 0.48 & $-0.10$ & $-0.11$ & 0.0 & $-2$ \\
3000--3500 & 50 & 201 & 0.62 & 0.62 & $-0.21$ & 0.81 & 3.5 & 33 \\
3500--4000 & 50 & 223 & 0.63 & 0.63 & $-0.08$ & 0.37 & 2.9 & 22 \\
4000--4500 & 60 & 234 & 0.72 & 0.72 & 0.07 & 0.97 & 12.5 & 272 \\
4500--5000 & 60 & 125 & 0.55 & 0.54 & $-0.13$ & 0.41 & 1.3 & $-1$ \\
5000--5500 & 40 & 239 & 0.56 & 0.57 & $-0.15$ & 2.77 & 18.5 & 771 \\
5500--6000 & 40 & 210 & 0.68 & 0.61 & $-0.09$ & 2.26 & 13.9 & 472 \\
6000--6500 & 40 & 389 & 0.81 & 0.80 & $-0.30$ & 0.11 & 4.4 & 47 \\
6500--7000 & 40 & 341 & 0.62 & 0.61 & $-0.20$ & 0.94 & 6.4 & 73 \\
7000--7500 & 40 & 272 & 0.60 & 0.60 & $-0.04$ & 3.79 & 35.7 & 2820 \\
\bottomrule
\end{tabular}
\end{table}